\documentclass[a4paper,UKenglish,cleveref,thm-restate]{lipics-v2021}

\hideLIPIcs  %

\usepackage{booktabs}
\usepackage[linesnumbered,ruled,vlined]{algorithm2e}
\RestyleAlgo{boxruled}
\NoCaptionOfAlgo
\usepackage{dsfont}

\title{Near-optimal population protocols on bounded-degree trees} %

\author{Joel Rybicki}{Humboldt-Universität zu Berlin,  Germany}{joel.rybicki@hu-berlin.de}{https://orcid.org/0000-0002-6432-6646}{}
\author{Jakob Solnerzik}{Humboldt-Universität zu Berlin, Germany}{jakob.solnerzik@hu-berlin.de}{https://orcid.org/0009-0008-2584-2171}{}
\author{Robin Vacus}{Sorbonne Université, CNRS, LIP6, France}{robin.vacus@cnrs.fr}{https://orcid.org/0000-0002-7368-4912}{}

\authorrunning{J. Rybicki, J. Solnerzik, and R. Vacus}

\Copyright{Joel Rybicki, Jakob Solnerzik, and Robin Vacus} %

\ccsdesc[500]{Theory of computation~Distributed algorithms}

\keywords{population protocols, distributed graph algorithms} %

\category{} %
\nolinenumbers

\funding{This research was funded by the Deutsche Forschungsgemeinschaft (DFG, German Research Foundation) – Project number 539576251.}

\newcommand{\fout}{\mu}

\newcommand{\col}{\textup{\texttt{colour}}}
\newcommand{\parent}{\textup{\texttt{parent}}}
\newcommand{\children}{\textup{\texttt{children}}}
\newcommand{\stamp}{\textup{\texttt{stamp}}}
\newcommand{\token}{\textup{\texttt{token}}}

\newcommand{\out}{\textup{\texttt{output}}}
\newcommand{\colouringalgo}{\hyperref[sec:2-hop_colouring_algo]{2-hop colouring algorithm}\xspace}
\newcommand{\mainalgo}{\hyperref[alg:tree-orientation]{orientation algorithm}\xspace}
\newcommand{\orientalgo}{\textup{\texttt{Set-Edge-Orientation}}\xspace}
\newcommand{\majalgo}{\hyperref[sec:maj_algorithm]{majority algorithm}\xspace}
\newcommand{\uniform}{\textrm{Uniform}}
\newcommand{\C}{\mathcal{C}}
\newcommand{\R}{\mathcal{R}}

\newcommand{\M}{\mathcal{M}}
\newcommand{\Z}{\mathcal{Z}}
\newcommand{\calL}{\mathcal{L}}

\newcommand{\depth}{\mathrm{depth}}

\newcommand{\diam}{\mathrm{diam}}

\newcommand{\atype}{\mathsf{A}}
\newcommand{\btype}{\mathsf{B}}
\newcommand{\ctype}{\mathsf{C}}

\newcommand{\bbOne}{\mathds{1}}
\newcommand{\pa}[1]{\left( #1 \right)}

\newcommand{\Exp}{\mathbb{E}}
\newcommand{\E}{{\rm I\kern-.3em E}}

\newcommand{\dis}{\mathrm{dist}}
\newcommand{\ap}{^\prime}

\DeclareMathOperator{\bbN}{\mathbb{N}}

\DeclareMathOperator{\dist}{dist}
\DeclareMathOperator*{\Geom}{Geom}
\DeclareMathOperator*{\poly}{poly}
\DeclareMathOperator*{\polylog}{polylog}

\newcommand{\stconflict}{C_t^{\mathrm{stamp}}}
\newcommand{\coconflict}{C_t^{\mathrm{colour}}}

\begin{document}

\maketitle

\begin{abstract}
  We investigate space-time trade-offs for population protocols in sparse interaction graphs. In complete interaction graphs, optimal space-time trade-offs are known for the leader election and exact majority problems. However, it has remained open whether other graph families exhibit similar space-time complexity trade-offs, as existing lower bound techniques do not extend beyond highly dense graphs.

  In this work, we show that -- unlike in complete graphs -- population protocols on bounded-degree trees do not exhibit significant asymptotic space-time trade-offs for leader election and exact majority. For these problems, we give constant-space protocols that have near-optimal worst-case expected stabilisation time. These new protocols achieve a linear speed-up compared to the state-of-the-art.

  Our results are based on two novel protocols, which we believe are of independent interest. First, we give a new fast self-stabilising 2-hop colouring protocol for general interaction graphs, whose stabilisation time we bound using a stochastic drift argument. Second, we give a self-stabilising tree orientation algorithm that builds a rooted tree in optimal time on any tree. As a consequence, we can use simple constant-state protocols designed for directed trees to solve leader election and exact majority fast.
  For example, we show that ``directed'' annihilation dynamics solve exact majority in $O(n^2 \log n)$ steps on directed trees.
\end{abstract}

\section{Introduction}

Population protocols are a model of asynchronous %
computation in systems with anonymous, identical, and computationally restricted agents~\cite{angluin2006computation,aspnes2009introduction}.
In this model, the agents reside on the nodes of a connected $n$-node graph $G=(V,E)$. The computation is controlled by a scheduler that, in each time step, selects an edge $\{u,v\} \in E$ uniformly at random; the two nodes $u$ and $v$ then interact by reading each other's current states and updating their local states.

The \emph{time complexity} is the expected number of global interactions required to reach a configuration with correct and stable outputs that cannot change no matter which interactions occur in the future. The \emph{space complexity} of a protocol is measured by the number of distinct states an agent can take.

\subparagraph{Space-time complexity thresholds on cliques.}
The past decade has seen a focused research program investigating space-time complexity trade-offs for population protocols on \emph{complete} graphs~\cite{doty_stable_2018,alistarh2017time,alistarh2018space,sudo2019logarithmic,berenbrink_optimal_2020,berenbrink2021time,doty2022time}.
This research has identified problems that
exhibit asymptotic \emph{complexity thresholds}: protocols with space complexity below the threshold are a factor $\tilde{\Theta}(n)$ slower than time-optimal protocols, and conversely, optimal stabilisation time can be achieved by protocols that use a number of states above this threshold.
There are two prototypical problems with such complexity~thresholds:

\begin{itemize}
  \item For leader election, the threshold is at $\Theta(\log \log n)$ states: the problem can be solved in
    optimal $\Theta(n \log n)$ time with $\Theta(\log \log n)$ states~\cite{berenbrink_optimal_2020,sudo_leader_2020,gkasieniec2020enhanced}. Any $o(\log \log n)$-state protocol requires $\tilde{\Omega}(n^2)$
    time~\cite{alistarh2017time} and a
    constant-state protocol solves the problem in $\Theta(n^2)$ time~\cite{doty_stable_2018}.

  \item For exact majority, the threshold is widely conjectured to be at $\Theta(\log n)$ states.
    Alistarh et al.~\cite{alistarh2018space} showed that time-optimal protocols satisfying certain technical ``output-dominance'' and ``monotonicity'' properties -- which are satisfied by all known clique protocols -- need $\Omega(\log n)$ states. This is matched by the time-optimal protocol of Doty et al.~\cite{doty2022time} that uses $\Theta(\log n)$ states and $\Theta(n \log n)$ time.
    A constant-state protocol solves the problem in $\Theta(n^2)$ time~\cite{draief2012convergence,mertzios2017determining}.
\end{itemize}

\subparagraph{Graphical population protocols.}
Recent work has started to extend the study of space-time complexity trade-offs from cliques to general interaction graphs~\cite{alistarh2021fast,yokota2021time,yokota2023near,alistarh2025near,rybicki2026space}. On general graphs, time and space complexity typically depend on the expansion properties of the interaction graph.
So far, all state-of-the-art upper bounds for general graphs parallel those in the clique. Leader election~\cite{alistarh2025near,yokota2023near} and exact majority~\cite{rybicki2026space} protocols that use a super-constant number (typically polylogarithmic in $n$) of states  are a factor $\tilde{\Theta}(n)$ faster than known constant-state protocols.

However, it is not known whether a super-constant number of states is \emph{necessary} to attain such near-linear speed-up in sparse and/or high-diameter graphs.
The main obstacle is that the surgery technique, which is a key ingredient for proving space-time trade-offs in complete graphs~\cite{doty_stable_2018,alistarh2017time,alistarh2018space},
does not easily generalise to other graph families -- with the exception of dense random graphs~\cite{alistarh2025near}. %

Indeed, Alistarh et al.~\cite{alistarh2025near} pointed out that in \emph{star graphs} a leader can be elected in constant time and space, because the underlying graph helps break symmetry fast. This simple example rules out that space-time trade-offs must hold in \emph{all} graphs.
Nevertheless, they showed that the surgery technique sometimes remains applicable outside of cliques by proving
that constant-state leader election requires $\tilde{\Theta}(n^2)$ expected steps in dense random graphs (with constant expected diameter).

\subsection{Our focus: space-time trade-offs in sparse graphs}
In this work, we move away from highly \emph{dense} graphs to study the other extreme of highly \emph{sparse} interaction graphs, namely trees.
Trees turn out to be a particularly interesting class of graphs for population protocols:
\begin{enumerate}[(i)]

  \item
    Trees have much more heterogeneous structure than complete graphs:
    one cannot simply keep track of the \emph{state counts} in a configuration nor assume that all pairs of nodes can interact; this significantly complicates the design and analysis of  protocols.
  \item Trees are bad expanders: because trees have small cuts,  expansion-based population protocols and arguments for general graphs~\cite{alistarh2025near,rybicki2026space,alistarh2021fast}, which typically rely on fast mixing of tokens, fail to yield good time or space bounds in trees.

  \item  Trees have received significant recent attention in the standard LOCAL and CONGEST models of distributed graph algorithms~\cite{chang2020complexity,balliu2021improved,balliu2021locally,balliu2021congest,grunau2022landscape,balliu2023average,balliu2026distributed,baumecker2025nearly,brandt2025towards,balliu2024completing}, but no work has yet dealt specifically with trees in the more challenging population protocol~model.
\end{enumerate}
For bounded-degree trees, the state-of-the-art population protocols exhibit a $\tilde{\Omega}(n)$ factor difference between the state-of-the-art stabilisation times of constant-state and super-constant-state protocols for both leader election~\cite{beauquier2013self,alistarh2025near} and exact majority~\cite{draief2012convergence,rybicki2026space}.

For example, on paths, the known constant-state protocols for leader election and exact majority consensus use $\tilde{\Theta}(n^3)$ expected time, while protocols using a super-constant number of states can stabilise much faster in near-optimal $\tilde{O}{(n^2)}$ time~\cite{alistarh2025near,rybicki2026space}.
The information-propagation-based lower-bound techniques only give unconditional time lower bounds for arbitrary (i.e., including unbounded state) protocols~\cite{alistarh2025near,rybicki2026space}, so they cannot be used to rule out the existence of space-time trade-offs for trees or other graph classes.%

So far, it is not clear if we are simply missing the right lower bound techniques for proving space-time trade-offs in sparse graphs or if such  graphs exhibit fundamentally different behaviour.
In the light of the above, we investigate the following open question raised in prior work~\cite{alistarh2025near,rybicki2026space}:
\begin{quote}
  Do protocols in \emph{sparse} interaction graphs exhibit significant space-time trade-offs?
\end{quote}

\subsection{Population protocols on graphs} \label{sec:model_definition}

Let $G=(V,E)$ be a connected, undirected graph with $n$ nodes and $m$ edges.
The distance between two nodes $u$ and $v$ is $\dist(u,v)$.
We write $D$ for the diameter and $\Delta$ for the maximum degree of $G$.

\subparagraph{Population protocols.}
Let $\mathcal{G}$ be a class of graphs.
Formally, a \emph{probabilistic population protocol} on $\mathcal{G}$ consists of
a finite set $\Lambda$  of states, a state transition function $\Xi \colon \Lambda \times [0,1] \times \Lambda \times [0,1] \to \Lambda \times \Lambda$, and a map  $\fout \colon \Lambda \to \Gamma$ that maps the local state of a node to an output value in the set $\Gamma$.
Here, the  set $\Lambda$ can depend on the graph class, but not on the specific input graph $G$. For example, on bounded-degree graphs, the protocol can depend on the bound on the maximum degree $\Delta$.

Let $G  \in \mathcal{G}$. A stochastic \emph{schedule} on  $G=(V,E)$ is a random sequence $(e_t)_{t \ge 1}$ of edges, where each edge $e_t$ is sampled independently and uniformly at random from the set $E$ of edges.
A \emph{configuration} is a map $x \colon V \to \Lambda$. An \emph{execution} of the protocol under a schedule $\sigma = (e_t)_{t \ge 1}$ is a sequence $(x_t)_{t \ge 0}$ of configurations, where $x_0$ is the initial configuration and $x_{t}$ is defined inductively as
\[
  (x_{t}(u), x_t(v)) =
  \begin{cases}
    \Xi\big(x_{t-1}(u), R_t(u), x_{t-1}(v), R_t(v)\big) & \text{if } e_t = \{u,v\}, \\
    (x_{t-1}(u), x_{t-1}(v)) & \textrm{otherwise},
  \end{cases}
\]
where $R_t \in [0,1]^n$ is sampled independently and uniformly at random.
When node $v$ interacts at time step $t$, the value $R_t(v)$ is the local randomness used by the node $v$ during this interaction.
While the above definition is general, our protocols only employ $O(\log \Delta)$ random bits per interaction.
Here we do not explicitly assume the initiator-responder model, as the randomness given to the nodes can be used to break symmetry.
We say that a configuration $y$ is \emph{reachable} from a configuration $x$ if
\(
  \Pr[ x_{t} = y \mid x_0 = x ] > 0
\)
for some $t \ge 0$.
A configuration $x$ is \emph{stable} if any $y$ reachable from $x$ satisfies $\fout \circ y = \fout \circ x$.

The \emph{space complexity} of a protocol is the number $|\Lambda|$ of states.
The \emph{stabilisation time} of a protocol from an initial state $x_0$ is the random variable $T = \min \{ t : x_t \text{ is stable}\}$.
A protocol is \emph{self-stabilising} if the stabilisation time $T$ is finite almost~surely for any initial~configuration.

\subparagraph{Tasks.}
Let $\Sigma$ be a set of input labels and $\Gamma$ a set of output labels.
An \emph{input instance} is a pair $(G,f)$, where $f \colon V \to \Sigma$ is an input labelling and $G \in \mathcal{G}$.
A task $\Pi$ maps each input instance $(G, f)$ to a set of feasible solutions $\Pi(G, f)$, where each solution is a map $g \colon V \to \Gamma$.

We say that a protocol solves $\Pi$ on the input instance $(G,f)$ if the stabilisation time $T$ is finite almost surely and the output satisfies $\fout \circ x_{T} \in \Pi(G,f)$.

For any non-self-stabilising protocol for task $\Pi$, we assume without loss of generality that $\Sigma \subseteq \Lambda$ and the input $f$ gives the initial configuration, i.e., $x_0 = f$.
For self-stabilising protocols, the initial configuration $x_0$ is arbitrary and we only consider tasks without inputs (i.e., nodes do not have fixed local inputs that cannot be affected by transient faults).

\subsection{Our contributions}\label{sec:contributions}

\begin{figure}
  \includegraphics[width=\textwidth]{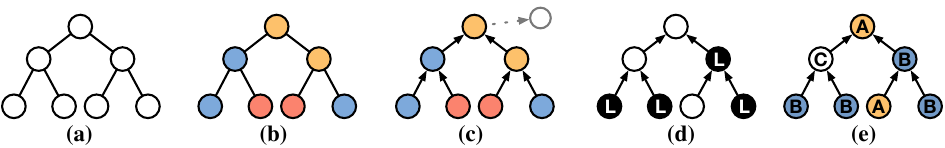}
  \caption{(a) An undirected input tree. (b) A 2-hop coloured tree. (c) The orientation protocol orients a 2-hop coloured tree. The grey node may be a ``hallucinated'' parent of the root. (d) An example configuration of the leader election protocol. The leader tokens ($\mathsf{L}$) are pushed towards the root. When two tokens meet, the child's token is removed.
  (e) An example configuration of the majority protocol. The $\mathsf{A}$- and $\mathsf{B}$-tokens are pushed towards the root; when $\mathsf{A}$- and $\mathsf{B}$-tokens meet, both turn into $\mathsf{C}$-tokens. In contrast to $\mathsf{A}$- and $\mathsf{B}$-tokens, the $\mathsf{C}$-tokens move away from the~root.}\label{fig:examples}
\end{figure}

We give near-time-optimal \emph{constant-state} protocols for exact majority and leader election in \emph{bounded-degree} trees. This gives the first example of a non-trivial class of graphs where (non-self-stabilising) population protocols do not exhibit significant asymptotic space-time trade-offs in terms of $n$.
On the way to obtain these results, we give new protocols we believe are of independent interest:
\begin{enumerate}%
  \item A fast self-stabilising 2-hop colouring protocol for general graphs.
  \item A time-optimal, self-stabilising protocol that orients 2-hop coloured trees.
\end{enumerate}
These two technical ingredients allow us to orient an undirected tree fast.
We then show that \emph{oriented} trees admit fast, constant-state protocols for leader election and exact majority.

We note that while we mostly focus on bounded-degree trees, i.e., the case $\Delta \in O(1)$, the exact majority and leader election protocols in fact work for \emph{any} undirected tree -- however, the space complexity is of order $2^{\poly(\Delta)}$, and similarly, the time complexities hide an additional $\Delta$-factor.
Unlike fast protocols for complete graphs~\cite{berenbrink_optimal_2020,doty2022time} and general graphs~\cite{alistarh2021fast,alistarh2025near,rybicki2026space}, our protocols do not need any information about the size $n$ of the population.

\subsubsection{Ingredient 1: Fast self-stabilising 2-hop colouring}
Our first ingredient is a new protocol for self-stabilising 2-hop colouring in \emph{general} graphs with maximum degree $\Delta$.
A 2-hop colouring of a graph $G=(V,E)$ is a function $f \colon V \to \{1, \ldots, \chi\}$, where $f(u) \neq f(v)$ for all  $u,v \in V$ such that~$u$ and~$v$ share a neighbour. That is, for any node $v \in V$, the neighbours of node~$v$ must have distinct colours.
A 2-hop colouring allows the nodes to distinguish their neighbours, and in a sense, it gives a ``port-numbering'' in the population protocol model; see~\Cref{fig:examples}b. Note that in a 2-hop colouring a node may have the same colour as (at most) one of~its~neighbours.

\begin{restatable}{theorem}{twohop}\label{thm:2-hop-colouring}
  There is a self-stabilising $2^{O(\Delta^2)}$-state protocol that stabilises on any graph to a 2-hop colouring with $O(\Delta^2)$ colours in $O( \Delta m \log n)$ time steps in expectation and with high probability.%
\end{restatable}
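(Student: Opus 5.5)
The protocol I would analyse works as follows. Each node $v$ stores a colour $\col(v) \in \{1,\dots,\chi\}$ with $\chi = c\Delta^2$ for a suitable constant $c$. It also stores a table of ``stamps'': one bit $\stamp(v)[k]$ for each colour $k$. The state space is therefore $\chi \cdot 2^{\chi} = 2^{O(\Delta^2)}$. The stamps let $v$ detect that two distinct neighbours carry the same colour. Whenever $v$ interacts with a neighbour $u$ of colour $k$, node $u$ draws a fresh random bit $b$ and the two nodes record it: $u$ keeps $b$ as its ``current stamp'' (one extra bit) and $v$ sets $\stamp(v)[k] \gets b$. At the next interaction between $v$ and a neighbour $w$ of colour $k$, node $v$ compares $\stamp(v)[k]$ with $w$'s current stamp. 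If $w = u$ and nobody else of colour $k$ has interacted with $v$ in between, the two bits agree. If instead two distinct neighbours share colour $k$, the bits disagree with constant probability. A disagreement is a conflict, and in that case $v$ makes the neighbour it is interacting with recolour to a uniformly random colour. Once the colouring is a valid 2-hop colouring, no two neighbours of $v$ share a colour. Each stamp entry is then only ever written and read by a single neighbour, so it stays consistent after at most one interaction per edge. Hence no further recolouring happens and the configuration is stable. Self-stabilisation also needs arbitrary initial stamps to be harmless; that costs one spurious recolouring per edge, which the analysis can absorb.

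For the time bound I would run a drift argument on the potential $\Phi_t$, the number of ordered conflicting triples $(v,u,w)$ with $u \ne w$ both neighbours of $v$ and $\col(u)=\col(w)$. First, I would show that $\Phi_t = 0$ implies stability within $O(m\log n)$ further steps, which is a coupon-collector bound over edges. Second, for a conflicting pair $(u,w)$ at $v$, I would show that within an expected $O(m)$ steps some edge $\{v,u\}$ or $\{v,w\}$ fires and detects the conflict with constant probability. The relevant events are that $\{v,u\}$ fires, then $\{v,w\}$ fires with a disagreeing bit; each of these has probability $\Omega(1/m)$ per step, and the two are roughly interleaved. After detection, the recoloured node picks a colour that creates no new conflict with probability at least $1 - 2\Delta^2/\chi \ge 1/2$, because it has at most $\Delta(\Delta-1)+\Delta$ two-hop neighbours. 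A new conflict, when one does arise, adds at most $O(\Delta)$ triples in expectation times $O(\Delta^2/\chi)$. Taking $c$ large, each detection therefore decreases $\Phi$ by at least a constant in expectation.

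The main obstacle is turning this per-conflict statement into a multiplicative drift $\Exp[\Phi_{t+1}-\Phi_t \mid \mathcal{F}_t] \le -\Omega(\Phi_t/(\Delta m))$. The difficulty is that detection needs two consecutive interactions with memory, so a one-step drift does not hold directly. My plan is to use an augmented potential that also counts ``armed'' conflicts, meaning those whose stamp was freshly written by one member of the pair; detection of an armed conflict happens at rate $\Omega(1/m)$. I would weight each triple so that the total rate is $\Omega(\Phi/m)$ divided by the $O(\Delta)$ overcounting coming from a single recolouring affecting up to $\Delta$ triples. The multiplicative drift theorem then gives $O(\Delta m \log \Phi_0) = O(\Delta m \log n)$ in expectation and with high probability, since $\Phi_0 \le n\Delta^2$ and $\log\Delta \le \log n$. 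Adding the $O(m\log n)$ cleanup phase completes the bound.
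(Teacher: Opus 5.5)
\textbf{Gap 1: the protocol as written never stabilises.} The ``current stamp'' of $u$ is a single bit that $u$ redraws every time it interacts with \emph{any} neighbour. Take a valid 2-hop coloured path with edges $\{x,u\}$ and $\{u,v\}$. If $u$ interacts with $x$ between two interactions with $v$, then $\stamp(v)[\col(u)]$ and $u$'s current stamp are independent bits. They disagree with probability $1/2$, so $u$ keeps being recoloured forever. Your claim that each entry is ``only ever written and read by a single neighbour'' overlooks that the other side of the comparison is shared by all of $u$'s edges. The fix is an edge-local shared bit stored at both endpoints and indexed by the partner's colour. Compare $\stamp(u,\col(v))$ with $\stamp(v,\col(u))$, then set both to the same fresh bit, as in Angluin et al.\ and in the paper.

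\textbf{Gap 2: stale stamps are regenerated by every recolouring, not only at initialisation.} This remains even after the fix above. When a node $x$ is recoloured to $c'$, every other neighbour $w$ of $x$ holds an entry $\stamp(w,c')$ written at some unrelated earlier time. That entry mismatches $\stamp(x,\col(w))$ with probability about $1/2$. So each recolouring creates about $(\deg(x)-1)/2$ new spurious stamp conflicts, and each of these later triggers another recolouring. For larger $\Delta$ this exceeds the one conflict resolved, so the cascade need not die out, and ``one spurious recolouring per edge'' is not the right accounting.

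Your potential $\Phi$ counts only colour-conflicting triples, and spurious recolourings can only increase it. Hence neither ``$\Phi_t=0$ implies stability within $O(m\log n)$ steps'' nor ``each detection decreases $\Phi$ by a constant'' holds.

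\textbf{The missing idea is the paper's stamp-clearing mechanism.} On a stamp conflict, both endpoints recolour and reset all their stamps to $\bot$, and a comparison involving $\bot$ never counts as a conflict. Then an edge adjacent to the recoloured pair can acquire a stamp conflict only together with a colour conflict (\Cref{claim:conflict_characterization}). This makes the right potential the number $|C_t|$ of edges with a colour \emph{or} stamp conflict. That potential is unchanged by non-recolouring steps (\Cref{lemma:no_recolouring}) and drops by a constant in expectation at every recolouring step.

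With this in place, the paper replaces your ``armed conflicts'' bookkeeping with a block-wise drift over intervals of $m$ steps. It lower-bounds the number of recolouring steps per block by $\Omega(|C_t|/\Delta)$ in two ways:
\begin{itemize}
\item via stamp-conflict edges directly;
\item for colour conflicts, via the activation order $(e,e',e)$ on a conflict path, with a balanced path extension function to control overcounting.
\end{itemize}
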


Our protocol is inspired by the 2-hop colouring algorithm of Angluin, Aspnes, Fischer, and Jiang~\cite{angluin_selfstabilising_2008}. They showed that their protocol always stabilises, but they did not analyse its time complexity under the stochastic scheduler.

In \Cref{sec:2-hop}, we design a new protocol that randomly samples colours when it detects a conflict. Like all existing population protocols for 2-hop colouring~\cite{angluin_selfstabilising_2008,Sudo2018_Loosely,Kanaya2024_Almost}, ours also uses ``stamps'' to detect if the condition for the 2-hop colouring is violated. %
However, our protocol comes with some additional tweaks -- namely, a simple stamp clearing mechanism -- to speed up its stabilisation.
With this, our protocol significantly improves on the time and space complexity of 2-hop colouring on sparse graphs; see \Cref{sec:related} for further discussion. %

For the time complexity analysis, we use a ``block-wise'' stochastic drift argument over sufficiently long time intervals. This gives us enough control on the drift to obtain a good bound on the stabilisation time. Our protocol is time-optimal up to the $\Delta$-factor: there is no protocol that stabilises in $o(m \log n)$ expected steps, as we show in \Cref{sec:2-hop_lower-bound}.
Here, the main challenge to carry out the drift analysis is to keep track of different types of conflicts so that they can be used to bound the drift; na\"ive analysis easily leads to unnecessary polynomial factors in $\Delta$.
In order to run this protocol, we assume nodes have access to $O(\log \Delta)$ random bits per interaction to sample a new colour from a palette of size $O(\Delta^2)$. Apart from this, all other protocols are deterministic.

\subsubsection{Ingredient 2: Time-optimal self-stabilising tree orientation}
We give a self-stabilising protocol that orients any 2-hop coloured diameter-$D$ tree in optimal $\Theta(\max\{D, \log n\} \cdot n)$ time with high probability. An \emph{orientation} of a tree assigns a direction for each edge such that all edges point towards a single root node $v \in V$.

In our protocol, each node maintains pointers to its parent and children using the 2-hop colouring. Prior to stabilisation, neighbouring nodes may have inconsistent state variables, and in such cases, the orientation of the corresponding edge is ill-defined. These inconsistencies are resolved upon each interaction by updating the pointers to correctly orient the edge. To do this, it is important to use a suitable local rule ensuring that ``badly'' oriented edges are rapidly pushed out of the tree. For this, the protocol requires some (arbitrary) symmetry-breaking information, such as an initiator-responder assignment for each interaction.

For the analysis, we use a potential argument that tracks the random walks performed by the badly oriented edges to show that such badly oriented edges are removed sufficiently fast.
This mechanism is formalised in \Cref{sec:orient}.

\begin{restatable}{theorem}{orient}\label{thm:coloured-tree-orientation}
  There is a self-stabilising protocol that orients
  any tree of diameter~$D$ that has been 2-hop coloured with at most $\chi$ colours
  in $O(n \cdot \max\{ D, \log n\})$ steps in expectation and with high probability
  using $O\left( 2^\chi \cdot \chi^2 \right)$ states.
\end{restatable}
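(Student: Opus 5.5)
We have to design both the protocol and the potential argument. We use the 2-hop colouring as a port numbering. Each node $v$ stores:
\begin{itemize}
  \item a variable $\parent(v) \in \{1,\dots,\chi\} \cup \{\bot\}$, namely the colour of its claimed parent or $\bot$ if $v$ claims to be a root;
  \item a set $\children(v) \subseteq \{1,\dots,\chi\}$ of colours of claimed children.
\end{itemize}
This gives $O(2^\chi\cdot\chi^2)$ states once the colour itself is included. Since neighbours of $v$ have distinct colours, each colour names at most one neighbour, so the variables are unambiguous.

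An edge $\{u,v\}$ is \emph{consistent} if exactly one endpoint points to the other as parent and the other lists it as a child. It is \emph{bad} otherwise, for example when both point to each other, neither does, or the child set disagrees. On an interaction, the endpoints repair the edge locally. The rule is as follows:
\begin{itemize}
  \item If neither endpoint claims the other, the edge is oriented so that a root-claiming endpoint adopts the other as parent. If both are roots, the initiator becomes the child.
  \item If both claim each other as parent, the responder keeps the pointer and the initiator drops it and becomes a root. This defect is thereby pushed onto the initiator's other edges.
  \item Child sets are overwritten to agree with the parent pointers.
\end{itemize}
One must also handle the "hallucinated" parent: a node whose $\parent$ colour matches no neighbour. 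When it meets neighbours and never sees that colour claim it, it behaves as a root. Since it cannot detect absence directly, we instead let such a node be harmless by treating it as a root-candidate for the orientation rule.

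The key structural observation is that in a tree a configuration in which every edge is consistent and each node has at most one parent pointer to a real neighbour is exactly an orientation toward a unique root, possibly with a hallucinated parent at the root. Such a configuration is closed under all interactions, hence stable.

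For the time bound, the core claim is that a "root defect" behaves like a token doing a walk. A defect is a node with two outgoing claims, or with zero claims when it is not the unique root. The repair rule moves a defect across one edge per fixing interaction. Two root defects meeting merge, so their number is non-increasing and they annihilate.

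We will define a potential along these lines:
\[
\Phi = \sum_{\text{defects } d} (\text{distance-based term}) + (\text{number of bad edges}).
\]
We then show the following:
\begin{itemize}
  \item The number of bad edges never increases.
  \item Each specific edge fires with probability $1/(n-1)$ per step.
  \item A defect adjacent to a bad edge removes it or moves in $O(n)$ expected steps.
\end{itemize}
Fixing a longest path, we show every defect is driven by the rule in a consistent direction, always toward the side that already has a parent. It therefore travels monotonically and reaches a leaf or merges after at most $D$ moves, each costing a geometric waiting time of mean $n-1$.

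Summing gives a sum of $O(D)$ geometric variables of mean $n$ along each defect's trajectory. A Chernoff bound for sums of geometrics shows such a trajectory completes in $O(n\max\{D,\log n\})$ steps with probability $1-n^{-c}$. A union bound over the at most $n$ defects and $n$ edges then yields the w.h.p. bound, and the expectation follows by restarting.

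The main obstacle is proving the monotone-progress claim. Defects might oscillate, since a repair could create a new bad edge behind the defect. We would first try to show that the symmetry-breaking rule makes the set of nodes lying "upstream" of consistent edges only grow, giving a monotone cut argument per edge. If oscillation occurs, we would instead bound hitting times of a biased or unbiased walk on a tree of diameter $D$. That route is where an extra factor might creep in, and the $O(nD)$ term requires the directed rather than unbiased behaviour.
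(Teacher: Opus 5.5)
\textbf{There is a genuine gap.} It is exactly the step you flag as the main obstacle. Your rule contains no mechanism that forces defects to move in a consistent direction, so the monotone-progress claim behind the $O(nD)$ bound is unsupported. Under any completion of your rule that consults only parent pointers, it fails.

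The critical case is an edge $\{u,v\}$ where neither endpoint points to the other and both have parent pointers elsewhere. That pointer may go to a real neighbour or to a hallucinated colour. A node cannot detect a hallucinated parent, so ``treat it as a root-candidate'' is not implementable. Your rule only specifies what happens when an endpoint has $\parent=\bot$. In this case one endpoint must re-point to the other, which leaves its old parent edge unclaimed. You overwrite the child sets to agree with the parent pointers rather than consulting them. So the new unclaimed edge looks locally identical to the old one, and nothing distinguishes the two directions; at best a coin decides. The defect then performs an unbiased walk that can step straight back. On a path this needs $\Theta(D^2)$ moves, i.e.\ $\Theta(D^2 n)$ steps, so your hitting-time fallback cannot reach the claimed bound. ``Toward the side that already has a parent'' does not break the tie, because both sides have one.

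\textbf{The missing idea is to use the $\children$ sets as a memory of where the defect came from.} The paper calls $\{u,v\}$ weakly oriented from $u$ to $v$ when $\col(u)\in\children(v)$, $\col(v)\notin\children(u)$, and neither endpoint points to the other. The stale child entry at $v$ records that the defect arrived from $u$'s side. When such an edge is sampled, it is reversed: $v$ adopts $u$ as its parent. If $v$ had a properly oriented parent edge $\{v,w\}$, that edge becomes weakly oriented from $v$ to $w$. The defect thus moves strictly away from $u$ along $v$'s old parent chain; otherwise it disappears.

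The paper tracks this with edge-markers swapped between $\{u,v\}$ and $\{v,w\}$. Each marker carries the potential $1+D_t(e)$, the largest distance from $v$ within $v$'s component of $G-e$. This potential never increases and drops by at least one each time the marker's edge is sampled. Hence $\max\{D,\log n\}$ samplings per marker suffice.

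This combines with a first phase in which an edge, once sampled, is never disoriented again. That phase uses the 2-hop colouring so that repairs never disturb adjacent edges, and takes $O(n\log n)$ steps by coupon collector. After both phases, your concluding Janson-bound, union-bound and restart argument goes through as you describe. The per-edge definition also makes hallucinated parents harmless automatically: such a node has no properly oriented outgoing edge, so it simply absorbs the weakness.
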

Together with \Cref{thm:2-hop-colouring} that gives a 2-hop colouring with $\chi \in O(\Delta^2)$ colours, we get that a tree can be oriented using
$2^{O(\Delta^2)} \cdot 2^{O(\Delta^2)} \cdot \Delta^4 = 2^{O(\Delta^2)}$ states.
This gives us the next result:

\begin{restatable}{corollary}{orientsimp}\label{coro:orientation}
  Any undirected tree can be oriented in $O(Dn + \Delta n \log n)$ steps in expectation and with high probability using $2^{O(\Delta^2)}$ states.
\end{restatable}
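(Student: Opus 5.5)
The plan is to run the two protocols of \Cref{thm:2-hop-colouring} and \Cref{thm:coloured-tree-orientation} in parallel as a product protocol, and to obtain the time bound by composition. Each node stores a pair (colouring state, orientation state), so the state space is the product: $2^{O(\Delta^2)}$ states for the colouring component, and $O(2^\chi \chi^2)$ states with $\chi \in O(\Delta^2)$ for the orientation component. This gives $2^{O(\Delta^2)} \cdot 2^{O(\Delta^2)} \cdot O(\Delta^4) = 2^{O(\Delta^2)}$ states in total. On each interaction, both nodes first update their colouring components by the colouring protocol. They then update their orientation components by the orientation protocol, which reads the current colours as its 2-hop colouring input. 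The colouring component never reads the orientation component, so its marginal dynamics are exactly those of the standalone colouring protocol.

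For the time bound, let $T_1$ be the stabilisation time of the colouring component. By \Cref{thm:2-hop-colouring}, with $m = n-1$ in a tree, $\Exp[T_1] = O(\Delta n \log n)$, and $T_1$ is of this order with high probability. After $T_1$ the colours are fixed and form a valid 2-hop colouring with $O(\Delta^2)$ colours. The configuration $x_{T_1}$ then restricts, on the orientation component, to some arbitrary orientation configuration on a 2-hop coloured tree. Because the orientation protocol is self-stabilising, the arbitrary state it is in at time $T_1$ is just a legitimate initial configuration.

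The key step is to apply the strong Markov property at the stopping time $T_1$. The future schedule after $T_1$ is independent of $\mathcal{F}_{T_1}$, and from then on the orientation component evolves exactly as the orientation protocol started from $x_{T_1}$ on a fixed coloured tree. By \Cref{thm:coloured-tree-orientation}, which holds uniformly over initial configurations, it stabilises within $O(n\max\{D,\log n\})$ further expected steps, and also with high probability. Summing the two bounds gives
\[
O(\Delta n\log n) + O(nD + n\log n) = O(Dn + \Delta n \log n)
\]
in expectation. For the high-probability claim, take a union bound over the two high-probability events.

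The main obstacle is the subtlety that, after $T_1$, the colouring component must not change outputs and the orientation protocol must see a fixed valid colouring. This holds because ``stable'' for the colouring means the colour output cannot change in any reachable configuration. There is also a formal point about the orientation protocol: the paper notes that it needs symmetry breaking such as an initiator-responder bit. This can be supplied by the $O(\log\Delta)$ random bits already available per interaction, at no cost in states.
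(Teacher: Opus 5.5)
Your proposal is correct and follows the same route as the paper: run the 2-hop colouring and orientation protocols in parallel, multiply the state counts to get $2^{O(\Delta^2)}\cdot 2^{O(\Delta^2)}\cdot O(\Delta^4) = 2^{O(\Delta^2)}$, and add the two stabilisation times, using self-stabilisation of the orientation protocol to absorb whatever state it is in once the colouring has stabilised. Your explicit use of the strong Markov property at the colouring's stabilisation time and of a union bound for the high-probability claim only spells out details that the paper's one-line composition leaves implicit.
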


\Cref{coro:orientation} implies that any bounded-degree tree can be oriented in time $O(Dn)$ with a constant-state protocol, because bounded-degree trees have $\Omega(\log n)$ diameter.
However, in the resulting orientation, the root does \emph{not} know that it is the root, i.e., it may think it has a parent; see \Cref{fig:examples}c for an example. Indeed, this is necessary: Angluin et al.~\cite{angluin_selfstabilising_2008} showed that
self-stabilising leader election is impossible in non-simple graphs, which include trees.
Thus, an orientation algorithm where the root knows it is the root would solve self-stabilising leader election on trees.
We show in \Cref{sec:lowerbounds} that our orientation protocol is essentially time-optimal by extending the lower-bound technique of Alistarh et al.~\cite{alistarh2025near}~to~trees.

\begin{restatable}{theorem}{orientlb}\label{thm:orientation-lb}
  There is no tree orientation protocol that stabilises in $o(Dn)$ expected steps. %
\end{restatable}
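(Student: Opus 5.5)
We prove \Cref{thm:orientation-lb} with an information-propagation argument in the spirit of Alistarh et al.~\cite{alistarh2025near}, on a family of trees where distant parts of the graph must coordinate. Fix $D$ and $n$, and take a tree $T$ made of a path $P = (p_0, \dots, p_D)$ of length $D$ with the remaining $n-D-1$ nodes attached as leaves, so that the tree has diameter $\Theta(D)$ and $m = n-1$ edges. Since the task has no inputs, we can choose the initial configuration: every node starts in the same state $s$ (or any configuration symmetric under the reflection $p_i \mapsto p_{D-i}$). Any correct orientation has a unique root $r$, and every edge points towards it.

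The key point is that, near the end of the path, the orientation of the edge $\{p_0,p_1\}$ fixed at stabilisation determines on which side of the path the root lies. Both outcomes are possible from a symmetric start, so the final orientation of the edges near $p_0$ must be consistent with that of the edges near $p_D$. First, I show that if the stable configuration is reached at time $T$, then with constant probability some causal chain of information must have crossed a constant fraction of the path. Concretely, let the \emph{influence set} of $p_0$ at time $t$ be the set of nodes whose state at time $t$ depends on the random choices or interactions at $p_0$'s end. Suppose that at time $T$ no causal chain connects the left third of the path to the right third. Then the states of the left and right thirds are generated independently, each by a local process. By the reflection symmetry of the initial configuration, the event ``the left third orients its edges rightward'' has the same probability as ``the right third orients its edges leftward''. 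Both events must occur together for the root to lie in the middle third. If instead the root lies in, say, the right third, then the left third points right and the right third must contain the root, which carries a symmetric counterpart. An independence/coupling argument then shows that the output is inconsistent with constant probability, contradicting stability. So with constant probability, stabilisation requires a temporal path of interactions $e_{t_1}, e_{t_2}, \dots$ with $t_1 < t_2 < \dots$ that traverses $\Omega(D)$ consecutive path edges.

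Second, I bound the time for such a chain to form. Each specific path edge is selected with probability $1/m = 1/(n-1)$ per step, so information advances along $P$ like a sum of geometric waiting times with mean $n-1$. A chain crossing $k = \Omega(D)$ edges needs $k$ increasing times. The number of distinct such chains that can be realised within $t$ steps is bounded by $\binom{t}{k}$ times the orientation choices along the path, giving probability at most $\binom{t}{k} m^{-k} \le (et/(km))^k$. For $t = c\,Dn$ with $c$ small, this is $e^{-\Omega(D)}$, which is at most a small constant. Hence $\Pr[T < cDn]$ is bounded away from $1$, and $\mathbb{E}[T] = \Omega(Dn)$. Any protocol claiming $o(Dn)$ expected time would then contradict this on this family of trees.

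The main obstacle is the first step: making rigorous that non-communicating halves cannot agree on a consistent orientation. Orientation has no inputs, and a protocol could, for example, deterministically orient edges using locally broken symmetry such as the initiator-responder roles. I would handle this with a coupling that swaps the randomness of the left and right halves under reflection. Conditioned on no crossing chain, the joint law of (left output, right output) is a product measure that is invariant under the swap. Then the two events ``left part points right'' and ``right part points left'' are independent with equal probabilities $q$, and consistency requires essentially both to hold, since the root can lie in at most one third. This gives an inconsistency probability of at least $\min\{q, 1-q\}^2$-type quantities. I must also rule out $q$ being close to $1$ on both sides simultaneously. This holds because if both sides point inward, the middle third must contain the root, and the same argument applied one level down to the middle third (recursively, or via a three-part split with a nonexistent crossing) gives a contradiction.
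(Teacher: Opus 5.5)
Your second step is fine: the union bound $\binom{t}{k}m^{-k}\le (et/(km))^k$ over subsequence positions does control long causal chains. The first step has a genuine gap, though. Reflection symmetry between the two outer thirds cannot by itself produce a contradiction. It gives $\Pr[\text{left points right}]=\Pr[\text{right points left}]$ and $\Pr[\text{left points left}]=\Pr[\text{right points right}]$. So the inconsistent outcomes you control (``both point outward'', ``both contain a root'') have probabilities $r^2$ and $p^2$, where $r$ and $p$ are quantities you have no lower bound on.

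The remaining case is that both outer thirds point inward with probability close to $1$. That is consistent with a valid orientation: the root lies in the middle third. Your recursion into the middle third reuses the same reflection-only reasoning, so it never yields a contradiction, because ``orient towards the centre'' is reflection-invariant at every scale.

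The missing idea is \emph{translation} invariance of the local view. The paper compares two disjoint interior segments $V_1$ and $V_4$ whose radius-$k$ neighbourhoods are bare paths, related both by a translation $\phi$ and by a reflection $\phi'$. Translation gives $\Pr[L_1\mid\mathcal{E}]=\Pr[L_4\mid\mathcal{E}]$ and $\Pr[R_1\mid\mathcal{E}]=\Pr[R_4\mid\mathcal{E}]$. Reflection gives $\Pr[L_1\mid\mathcal{E}]=\Pr[R_4\mid\mathcal{E}]$, so all four equal a common $q$. Then the outward-pointing event $L_1\cap R_4$ has probability $q^2$ and two roots $A_1\cap A_4$ has probability $p^2$. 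If $p+2q\le 3/4$, then $V_1$ is neither rooted nor consistently oriented with probability at least $1/4$. Your thirds contain the path endpoints, so they are not translates of one another, and this argument is unavailable for them.

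Three further points need repair. First, ``no causal chain connects the left third to the right third'' does not give independence. Both thirds can be influenced by the same middle nodes, for example a coin flipped at $p_{D/2}$ and relayed in both directions. You need the influencer sets confined to disjoint regions, as in the events $I_T(V_i)\subseteq N_{k-1}(V_i)$ with $N_k(V_1)\cap N_k(V_4)=\emptyset$. Your chain bound still covers this, since such chains also have length $\Omega(D)$.

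Second, even then, at a fixed time $t$ the numbers of interactions on the two disjoint edge sets are multinomially dependent, so the conditional joint law is not exactly a product. The paper runs to a Poisson time $T$, which makes the sub-schedules on disjoint edge sets independent.

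Third, you must specify where the $n-D-1$ extra nodes go so that the compared segments keep isomorphic neighbourhoods. The paper hangs balanced binary trees off both ends, which also keeps the degree bounded.
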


\subsubsection{Application 1: Optimal leader election for bounded-degree trees}

The tree orientation protocol allows us to easily construct a fast \emph{non-self-stabilising} stable leader election protocol.
In the stable leader election problem,
the task is to elect exactly one of the nodes as the (permanent) leader.
Our protocol is a ``directed'' variant of the classic 2-state leader election protocol on the clique~\cite{doty_stable_2018}.
Initially, all nodes start with a token. After the tree orientation protocol stabilises, a non-empty set of nodes still have a token. In the oriented tree, the tokens are always pushed towards the root, and when two tokens meet, only the token of the parent survives.
In \Cref{sec:tree-le}, we establish the following result.

\begin{restatable}{corollary}{leub}\label{coro:le}
  There is a leader election protocol for trees that uses
  $2^{O(\Delta^2)}$
  states and elects a leader in $O(Dn + \Delta n \log n)$ steps in expectation and with high probability.
\end{restatable}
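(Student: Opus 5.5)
We run the orientation protocol of \Cref{coro:orientation} (2-hop colouring from \Cref{thm:2-hop-colouring} composed with the orientation protocol of \Cref{thm:coloured-tree-orientation}) in parallel with a token layer. Each node additionally stores one bit $\token \in \{0,1\}$, initialised to $1$, and outputs ``leader'' iff it holds a token. This keeps the state count at $2 \cdot 2^{O(\Delta^2)} = 2^{O(\Delta^2)}$. The token rule is local to the current pointers. When $u$ and $v$ interact and, according to their current $\parent$/$\children$ variables, $u$ regards $v$ as its parent and $v$ regards $u$ as a child (i.e.\ the edge is consistently oriented $u \to v$), the following happens. If $u$ has a token and $v$ does not, the token moves to $v$. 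If both have tokens, $u$ loses its token. Otherwise nothing happens. Neither rule ever destroys the last token: a token disappears only when another token occupies the partner node. Hence there is at least one token at all times, even during the pre-stabilisation phase when the orientation may be garbage.

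The analysis has two phases. \emph{Phase 1:} By \Cref{coro:orientation}, after $T_1 = O(Dn + \Delta n \log n)$ steps (in expectation and w.h.p.) the colouring and the orientation are stable. This means all edges point towards a single root $r$, which may believe it has a hallucinated parent; that is harmless, since no real neighbour accepts $r$ as a child. From then on the set of tokens is nonempty and only moves up or merges.

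\emph{Phase 2:} We bound the time until only one token remains; it must then sit at $r$, after which the configuration is stable. I would argue via the deepest tokens and a coupling. Fix any token. Each step, the edge to its parent is chosen with probability $1/m = 1/(n-1)$. The token then either moves one level up or is absorbed, and in both cases the ``token mass'' in its subtree only shrinks. Consider the potential $\Phi_t = \max$ depth of a token. More robustly, consider for each node the event that its subtree is token-free except possibly at its top. The cleanest route: any token at depth $d$ reaches the root or is absorbed after at most $d \le D$ successful parent-edge activations. The waiting time for these is a sum of $D$ geometric variables with mean $n-1$, which is $O(Dn + n\log n)$ w.h.p.\ by a Chernoff bound for geometric sums. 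A union bound over $n$ tokens loses only the $\log n$ term. One subtlety is that a token can be blocked when its parent already has a token, but then it is absorbed, which is even better. A token is never ``stuck": its upward-move-or-die event happens whenever its parent edge fires. So each original token disappears or reaches $r$ within the claimed time, and all surviving tokens at $r$ coincide. This gives $O(Dn + n\log n)$ for Phase 2, and the total is $O(Dn + \Delta n\log n)$ in expectation and w.h.p.

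The main obstacle I expect is making the composition rigorous. The token layer runs concurrently with a still-unstable orientation, so tokens may wander arbitrarily beforehand. The fix is the invariant ``at least one token always exists'' together with a restart of the Phase 2 analysis at the stabilisation time $T_1$, using the strong Markov property. The expectation bound then follows by adding the w.h.p.\ tail bounds for both phases, with a fallback restart argument for the low-probability failure event.
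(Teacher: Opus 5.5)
Your proposal is correct and matches the paper's proof essentially step for step: the same directed token rule runs alongside the orientation protocol of \Cref{coro:orientation}, with the invariant that at least one token always survives; each token is bounded via the in-order activation of the at most $D$ edges on its path to the root, a sum of $D$ geometric variables with parameter $1/m$ controlled by a tail bound and a union bound over tokens; and the expectation follows from closedness of the set of configurations containing a token, which is the paper's \Cref{lemma:whp-to-exp} and exactly your restart argument. The tentative ``max-depth potential'' detour in your Phase 2 is unnecessary and can be dropped.
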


\Cref{coro:le} implies that there exists a constant-state leader election protocol that stabilises in $O(Dn)$ time on
bounded-degree trees. In \Cref{sec:lowerbounds}, we show that this is optimal by establishing the following new lower bound for trees.

\begin{restatable}{theorem}{lelb}\label{thm:le-lb}
  There is no leader election protocol that stabilises in $o(Dn)$ expected steps in trees of diameter $D \in \Omega(\log n)$.
\end{restatable}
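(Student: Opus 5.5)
We prove \Cref{thm:le-lb} with an information-propagation (indistinguishability) argument of the kind used by Alistarh et al.~\cite{alistarh2025near}, adapted to trees. The idea is that on a tree with $n-1$ edges, information travels at most one hop per interaction along a fixed path. Covering distance $\Theta(D)$ therefore takes $\Omega(Dn)$ steps with constant probability. We will show that, before such propagation happens, two far-apart regions of the tree cannot both have resolved who the leader is.

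\textbf{Step 1: the instance.} Fix $D\in\Omega(\log n)$ and take a tree consisting of two identical ``halves'' $H_1,H_2$, each of diameter roughly $D/3$, joined by a path $P$ of length roughly $D/3$. The halves could be, say, bounded-degree subtrees carrying most of the nodes, or paths with pendant structure. We choose the construction so that there is an automorphism $\phi$ of the tree that swaps $H_1$ and $H_2$ and reverses $P$. Every node starts in the same initial state, since for non-self-stabilising leader election the input is uniform. Hence the joint law of the execution is invariant under $\phi$.

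\textbf{Step 2: slow propagation.} Consider the interactions along the middle third of $P$. For information from $H_1$ to reach $H_2$, there must be an increasing sequence of times at which the consecutive edges of $P$ are scheduled. Each such edge is scheduled with probability $1/(n-1)$ per step. The first time this happens along all $\Theta(D)$ edges is therefore a sum of $\Theta(D)$ independent geometric random variables of mean $n-1$. It is at least $cDn$ with probability at least $1/2$ for a small constant $c$, by a Chernoff bound, which is valid because $D$ is large. Up to that time, the state of every node in $H_1$ is a function only of the randomness and schedule restricted to the side of the cut containing $H_1$, and the analogous statement holds for $H_2$. The two are independent once we condition appropriately on the events involving the $P$ edges.

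\textbf{Step 3: contradiction.} Suppose the expected stabilisation time is at most $\epsilon Dn$ with $\epsilon\ll c$. By Markov's inequality, with probability at least $1-\epsilon/c'$ the execution is stable with a unique leader by time $cDn/2$. By the symmetry under $\phi$, the leader lies in the half $H_1\cup P_{\text{left}}$ with probability about $1/2$, and likewise in the right half. The delicate point is to turn this into a contradiction, because the protocol could in principle allow a state in $H_2$ to ``know'' that it is a non-leader without any communication.

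We handle this as follows. Stability means that no reachable configuration changes the outputs. Consider the local view of $H_2$ in runs where the leader sits in $H_1$. We replace the left side by an independent copy of the right side's history, which is possible by independence across the cut and the $\phi$-symmetry. This gives, with positive probability, a configuration in which both sides are in states from which they can independently, without communication across $P$, reach configurations that look stable. Combining a run in which $H_1$ holds a leader with a run in which $H_2$ holds a leader yields a reachable configuration with two leaders or with zero leaders. Such a configuration cannot be stable, and stabilising from it requires a further $\Omega(Dn)$ steps, again by Step 2.

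The main obstacle is this surgery step. We must show that the independently sampled halves can be glued together into a positive-probability joint execution, and in the correct quantitative form: the glued bad event must have constant probability, not merely positive probability. My first attempt would be to use the independence of the two sides conditioned on no propagation across the middle of $P$, and bound the glued event's probability below by the product of the two marginal probabilities, each at least a constant.
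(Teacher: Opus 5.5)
Your overall plan (a propagation bound along a long path, conditional independence of far-apart regions with isomorphic neighbourhoods, then a ``two leaders or none'' contradiction) is the same as the paper's. However, Step~3 has a genuine gap.

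Conditional independence is only available for regions whose radius-$k$ neighbourhoods are disjoint, e.g.\ $H_1$ and $H_2$ with a buffer of $P$ between them. Your symmetry statement does not produce such a pair carrying the leader. If the halves $H_1\cup P_{\text{left}}$ and $H_2\cup P_{\text{right}}$ meet at the centre of $P$, they are not independent, since a single interaction on the central edge couples them. If instead a middle stretch of $P$ is left out as a buffer, nothing you say prevents the protocol from placing the leader in that stretch with probability close to $1$ (for instance near the fixed point of $\phi$). In that case $\phi$-symmetry gives no independent twin, the ``about $1/2$'' claim fails, and your argument yields neither two leaders nor zero leaders. Your unresolved surgery step does not address this either, because it only compares $H_1$ with $H_2$.

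The missing idea is to use \emph{local} isomorphisms of neighbourhoods rather than one global automorphism, together with a pigeonhole over constantly many pieces. The paper takes a path $v_0,\dots,v_{8k-1}$ with balanced binary trees $U_1,U_2$ at the ends and cuts the path into blocks $V_0,\dots,V_7$ of length $k$. It pairs every piece with a partner at distance more than $2k$ whose $k$-neighbourhood is isomorphic: $(U_1,U_2)$ and $(V_0,V_7)$ by reflection, and $(V_i,V_{i+3})$ for $i\in\{1,2,3\}$ by translation. The translation pairs are what cover the middle of the path. Let $A^\star$ be the piece most likely to hold a leader at time $T$, with probability $p^\star$. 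If $p^\star<1/11$, there is no leader at all with probability $>1/11$. Otherwise the partner of $A^\star$ has a leader with the same conditional probability, independently, so two leaders occur with probability at least $1/121-o(1)$.

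This dichotomy also settles two of your concerns:
\begin{itemize}
\item Your worry that nodes might ``know'' they are non-leaders simply pushes you into the zero-leader branch.
\item No gluing is needed. In the actual execution the two regions are conditionally independent, so the product bound from your ``first attempt'' applies directly.
\end{itemize}

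Two technical points remain:
\begin{itemize}
\item To make the independence exact, Poissonize the time, taking $T\sim\mathrm{Poisson}(kn/4)$. At a fixed time the sub-schedules on disjoint edge sets are dependent through the total count. Under Poissonization they are independent, and each no-propagation event depends only on its own sub-schedule.
\item You do not need the claim that recovering from a bad configuration takes another $\Omega(Dn)$ steps. A reachable configuration with zero or two leaders is simply not stable, so with constant probability the stabilisation time exceeds a time of order $kn$.
\end{itemize}
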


When put together, the above results show that on bounded-degree trees
leader election does not exhibit a space-time complexity threshold in terms of $n$.

\begin{corollary}\label{coro:bounded-le}
  On bounded-degree trees, there is a constant-state protocol that solves leader election in optimal $\Theta(Dn)$ time.
\end{corollary}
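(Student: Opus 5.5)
This corollary follows by combining \Cref{coro:le} with \Cref{thm:le-lb}, after specialising both to $\Delta \in O(1)$.

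\emph{Upper bound.} Apply \Cref{coro:le} with $\Delta$ a constant. The state count $2^{O(\Delta^2)}$ is then a constant independent of $n$; the state set may depend on the graph class, and here the class is determined by the degree bound. The time bound $O(Dn + \Delta n \log n)$ becomes $O(Dn + n\log n)$. To absorb the second term I use that a tree with maximum degree $\Delta \ge 2$ has diameter $D \ge \Omega(\log n/\log \Delta)$. The reason is that every node lies within distance $D$ of any fixed node, and a ball of radius $D$ contains at most $1 + \Delta\sum_{i<D}(\Delta-1)^i \le \Delta^{D+1}$ nodes. Hence $n \le \Delta^{D+1}$, so $D \ge \log n/\log\Delta - 1$. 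For constant $\Delta$ this gives $D \in \Omega(\log n)$, so $n\log n \in O(Dn)$ and the expected stabilisation time is $O(Dn)$. The degenerate cases $\Delta \le 1$ (at most two nodes) are trivial.

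\emph{Lower bound.} Since bounded-degree trees satisfy $D \in \Omega(\log n)$, as just shown, \Cref{thm:le-lb} applies directly. No leader election protocol, whatever its number of states, stabilises in $o(Dn)$ expected steps on such trees. Together with the upper bound this gives $\Theta(Dn)$.

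The only point needing care is the quantifier structure of ``optimal''. The lower bound of \Cref{thm:le-lb} must hold on the same family of bounded-degree trees, for example paths and balanced binary trees, so that optimality is meaningful for this class. I expect the lower-bound construction in \Cref{sec:lowerbounds} already uses bounded-degree trees of the relevant diameters, since it extends the information-propagation technique to trees. If it instead needs larger degrees, I would re-instantiate that argument on a path or a complete binary tree, where the diameter ranges between $\Theta(\log n)$ and $\Theta(n)$.
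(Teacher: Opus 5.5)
Your proposal is correct and follows the paper's own reasoning. You specialise \Cref{coro:le} to constant $\Delta$, absorb the $n\log n$ term using $D \in \Omega(\log n)$ for bounded-degree trees (the paper asserts this, you justify it via $n \le \Delta^{D+1}$), and match it with the worst-case lower bound of \Cref{thm:le-lb}. Your hedge about the lower-bound family is unnecessary: the trees $T_{n,k}$ used there (a path with balanced binary trees attached at both ends) have maximum degree at most $3$ and realise every diameter between $\Theta(\log n)$ and $\Theta(n)$.
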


\subsubsection{Application 2: Fast space-efficient exact majority for trees }

Finally, we consider the exact majority task on oriented trees.
In this task, each node $v$ receives an input bit $f(v) \in \{0,1\}$, and the task is to reach consensus on the bit that was given as input to the majority of the nodes. In case of ties, consensus on an arbitrary bit should be reached.
On paths, the existing protocols for exact majority use $\tilde{\Omega}(n^3)$ time~\cite{draief2012convergence,rybicki2026space}.

We give a new protocol that stabilises in only $O(n^2 \log n)$ time on any tree, and on bounded-degree trees, uses only constant space.
Specifically, we consider a ``directed'' variant of the two-species annihilation dynamics recently studied by Rybicki et al.~\cite{rybicki2026space}.
Each node starts with a token $\atype$ or $\btype$, depending on its initial input bit.
When an $\atype$-token and a $\btype$-token meet, they both get removed, leaving behind two ``empty'' tokens of type $\ctype$. The tokens of type $\atype$ and $\btype$ are always pushed towards the root (so that the ``empty'' tokens move away from the root).
\Cref{fig:examples}e illustrates an example configuration of this~process.

We show that the directed annihilation dynamics stabilise in $O(n^2 \log n)$ steps into a configuration where only the majority input token remains (or in the case of a tie, only $\ctype$-tokens are left). In case of a majority, the root will hold one of the remaining input tokens.
The output is then propagated from the root to all nodes by having each node with a $\ctype$-token copy the output of its parent. This is explained in more detail in \Cref{sec:maj}.

\begin{restatable}{theorem}{majub} \label{thm:main_majority_upper_bound}
  There is an exact majority protocol for trees that uses
  $2^{O(\Delta^2)}$
  states and stabilises in $O(n^2 \log n)$ steps in expectation and with high probability.
\end{restatable}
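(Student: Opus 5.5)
We compose three layers and analyse them in sequence: the 2-hop colouring protocol of \Cref{thm:2-hop-colouring}, the orientation protocol of \Cref{thm:coloured-tree-orientation}, and on top the directed annihilation dynamics with output propagation. The state space is the product of the layers, giving $2^{O(\Delta^2)}\cdot 2^{O(\Delta^2)}\cdot O(1)=2^{O(\Delta^2)}$ states.

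\textbf{Step 1: orientation.} By \Cref{coro:orientation}, after $O(Dn+\Delta n\log n)\subseteq O(n^2\log n)$ steps (in expectation and w.h.p.) the tree is oriented and the orientation never changes again. Some root $r$ results, possibly with a hallucinated parent pointer; we treat a node whose parent pointer refers to no actual neighbour as the root.

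\textbf{Step 2: the annihilation layer is invariant under faults of the lower layers.} Before orientation stabilises, token moves may be performed along wrong edges. We must ensure they never destroy the difference $\#\atype-\#\btype$. We therefore design every transition to be a local rule between two nodes: a swap of tokens, or an $\atype\btype\to\ctype\ctype$ annihilation. Such a rule always preserves the difference and never creates $\atype$ or $\btype$ tokens. Hence at time $T_0$, when orientation has stabilised, the configuration is an arbitrary placement of tokens with the correct difference, at most $n$ tokens of each kind in total.

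\textbf{Step 3: directed annihilation from an arbitrary configuration.} On the fixed rooted tree, the transition on edge (child $c$, parent $p$) is as follows. If $c$ holds a nonempty token and $p$ holds $\ctype$, they swap. If $c$ and $p$ hold opposite tokens, both become $\ctype$. We use the potential $\Phi=\sum_{\text{$\atype$/$\btype$ tokens}}\depth$. A nonempty token whose parent is empty moves up with probability $1/(n-1)$ per step, reducing $\Phi$, and $\Phi$ never increases.

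The intended argument is as follows. Suppose w.l.o.g.\ $\atype$ is the majority (or there is a tie). While any $\btype$ remains, consider the minority tokens. Each token in effect climbs toward the root, and it is blocked only by same-type tokens ahead of it. We bound the time for the root's vicinity to become "saturated" with a single type: tokens above form a contiguous prefix near the root, and a minority token must then hit an opposing token. I would phrase this as a queueing argument along each root path. The $k$-th token on a path reaches depth $\le k$ after roughly $O(n\cdot (D+k))$ steps by comparison with a tandem of exponential clocks. Summing gives $O(n^2)$ in expectation for all tokens to pack into a "heap" near the root. Any $\atype$--$\btype$ adjacency there annihilates within $O(n)$ expected steps, and the freed $\ctype$ tokens sink while tokens rise. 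Repeating over at most $n$ annihilation rounds is too slow naïvely. The key is instead to show that packing is parallel: by a phase argument halving the number of minority tokens every $O(n^2)$ steps in expectation, we obtain $O(n^2\log n)$ in total. I expect this halving step to be the main obstacle. Its mechanism is that in a packed configuration, a constant fraction of minority tokens are adjacent (on their root path) to majority tokens or end on the root chain, so that they annihilate.

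\textbf{Step 4: output propagation.} Once only majority tokens remain (the root holds one, since tokens pack upward), nodes with $\ctype$ copy their parent's output. This is a broadcast down the oriented tree taking $O(Dn+n\log n)$ steps w.h.p. In a tie all tokens are $\ctype$ and any consensus is acceptable: the root outputs its stored bit and the copying still converges.

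The high-probability bounds follow from Markov's inequality with restarts, since each phase bound holds from an arbitrary configuration.
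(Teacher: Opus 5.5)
Your Steps 1, 2 and 4 and the state count match the paper. Step 3, however, is the entire technical content of the statement beyond composing the layers, and it is not proved. The halving lemma you rely on is left open, and the mechanism you sketch for it does not hold.

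Take a path rooted at one end whose non-$\ctype$ tokens are packed as $\btype^b\atype^a$, with the $\btype$-tokens nearest the root. Only one minority token is adjacent to a majority token. Annihilations then happen one at a time at a single interface: each new annihilation requires the next $\atype$-token to climb through the $\ctype$-holes left by earlier ones, while being blocked by the $\atype$-tokens ahead of it. So the claim that ``a constant fraction of minority tokens are adjacent to majority tokens, so they annihilate'' is false. Any speed-up here must come from pipelining, which is exactly the congestion effect that makes naive analyses lose a factor of $n$.

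Your tandem-queue comparison does not supply this either. On a tree, tokens from different subtrees merge, so ``the $k$-th token on a path'' is not well defined. A same-type token ahead of $z$ may itself be stalled waiting for an annihilation, so the dynamics is not a simple tandem queue. Note also that $O(n^2)$ expected steps per halving, from an arbitrary configuration, does not follow from the paper's bound either, since a round can last $\Theta(n\log n)$ steps.

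The missing idea is a deterministic way to control congestion. The paper fixes an arbitrary fair schedule and cuts it into rounds, each of which samples every edge at least once. It writes $z \succ z'$ if at some time both tokens are uncancelled and there is a directed path from $z$ to $z'$, and shows that $\succ$ is acyclic. A token is called $\ell$-leading if, for every round $k\ge \ell$, either $\depth_{t_k}(z)\le D-k+2\ell$ or $z$ is inactive.

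The key lemma is that whenever $\ell$ tokens are $(\ell-1)$-leading, some further token is $\ell$-leading. Take a $\succ$-minimal remaining $\atype$- or $\btype$-token. It can only be held up by a \emph{blocked} same-type token (necessarily an $\atype$-token) or by reaching the root. If none of these minimal tokens were $\ell$-leading, there would be a unique minimal $\btype$-token stuck at the root with no token able to cancel it, a contradiction. By induction all tokens are $n$-leading, so nothing is active after round $3n$, and $D$ further rounds propagate the output.

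The $\log n$ factor then comes from the coupon-collector length $O(n\log n)$ of each of the $O(n)$ rounds, not from halving phases. Ties are handled by coupling with a run in which one $\btype$-token is replaced by $\ctype$; the two runs differ at exactly one node at all times.
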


For paths, this is optimal up to the $\log n$ factor due to the recent $\Omega(Dm)$ time lower bound on general graphs~\cite{rybicki2026space}.
While our protocol is simple, its analysis requires care.
A na\"ive coupon collector analysis yields a bound of $O(n^3 \log n)$ steps.
However, to get to the worst-case optimal $\Theta(n^2 \log n)$ steps, a more careful analysis is needed to take into account ``congestion'' that may happen in the process: tokens of the same type block each other from propagating towards the root.

Our technique for analysing the directed annihilation dynamics on trees is inspired by the analysis of the fun-sort algorithm of Biedl et al.~\cite{biedl_funsort_2004}. The fun-sort algorithm corresponds to a simpler process with only two token types on a path, where
the tokens do not annihilate each other, but rather traverse in opposite directions on a~path.

Our analysis for directed trees instead defines a partial order on the input tokens that captures how tokens of the same type can block each other as they move towards the root. This partial order is then used to show that some token must either move at constant speed towards the root or be cancelled.

\subsubsection{Further example applications}

The self-stabilising tree orientation algorithm can be used to solve other problems where information needs to be aggregated to the root and/or information needs to be disseminated from the root. Here we sketch two example applications.
The time-optimality of the protocols follows from a straightforward application of our lower-bound lemma in \Cref{sec:lower-bound-lemma}.

\subparagraph{Time-optimal self-stabilising 2-colouring of bounded-degree trees.}
Given the tree orientation protocol, it is easy to design a self-stabilising, time-optimal 2-colouring algorithm for bounded-degree trees:
each node stores an output bit and whenever it interacts with its parent, it copies and flips the output bit of its parent. The time complexity follows from the combination of  \Cref{coro:orientation} and the time for the parent to broadcast its bit to all other nodes; the latter is known to take $\Theta(Dn)$ steps in expectation and with high probability~\cite{alistarh2025near}. %

\subparagraph{Exact population size counting in trees.}
Counting the population size is a fundamental problem in population protocols~\cite{izumi2014space,beauquier2015space,aspnes2016time,berenbrink2019counting}.
In trees, we can now obtain a simple protocol for
counting the exact size of the tree: each node starts with a token ``1''. The tokens are pushed towards the root. When tokens with values $x$ and $y$ meet, the parent's token is replaced with the value $x+y$ and the child retains a token with value $0$.
In parallel, a broadcasting process is run, where the maximum observed token value is broadcast by all the nodes.
It is easy to check that even while the tree orientation algorithm is stabilising, the sum of all token values remains $n$.
After the tree has been oriented, the counting protocol stabilises in $\Theta(\max \{ D, \log n\} \cdot n)$ steps in expectation and with high probability~\cite{alistarh2025near}. This is time-optimal.
The protocol uses $O(n)$ additional states on top of the tree orientation algorithm. In the class of bounded-degree trees, this protocol is both space- and time-optimal.

\section{Related work}\label{sec:related}

\Cref{table:summary} summarises prior work and our new results.
For leader election and exact majority, the prior protocols work in general connected graphs rather than just trees; the table reports the resulting bounds for trees.
Below we compare our results to prior work in more detail.

\subparagraph{Self-stabilising 2-hop colouring.}
Angluin et al.~\cite{angluin_selfstabilising_2008} gave the first population protocols for 2-hop colouring. Our algorithm is inspired by their approach, where each agent memorises a one-bit stamp for each of the~$\Theta(\Delta^2)$ colours; this results in $2^{O(\Delta^2)}$ space complexity. However, they do not provide time complexity bounds under the stochastic scheduler. Indeed, establishing such bounds seems like a challenging task, and we suspect that their protocol is significantly slower than ours in the absence of any ``stamp clearing'' mechanism.

In contrast, the randomised protocol by Sudo et al.~\cite{Sudo2018_Loosely} uses only a single stamp per node. However, it relies on a large colour palette with a dependency on~$N$, where~$N$ is a known upper bound on~$n$, leading to a $\Theta(N^4)$ state complexity. They show that their protocol stabilises in $O(m n^2 \log n)$ time.
Recently, Kanaya et al.~\cite{Kanaya2024_Almost} obtained a better time complexity of~$O(mn)$ steps by using even more states: their randomised algorithm employs a colour palette of size~$\Theta(N^3 \Delta^2)$ and requires memorising~$\Delta$ colours during the execution, resulting in $N^{O(\Delta)}$ states. Their slower, but deterministic, algorithm requires~$2^{O(N)}$ states and is not mentioned in \Cref{table:summary}.

Compared to these results, when~$\Delta$ is small, our new protocol achieves significantly better state complexity and faster stabilisation time with a smaller colour palette of size $O(\Delta^2)$. %
Our analysis is also quite different: we use a stochastic drift argument~\cite{kotzing_theory_2024} to bound the stabilisation time of the 2-hop colouring protocol. We note that
Bertschinger et al.~\cite{Bertschinger2020} also used drift analysis to bound the stabilisation time of a node colouring process in a different asynchronous model, where activated nodes can see the colours in their entire neighbourhood.

\begin{table}[t]
  \centering
  \renewcommand{\arraystretch}{1.3}
  \begin{tabular}{@{}lllll@{}}
    \toprule
    Graph class  & Problem & States & Expected time & References \\ \midrule
    General & 2-hop col. & $2^{O(\Delta^2)}$ & No bound given. & Angluin et al.~\cite{angluin_selfstabilising_2008} \\
    & & $O(n^4)$ & $O(m n^2 \log n)$ & Sudo et al.~\cite{Sudo2018_Loosely} \\
    &  & $n^{O(\Delta)}$ & $O(mn)$ & Kanaya et al.~\cite{Kanaya2024_Almost} \\
    &  & $2^{O(\Delta^2)}$ & $O(\Delta m \log n)$ & \textbf{This paper:} \Cref{sec:2-hop} \\
    &  & Any & $\Omega(m \log n)$ & \textbf{This paper:} \Cref{sec:2-hop_lower-bound}  ($\dagger$) \\
    \midrule
    Trees & Orientation  & $2^{O(\Delta^2)}$ & $O(Dn+\Delta n \log n)$ & \textbf{This paper:} \Cref{sec:orient} \\
    &      & Any & $\Omega(Dn)$ & \textbf{This paper:} \Cref{sec:lowerbounds}  ($\dagger$) \\
    \cmidrule{2-5}
    & LE   & $6$ & $O(D n^2 \log n)$ & Beauquier et al.~\cite{beauquier2013self}  ($\star$) \\
    &      & $O(\log^2 n)$ & $O( ( D +\log n) n \log n)$ & Alistarh et al.~\cite{alistarh2025near} \\
    &      & $2^{O(\Delta^2)}$ & $O(Dn+\Delta n \log n)$  & \textbf{This paper:} \Cref{sec:tree-le} \\
    &      & Any & $\Omega(Dn)$ & \textbf{This paper:} \Cref{sec:lowerbounds} ($\dagger$) \\
    \cmidrule{2-5}
    & EM   & $4$ & $O(Dn^2 \log n)$ & Bénézit et al.~\cite{benezit2009interval} ($\star$)\\
    &      & $O(\log^2 n)$ & $O(\Delta Dn^2 \log^2 n)$ & Rybicki et al.~\cite{rybicki2026space} \\
    &      & $2^{O(\Delta^2)}$ & $O(n^2 \log n)$ & \textbf{This paper:} \Cref{sec:maj} \\
    &      & Any & $\Omega(Dn)$ & Rybicki et al.~\cite{rybicki2026space} ($\dagger$) \\
    \bottomrule
  \end{tabular}
  \caption{Space and time complexities of population protocols in different graph families.
    Here, $\Delta$ is the maximum degree and $D$ is the diameter of the graph.
    ($\dagger$) This is an unconditional time lower bound for protocols that use any number of states.
  ($\star$) The bounds on the time complexity for the protocols are not given in the reference but follows easily from standard hitting time arguments~\cite{sudo2021self,alistarh2025near}.}
  \label{table:summary}
  \vspace{-1em}
\end{table}

\subparagraph{Tree orientation and extensions to general graphs.}

We are not aware that the tree orientation problem has been previously studied in the context of population protocols. However, we note that Angluin et al.~\cite{angluin_selfstabilising_2008} gave a self-stabilising protocol that finds a spanning tree in 2-hop coloured regular graphs of diameter $D$ using $\poly(D)$ states. Thus, our protocols for directed trees  could also be applied in more general graphs by running them in the resulting spanning tree, with the additional $\poly(D)$ state complexity overhead.

\vspace{-0.5em}
\subparagraph{Leader election on graphs.}
Chen and Chen~\cite{chen2019self}  gave a constant-space \emph{self-stabilising} leader election protocol for cycles with an exponential stabilisation time. Later, they gave a $\poly(\Delta)$-state protocol for general $\Delta$-regular graphs~\cite{chen2020self}.
Yokota et al.~\cite{yokota2021time} gave a time-optimal algorithm for self-stabilising leader election on cycles that stabilises in $\Theta(n^2)$ steps and uses $O(n)$ states.
Recently, Yokota et al.~\cite{yokota2023near} gave a near-time-optimal algorithm that stabilises in $O(n^2 \log n)$ steps and uses only $\polylog n$  states.

Unlike for regular graphs, Angluin et al.~\cite{angluin_selfstabilising_2008} showed that self-stabilising leader election cannot be solved in non-simple graph classes. A class of graphs $\mathcal{G}$ is non-simple if it contains a graph with two disjoint subgraphs that are also contained in $\mathcal{G}$. For example, trees are non-simple. Our protocol for trees is ``almost'' self-stabilising in the sense that as long as the initial configuration contains at least one leader initially (but otherwise the initial states may be arbitrary), then some node gets elected as a leader.

Beauquier et al.~\cite{beauquier2013self}
gave a (non-self-stabilising) constant-state leader election protocol that stabilises on any connected graph.
Using hitting time arguments, this protocol can be shown to stabilise in $O(n^4 \log n)$ time on any graph and in $O(n^3 \log n)$ time on regular graphs~\cite{alistarh2025near,sudo2021self}.
For trees, the protocol stabilises in $O(Dn^2 \log n)$ time, as the hitting time of the population random walk in a diameter-$D$ tree is $O(Dn^2)$.
The (non-self-stabilising) leader election protocol of Alistarh et al.~\cite{alistarh2025near} uses $O(\log^2 n)$ states and stabilises in $O((D+\log n) m \log n)$ steps on any graph, and on graphs with edge expansion $\beta>0$, the protocol stabilises in $O(m/\beta \log^2 n)$ time. For trees, their protocol achieves $O((D+\log n) n \log n)$ stabilisation time.
In comparison, our protocol is strictly better in terms of space and time complexity when~$\Delta$ is small, and strictly worse when~$\Delta$ is large, e.g., $\Delta \gg \sqrt{\log \log n}$.

\subparagraph{Exact majority on graphs.}
There is a simple 4-state protocol that solves exact majority on any connected graph~\cite{benezit2009interval,draief2012convergence,mertzios2017determining}.
Mertzios et al.~\cite{mertzios2017determining} showed that there is no always-correct 3-state protocol.
Draief and Vojnovi\'{c}~\cite{draief2012convergence} studied the stabilisation time of the 4-state protocol using spectral techniques,
and recently, Rybicki et al.~\cite{rybicki2026space} extended their analysis  to give a bound of $O((\tau/\gamma) \cdot \log n)$ expected steps, where $\tau$ is the relaxation time of the \emph{population random walk} and $1/n < \gamma < 1$ is the normalised bias of the inputs.%

In trees, the relaxation time $\tau$ of the population random walk can range from $O(n)$ (e.g., stars) to $\Theta(n^3)$ (e.g., paths).
However, plugging in the worst-case values for $\tau$ and $\gamma$ leads to too pessimistic bounds.
Again, with simple hitting time arguments, such as those used by Sudo et al.~\cite{sudo2021self} and Alistarh et al.~\cite{alistarh2025near}, one can show that the 4-state protocol stabilises in $O(Dn^2 \log n)$ steps in any diameter-$D$ tree. %

Rybicki et al.~\cite{rybicki2026space} also gave an $O(\log n \cdot (\log(\Delta/\delta) + \log(\tau / n)))$-state  protocol that stabilises in $O((\tau\Delta/\delta) \log^2 n)$ steps, where $\delta$ is the minimum degree and $\Delta$ is the maximum degree of the input graph. The protocol performs well on almost-regular graphs with good expansion.
However, in worst-case graphs, such as lollipop graphs, the relaxation time $\tau$ can be $\Omega(n^4)$, making the protocol slow.
Moreover, the clocking mechanism performs poorly in graphs where $\Delta/\delta$ is large.
On trees of diameter $D$, one can use the bound  $\tau \in O(Dn^2)$ to get a running time bound of $O(\Delta D n^2 \log^2 n) = O(n^4 \log^2 n)$.
However, we suspect that this bound could be improved by a $\Theta(n)$-factor with a more careful analysis of the underlying cancellation-doubling dynamics on undirected trees.

In contrast, our new majority protocol for trees stabilises in only $O(n^2 \log n)$ steps  on any tree. %
For bounded-degree trees, our new protocol uses $O(1)$ states while achieving a linear speed-up compared to both the 4-state protocol and the super-constant state protocol in the worst case.
This is almost optimal, as any exact majority protocol requires $\Omega(n^2)$ expected time on paths~\cite{rybicki2026space}.
However, the space complexity of our protocol scales worse in terms of $\Delta$.

\section{Fast 2-hop colouring in general graphs}\label{sec:2-hop}

In this section, we give a self-stabilising algorithm whose output stabilises to a 2-hop colouring with $O(\Delta^2)$ colours on any graph $G = (V,E)$.
The algorithm is inspired by the non-deterministic 2-hop colouring algorithm of Angluin et al.~\cite{angluin_selfstabilising_2008} that uses stamping to detect conflicts. However, to ensure fast stabilisation under the stochastic scheduler, our algorithm is different in its construction and assumes that nodes can sample $O(\log \Delta)$ i.i.d.\ random bits per interaction. We prove the following:

\twohop*

\subsection{The 2-hop colouring algorithm} \label{sec:2-hop_colouring_algo}

Fix a constant $\alpha>6$ and let $\C := \{1,\ldots,\alpha \Delta^2\}$ be our colour palette. The algorithm maintains a colour variable $\col(u) \in \C$ and a stamp vector $\stamp(u) \in \{0,1,\bot\}^\C$. The output is given by the map $\col \colon V \to \C$. The $\stamp$ variables are used to check for the validity of the colouring.
Specifically, we say the edge $e$ has a {\em stamp conflict} at time $t$ if
\begin{align*}
  &\bot \notin \{\stamp_t(u, \col_t(v)), \stamp_t(v,\col_t(u)) \}, \text{ and } \\
  &\stamp_t(u, \col_t(v)) \neq \stamp_t(v, \col_t(u)).
\end{align*}
In words, edges have a stamp conflict if the stamps corresponding to their respective colours do not match \emph{and} neither of the stamps is a $\bot$-stamp; the $\bot$-stamp acts as a special stamp to indicate that the stamps were recently cleared.
The algorithm is given in the pseudo-code below and \Cref{figure:definition_conflicts} illustrates stamp conflicts.

\begin{algorithm} [htbp]
  \DontPrintSemicolon
  \If{the activated edge $\{u,v\}$ has a stamp conflict}{
    Sample $\col(u) \gets \uniform(\C)$. \;
    Sample $\col(v) \gets \uniform(\C)$. \;
    Clear stamps by setting $\stamp(v) \gets \{\bot\}^\C$ and $\stamp(u) \gets \{\bot\}^\C$.
  }
  Sample $X \gets  \uniform(\{0,1\})$. \;
  Set $\stamp(u, \col(v)) \gets X$. \;
  Set $\stamp(v, \col(u)) \gets X$. \;
  \caption{{\bf \colouringalgo}}
  \label{alg:2-hop-colouring}
\end{algorithm}

\subparagraph{Conflict edges and recolouring steps.}
For the analysis, we need a few additional definitions.
We write $N(v)$ for the neighbourhood of $v$.
A path $(u,v,w)$ of length 2 is a \emph{$c$-conflict path at time $t$} if $\col_t(u) = \col_t(w) = c$.
We say that an edge $e = \{u,v\}$ has a \emph{colour conflict at time $t$} if
there exists~$w \in N(v) \setminus \{u\}$ such that~$\col_t(w) = \col_t(u)$.
Colour conflicts correspond to edges where the condition for having a 2-hop colouring is not satisfied (i.e., one node has at least two neighbours with the same colour). In particular, an edge with a colour conflict lies on at least one conflict path; see \Cref{figure:definition_conflicts} for an illustration.

\begin{figure}[htbp]
  \centering
  \includegraphics[width=0.4\linewidth]{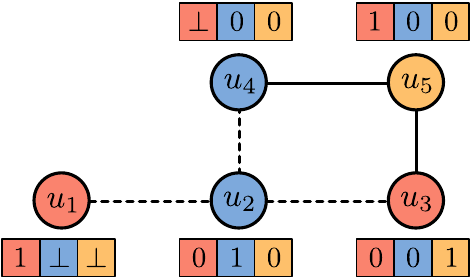}
  \caption{Different types of edge conflicts. The array indexed by colours next to each node represents the $\stamp$ variable of the corresponding node. In this example, the only edge with a stamp conflict is $\{u_2,u_4\}$. In addition, $\{u_1,u_2\}$ and $\{u_2,u_3\}$ both have a colour conflict, and $(u_1,u_2,u_3)$ is a ``red''-conflict path. Finally, $\{u_3,u_5\}$ and $\{u_4,u_5\}$ do not have any conflict.}
  \label{figure:definition_conflicts}
\end{figure}

If the edge $e_t$ sampled at time $t$ has a stamp conflict, then the \colouringalgo triggers a recolouring, i.e., it enters the `if' statement and changes the colour of both activated nodes.
We call such steps \emph{recolouring steps} and write~$\R_t$ to denote the event: ``Time step~$t$ is a recolouring~step''.

We say that $e$ is a \emph{conflict edge} at time $t$ if it has a colour conflict or a stamp conflict at time $t$.
Note that a stamp conflict does not imply a colour conflict, and vice versa.
A recolouring step can only happen when an edge with a stamp conflict is sampled.
Note also that after a recolouring step, all the stamps of $u$ and $v$ are cleared to the value $\bot$.
However, after a recolouring step, we may still have that the colours of $u$ and $v$ remain the same.

We will analyse the stabilisation time of the protocol using stochastic drift analysis on the expected number of conflict edges.
First, we analyse the effect of a single interaction step, and then we consider how the number of conflict edges decreases in a sequence of $m$ interaction steps.

\subsection{Effect of a single interaction step}

We now analyse how the number of conflict edges changes during the execution of the protocol.
First, we show that the set $C_t$ of conflict edges remains unchanged after any step that is \emph{not} a recolouring step.
This implies that the number of conflict edges can only change during recolouring~steps.

\begin{lemma} \label{lemma:no_recolouring}
  If~$t$ is not a recolouring step, then~$C_t = C_{t+1}$.
\end{lemma}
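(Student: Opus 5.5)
The plan is to compare the configurations before and after a non-recolouring step $t$, with activated edge $e_t=\{u,v\}$. I will check separately that the colour-conflict status of every edge is unchanged, and that any edge whose stamp-conflict status can change has a colour conflict both before and after the step, so it lies in both $C_t$ and $C_{t+1}$.

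\emph{Colours.} Since step $t$ is not a recolouring step, the `if' branch of the \colouringalgo is not executed, so $\col_{t+1}=\col_t$. Colour conflicts depend only on colours, hence the set of edges with a colour conflict is the same at times $t$ and $t+1$.

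\emph{Stamps.} The only stamp entries that change are $\stamp(u,\col(v))$ and $\stamp(v,\col(u))$; both are set to the same value $X$ and no entry is set to $\bot$. The stamp-conflict status of an edge $\{a,b\}$ depends only on $\col(a)$, $\col(b)$, $\stamp(a,\col(b))$ and $\stamp(b,\col(a))$. Therefore only edges incident to $u$ or $v$ whose other endpoint has the relevant colour can be affected. I would split these into two cases.
\begin{enumerate}
  \item \emph{The edge $\{u,v\}$ itself.} Before the step it has no stamp conflict, because otherwise step $t$ would be a recolouring step. After the step its two relevant stamps are equal, so it still has no stamp conflict. Its colour-conflict status is unchanged, so its membership in $C$ is unchanged.
  \item \emph{An edge $\{u,w\}$ with $w\neq v$ and $\col(w)=\col(v)$ (symmetrically, an edge $\{v,w\}$ with $w\neq u$ and $\col(w)=\col(u)$).} Its stamp-conflict status may change. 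However, $(v,u,w)$ is then a $\col(v)$-conflict path, so $\{u,w\}$ has a colour conflict, witnessed by $v\in N(u)\setminus\{w\}$. This holds both at time $t$ and at time $t+1$ because colours did not change. Hence $\{u,w\}\in C_t\cap C_{t+1}$ regardless of its stamps.
\end{enumerate}
Every other edge has identical relevant colours and stamps at times $t$ and $t+1$, so its status is unchanged.

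Combining the two parts gives $C_t=C_{t+1}$. The only delicate point is the second case, namely noticing that a stamp being overwritten at an index shared by another neighbour forces that neighbour's edge to already be a colour-conflict edge. Everything else is a direct reading of the pseudo-code.
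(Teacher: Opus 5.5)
Your proof is correct and takes essentially the same approach as the paper. Colours are unchanged, so colour-conflict status is preserved; the activated edge has no stamp conflict before the step (else it would recolour) or after (its two stamps are set equal); and any other edge whose relevant stamp entry is overwritten shares a colour with a neighbour of its endpoint, so it has a colour conflict at both times. The paper first restricts to edges without a colour conflict and then splits on $e = e_t$ versus $e \neq e_t$, whereas you split on which stamp entries change, but the substance is the same.
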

\begin{proof}
  Suppose that $t$ is not a recolouring step.
  Let $e = \{u,v\} \in E$ be any edge.
  First, observe that $e$ has a colour conflict at time~$t$ if and only if it has a colour conflict at time~$t+1$.
  This is because by assumption $t$ is not a recolouring step, and hence, the colours of nodes cannot change in this step.

  Therefore, it remains to consider the case that~$e$ does not have a colour conflict at time~$t$ and~$t+1$.
  We distinguish between two subcases depending on whether $e$ is activated at time $t$ or not:
  \begin{enumerate}%
    \item
      Suppose $e=e_t$ is the activated edge at time~$t$.
      Note that $e$ cannot have a stamp conflict at time~$t$, because otherwise the activation would trigger a recolouring step,
      which is assumed not to be the case.

      The edge $e$ also cannot have a stamp conflict at time~$t+1$ by construction of the \colouringalgo.
      Therefore, $e$ cannot be a conflict edge at time $t$ or $t+1$, i.e.,  $e \notin C_t \cup C_{t+1}$.

    \item
      Suppose $e \neq e_t$ is not the activated edge at time~$t$.
      Note that modifying the variable $\stamp(u,\col_t(v))$ can only be achieved if an edge of the form~$e_t = \{u,w\}$ is activated, with~$\col_t(w) = \col_t(v)$; but this would imply that~$e$ has a colour conflict, which we have already ruled out.
      Therefore, whether $e$ has a stamp conflict (or not) at time~$t$, remains the same at time~$t+1$. \qedhere
  \end{enumerate}
\end{proof}

Next, we show that in each recolouring step, the number of conflict edges  decreases by a positive constant amount in expectation.

\begin{lemma} \label{lemma:pain_in_the_**s}
  There is a constant~$\eta_1 > 0$ such that for every time step~$t \ge 0$, we have
  \begin{equation*}
    \Exp\big[ \ |C_t| - |C_{t+1}| \ \mid \R_t \big] \geq \eta_1.
  \end{equation*}
\end{lemma}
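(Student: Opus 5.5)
Condition on the configuration at time $t$ and on the recolouring edge $e_t=\{u,v\}$, which by definition has a stamp conflict. Only edges incident to $u$ or $v$ can change status in step $t$: colours change only at $u,v$, and stamps change only at $u,v$. So we compare $C_t$ and $C_{t+1}$ restricted to the set $E_{uv}$ of edges within distance one of $\{u,v\}$, i.e.\ edges incident to $u$, $v$, or to a neighbour of $u$ or $v$. Colour conflicts can involve edges $\{w,x\}$ with $x\in N(u)\cup N(v)$, and $|E_{uv}|=O(\Delta^2)$. We split the count into a guaranteed gain and an expected loss.

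\emph{Gain.} After the step, $e_t$ has no stamp conflict. Both of its stamp entries were cleared and then set to the common value $X$, and the other endpoint entries match. Also, all stamps of $u$ and $v$ are $\bot$ except the entries indexed by each other's new colour. Hence every edge $\{u,w\}$ with $w\neq v$ has $\stamp_{t+1}(u,\col_{t+1}(w))=\bot$, unless $\col_{t+1}(w)=\col_{t+1}(v)$, and so it has no stamp conflict at time $t+1$. The same holds for $v$. Therefore, if neither new colour creates a colour conflict, the edges incident to $u,v$ are not in $C_{t+1}$, while $e_t\in C_t$. This gives a gain of at least one.

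\emph{Loss.} An edge can become (or stay) a conflict edge at time $t+1$ only through a colour coincidence involving the freshly sampled colours $\col_{t+1}(u)$ and $\col_{t+1}(v)$. This happens when a new colour equals the colour of some node within distance two of $u$ or $v$, or when the two new colours are equal while $u,v$ share a neighbour. Stamp conflicts on edges not incident to $u,v$ cannot be created, by the argument of \Cref{lemma:no_recolouring}, case 2, unless a new colour coincides with a neighbour's colour. For each of the at most $O(\Delta^2)$ nodes $w$ within distance two, the probability that a given uniform new colour equals $\col(w)$ is $1/(\alpha\Delta^2)$. Each such coincidence can create at most $O(1)$ new conflict edges, namely the two edges of the conflict path. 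Any conflict path through $u$ or $v$ uses at most two edges, one of them incident to the recoloured node. With careful counting, using at most $\Delta$ neighbours and $\Delta^2$ nodes at distance two, the expected number of newly created conflict edges is at most about $2\cdot(2\Delta+\Delta^2)\cdot 2/(\alpha\Delta^2)\le 6/\alpha+o(1)$.

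\emph{Combining.} One must also account for old conflict edges incident to $u,v$ that are removed, which only helps, and check that pre-existing conflicts elsewhere are unaffected. Then
\[
\Exp[|C_t|-|C_{t+1}|\mid \R_t]\ge 1-\tfrac{6}{\alpha}-o(1)=:\eta_1>0,
\]
using $\alpha>6$; this is presumably why $\alpha>6$ was chosen. The main obstacle is the precise bookkeeping of the loss term. Each coincidence of a new colour with a node at distance at most two must create only a bounded number of new conflict edges, and a naive count could charge $\Theta(\Delta)$ edges per coincidence, for example when $\col(u)$ equals the colour of a neighbour $w$ of $v$'s neighbour. I would first try charging each new conflict edge to a distinct (recoloured node, node at distance at most two) pair, so that the constant matches $\alpha>6$.
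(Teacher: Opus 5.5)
Your decomposition, a gain of one from $e_t$ minus the expected number of edges that the fresh colours push into $C_{t+1}$, is the same as the paper's. The loss bookkeeping, which is the real content of the lemma, does not go through as written. The step ``each such coincidence can create at most $O(1)$ new conflict edges'' is false in general graphs, and the lemma is not restricted to trees. Suppose $u$ and a node $w$ at distance two have common neighbours $x_1,\dots,x_k$, with $k$ as large as $\Theta(\Delta)$. Then the single event $X_u=\col(w)$ puts all $2k$ edges $\{u,x_i\},\{x_i,w\}$ into conflict. Your proposed remedy, charging each new conflict edge to a distinct (recoloured node, node at distance at most two) pair, breaks on the same example: the $k$ edges $\{x_i,w\}$ can only be charged to the one pair $(u,w)$.

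The missing idea is to charge per length-2 path rather than per node. There are at most $\Delta(\Delta-1)$ paths $(u,x,z)$. Each becomes a conflict path with probability exactly $1/|\C|$; this holds also when $z=v$, since $X_u$ and $X_v$ are independent. Each such path carries two edges. This is the paper's risky-path set $E_u(c)$ with $\sum_c|E_u(c)|\le 2\Delta^2$. Together with the fact that every edge of $C_{t+1}\setminus C_t$ lies in $E_u(X_u)\cup E_v(X_v)$, it bounds the expected loss by $4/\alpha$.

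Your arithmetic also does not produce a constant. Your expression $2(2\Delta+\Delta^2)\cdot 2/(\alpha\Delta^2)$ equals $4/\alpha+8/(\alpha\Delta)$, which exceeds $6/\alpha$ for $\Delta\le 3$. There is no $o(1)$ available to absorb this, because $\Delta$ may be a constant; for $\Delta=2$ and $\alpha=7$ your bound is negative. The $2\Delta$ term is superfluous anyway. Sharing a colour with a neighbour is not a conflict; only length-2 paths matter, and triangles are already among them.

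Your gain accounting also needs repair. You establish the gain only on the event that the new colours create no conflict at $e_t$, but then use it unconditionally. There are two valid ways to close this:
\begin{itemize}
\item Make the loss term also charge $e_t$ when $e_t\in C_{t+1}$. The path count does this, since $e_t$ then lies on a conflict path $(u,v,w)$ or $(v,u,w)$ at time $t+1$. This gives $1-4/\alpha$.
\item Subtract $\Pr[\text{that event fails}]\le 2/\alpha$, as the paper does, to get $1-6/\alpha$.
\end{itemize}
Your final $6/\alpha$ matches the paper's constant only by coincidence. Separately, the assertion that all edges incident to $u,v$ are absent from $C_{t+1}$ is false: a pre-existing colour clash between two other neighbours of $u$ persists. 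You only need this for $e_t$, where it is true.
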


The proof of \Cref{lemma:pain_in_the_**s} is somewhat involved; the rest of this subsection is dedicated to it.
We fix a time step $t$ and assume that $t$ is a recolouring step, i.e., we condition on $\R_t$.
Let~$e = \{u,v\} = e_t$ be the edge sampled to interact at time~$t$.

\subparagraph{Risky paths.}
We say that a length-2 path $(u,w,z)$ is \emph{$c$-risky for node $u$} if $z$ has colour $c$ at time $t+1$.
The idea is that this path may become (or remain) a $c$-conflict path at time $t+1$.
We define $E_u(c)$ to be the set of all \emph{edges} that lie on some $c$-risky path for node $u$.
Formally,
\begin{equation*}
  E_u(c) := \bigcup_{\substack{w,z \in V \setminus \{u\}}} \Big\{ \{u,w\} , \{w,z\} :  \col_{t+1}(z) = c, \{u,w\} \in E, \{w,z\} \in E \Big\}.
\end{equation*}

\begin{figure}[htbp]
  \centering
  \includegraphics[width=0.35\linewidth]{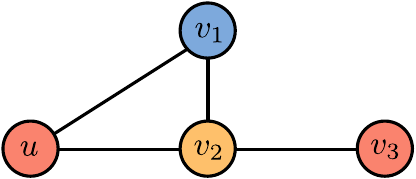}
  \caption{Here, $(u,v_1, v_2)$ is yellow-risky; $(u,v_2, v_1)$ is blue-risky; and $(u,v_2, v_3)$ is red-risky.}

\end{figure}

Let~$X_u,X_v$ be the random colours adopted by~$u$ and~$v$ respectively at time~$t+1$ (after the recolouring).
We now show that~$E_u(X_u)$ contains all the edges that might get a new conflict, as a consequence of~$u$ adopting colour~$X_u$ (and similarly for~$v$).

\begin{claim} \label{claim:conflict_characterization}
  Suppose $e' \notin C_t$ and $e' \in C_{t+1}$.
  Then $e' \in E_u(X_u) \cup E_v(X_v)$.
\end{claim}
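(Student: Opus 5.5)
We prove the claim by splitting on which kind of conflict $e'=\{a,b\}$ has at time $t+1$. Throughout, $t$ is a recolouring step on $e_t=\{u,v\}$. Only $u$ and $v$ change colour, and only the stamps of $u$ and $v$ change. At time $t+1$ these stamps are all $\bot$, except the two entries $\stamp(u,X_v)$ and $\stamp(v,X_u)$, which equal the fresh bit.

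\emph{Case 1: $e'$ has a stamp conflict at time $t+1$.} A stamp conflict on $\{a,b\}$ depends only on $\col(a)$, $\col(b)$, $\stamp(a,\col(b))$ and $\stamp(b,\col(a))$.

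If $e'$ is disjoint from $\{u,v\}$, none of these quantities changed, so $e'$ already had the stamp conflict at time $t$. This contradicts $e'\notin C_t$.

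If $e'=e_t$, then the last three lines of the algorithm set both relevant stamps equal, so there is no stamp conflict.

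So $e'=\{u,w\}$ with $w\neq v$ (the case of $v$ is symmetric). The stamps of $u$ at time $t+1$ are $\bot$ except at index $X_v$. A stamp conflict therefore forces $\col_{t+1}(w)=X_v$ with $\stamp(u,X_v)\neq\bot$. Then $(v,u,w)$ is a path with $\col_{t+1}(w)=X_v$, so it is $X_v$-risky for $v$, and $\{u,w\}\in E_v(X_v)$. If instead $w$ were a neighbour of both, the same conclusion holds. The only subtlety is that $w\neq v$ is needed in the definition of $E_v$, and this holds.

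\emph{Case 2: $e'$ has a colour conflict at time $t+1$.} Then there is a path $(a,b,c)$ or $(b,a,c)$ containing $e'$ whose endpoints share a colour at time $t+1$. Say it is $(x,y,z)$ with $e'\in\{\{x,y\},\{y,z\}\}$ and $\col_{t+1}(x)=\col_{t+1}(z)$.

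If neither endpoint is in $\{u,v\}$, both endpoint colours are unchanged. The path was then already a conflict path at time $t$, so $e'\in C_t$, a contradiction.

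Hence some endpoint, say $x$, is $u$ or $v$; take $x=u$. We have $z\neq u$ since a path has distinct vertices, and $y\neq u$. Now $\col_{t+1}(z)=X_u$, so the path is $X_u$-risky for $u$, and both of its edges lie in $E_u(X_u)$. Note that $z$ may equal $v$, which is allowed since the definition only excludes $u$. Similarly, if $x=v$ we get $E_v(X_v)$.

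The main obstacle is the bookkeeping in Case 1: we must check that an edge incident to $u$ can only acquire a stamp conflict through the single surviving non-$\bot$ entry $\stamp(u,X_v)$. This is exactly what places it in $E_v(X_v)$ rather than in $E_u(X_u)$. We must also check that edges not incident to $u$ or $v$ keep their stamp status, which follows by the same reasoning as in \Cref{lemma:no_recolouring}.
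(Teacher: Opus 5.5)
Your proof is correct and follows essentially the paper's argument. Both rest on the same observation that $u$'s only non-$\bot$ stamp after recolouring is at index $X_v$, and on the same reduction of a new colour conflict to a conflict path with an endpoint in $\{u,v\}$. The only difference is organisational: the paper first turns a new stamp conflict on $\{u,w\}$ into a colour conflict (via $\col_{t+1}(w)=X_v$) and then handles all cases uniformly, whereas you read off the $X_v$-risky path $(v,u,w)$ directly and treat $e'=e_t$ explicitly, which the paper simply excludes since $e_t\in C_t$.
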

\begin{proof}

  Let $e' = \{x,y\}$.
  By assumption, the edge $e' \notin C_t$ has no stamp conflict and no colour conflict at time~$t$.
  \begin{itemize}
    \item If $e'$ is not adjacent to $e_t$, then $x$ and $y$ did not change their states at time~$t$, in particular, $e'$ has no stamp conflict at time~$t+1$.
      However, since $e' \in C_{t+1}$, this means that $e'$ has a colour conflict at time $t+1$.

    \item If $e'$ is adjacent to $e_t$, then we have $\left|e' \cap e_t \right|=1$ because $e' \not\in C_t$ and $e_t \in C_t$ implies $e' \neq e_t$.
      Without loss of generality, let's say that $x = u \in e' \cap e_t$.
      Since $x$ got recoloured at time~$t$, it holds that $\stamp_{t+1}(x,c)=\bot$ for all $c \in \mathcal{C}\setminus\{X_v\}$.
      Therefore, assuming that $e'$ has a stamp conflict at time $t+1$, it must be the case that $\col_{t+1}(y) = X_v$; in particular,
      $e'$ also has a colour conflict at time~$t+1$.
  \end{itemize}
  In either case, we have shown that $e'$ has a colour conflict at time~$t+1$.
  Therefore, there must exist some $c$-conflict path $(x,y,z)$ or $(y,x,z)$ at time $t+1$.
  Without loss of generality, let $(x,y,z)$ be such a path.
  Since~$e'$ has no colour conflict at time~$t$, this implies that either node~$z$ or node~$x$ changed their colour at time~$t$.
  This can only happen if they are incident to the edge $e_t = \{u,v\}$.
  Thus, both endpoints of the path $(x,y,z)$ have the same colour at time $t+1$ and one of these endpoints is $u$ or $v$.
  Since~$e'$ belongs to this path, the edge lies either in $E_u(X_u)$ or~$E_v(X_v)$.
\end{proof}

To later bound the number of edges contained in~$E_u(X_u) \cup E_v(X_v)$, we now establish an intermediate result about the sum of all~$E_u(c)$.
\begin{claim} \label{claim:sum_of_E_u}
  For $w \in \{u,v\} = e_t$, we have that
  \begin{equation} \label{eq:sum_of_E_u}
    \sum_{c \in \C} |E_w(c)| \leq 2 \Delta^2.
  \end{equation}
\end{claim}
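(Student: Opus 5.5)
We prove the bound by double counting: charge each edge of $E_w(c)$ to a length-2 path starting at $w$, and observe that every such path is counted for exactly one colour. Fix $w \in \{u,v\}$. By definition, $E_w(c)$ is the union over all length-2 paths $(w,x,z)$ with $x,z \neq w$ and $\col_{t+1}(z)=c$ of the two-element edge sets $\{\{w,x\},\{x,z\}\}$. Hence
\[
  |E_w(c)| \le \sum_{(w,x,z)\,:\,\col_{t+1}(z)=c} 2 ,
\]
where the sum ranges over length-2 paths starting at $w$ whose endpoint $z$ has colour $c$ at time $t+1$.

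Summing over $c \in \C$, we get
\[
  \sum_{c \in \C} |E_w(c)| \le 2 \sum_{c \in \C} \bigl|\{(w,x,z) : \col_{t+1}(z)=c\}\bigr| .
\]
Each node $z$ has exactly one colour $\col_{t+1}(z)$ at time $t+1$. So each path $(w,x,z)$ is counted for exactly one $c$, and the double sum equals twice the total number of length-2 paths starting at $w$.

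It remains to count these paths. There are at most $\Delta$ choices of the neighbour $x \in N(w)$. For each such $x$, there are at most $\Delta - 1 \le \Delta$ choices of $z \in N(x)\setminus\{w\}$. So there are at most $\Delta^2$ paths, which gives $\sum_{c} |E_w(c)| \le 2\Delta^2$, as claimed.

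There is no real obstacle here. The only points to check are that the union in the definition of $E_w(c)$ can only shrink the count (so the union bound is valid), and that the colouring at time $t+1$ is a function, which makes the sets of paths indexed by different colours disjoint. Since the bound is crude, we can also use $\Delta(\Delta-1)$ in place of $\Delta^2$.
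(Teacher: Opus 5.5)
Your proof is correct, and it organises the count differently from the paper. The paper counts per edge. It splits the edges that can lie in some $E_w(c)$ into two types. An edge incident to $w$ (at most $\Delta$ of them) lies in at most $\Delta-1$ of the sets $E_w(c)$. An edge $\{x,y\}$ with $x \in N(w)$, $y \neq w$ lies in at most one set once its middle vertex $x$ is fixed. This gives $\Delta(\Delta-1)+\Delta^2$.

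You instead count per path. Each path $(w,x,z)$ is charged to the single colour $\col_{t+1}(z)$ and contributes at most two edges, which gives the slightly sharper $2\Delta(\Delta-1)$ with no case split. Your version also covers graphs with triangles without extra care. There, an edge $\{x,y\}$ with $x,y \in N(w)$ lies on both $(w,x,y)$ and $(w,y,x)$, so it can belong to two different sets $E_w(c)$. The paper's count handles this correctly only if ``edges of type (ii)'' are read as edges together with a choice of middle vertex. Your path-based view is also the one the paper itself uses in the next claim, to bound the number of colours $c$ with $E_w(c) \neq \emptyset$. The paper's edge-based split, for its part, makes explicit which edges can appear in $E_w(c)$ at all.
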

\begin{proof}
  By definition, any edge on a $c$-risky path from $w$ is either (i)~incident to node $w$, or (ii)~incident to a neighbour $x$ of $w$, but not incident to $w$. We first bound the number of both types of edges.
  \begin{itemize}
    \item[(i)] Any edge $e'$ that is incident to node $w$ can be a part of at most $\Delta-1$ paths of length~2 from $w$, because $\deg(x) \le \Delta$. In particular,
      there are at most~$\Delta-1$ colours~$c$ such that $e'$ lies on a $c$-risky path from $w$.

    \item[(ii)] Any edge $e'' = \{x,y\}$ such that~$x \in N(w)$ and~$y \neq w$ can be part of at most one $c$-risky path $(w,x,y)$ from $w$, because by definition $(w,x,y)$ is a $c$-risky path if and only if $\col_{t+1}(y)=c$.

  \end{itemize}
  There are at most~$\Delta$ edges of type (i) and at most $\Delta(\Delta-1) \le \Delta^2$
  edges of type (ii). Any other edge is too far from~$w$ and cannot belong to any~$E_w(c)$. Therefore, we get that
  \begin{equation*}
    \sum_{c \in \C} |E_w(c)| \leq \Delta (\Delta - 1) + \Delta^2 \leq 2\Delta^2. \qedhere
  \end{equation*}
\end{proof}

Recall that~$\alpha > 6$ is a constant parameter used to define the size of the colour palette, $|\C| = \alpha \Delta^2$.
We now lower bound the probability that the recolouring at time~$t$ does not create a new colour conflict, as a function of~$\alpha$.
\begin{claim} \label{claim:good_event}
  We have that
  \begin{equation*}
    \Pr[E_u(X_u) = \emptyset \text{ and } E_v(X_v) = \emptyset] \geq 1-\frac{2}{\alpha}.
  \end{equation*}
\end{claim}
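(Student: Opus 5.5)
Since $X_u$ and $X_v$ are uniform on $\C$, a union bound together with \Cref{claim:sum_of_E_u} should give the claim; the only real issue is the dependence between $X_u$, $X_v$ and the sets $E_u(\cdot)$, $E_v(\cdot)$.

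The plan is to bound $\Pr[E_u(X_u) \neq \emptyset] + \Pr[E_v(X_v) \neq \emptyset]$ by $2/\alpha$, showing each term is at most $1/\alpha$.

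Consider $u$. The set $E_u(c)$ is determined by $\col_{t+1}(z)$ for nodes $z$ at distance at most $2$ from $u$ with $z \neq u$. All such $z$ other than $v$ keep their colour at time $t$. The node $v$ may appear as an endpoint $z$ of a risky path $(u,w,v)$ through a common neighbour $w$. To avoid this dependence, I would fix $X_v$ first, i.e. condition on $X_v = c'$. Conditionally on this, the sets $E_u(c)$ for $c \in \C$ are deterministic, and $X_u$ is still uniform on $\C$ and independent of $X_v$ by the algorithm.

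Then $E_u(X_u)\neq\emptyset$ requires $X_u$ to lie in $B_u := \{c : E_u(c)\neq\emptyset\}$. It remains to bound $|B_u|$. Crudely, $|B_u| \le \sum_c |E_u(c)| \le 2\Delta^2$ by \Cref{claim:sum_of_E_u}, which gives probability $2/\alpha$ per endpoint and hence $4/\alpha$ overall. That is too weak, and this counting is the main obstacle.

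A sharper count is available. If $E_u(c)\neq\emptyset$, there is a $c$-risky path $(u,w,z)$, so $c$ is the colour of some $z \in N(w)\setminus\{u\}$ with $w \in N(u)$. There are at most $\Delta(\Delta-1) \le \Delta^2$ such pairs $(w,z)$, so $|B_u| \le \Delta^2$. Equivalently, each nonempty $E_u(c)$ contains at least two edges, $\{u,w\}$ and $\{w,z\}$, so \eqref{eq:sum_of_E_u} already gives $|B_u| \le \Delta^2$. Hence
\[
\Pr[E_u(X_u)\neq\emptyset \mid X_v = c'] \le \frac{\Delta^2}{\alpha\Delta^2} = \frac{1}{\alpha}.
\]
Averaging over $c'$ gives $\Pr[E_u(X_u)\neq\emptyset] \le 1/\alpha$. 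The symmetric argument, conditioning on $X_u$, gives $\Pr[E_v(X_v)\neq\emptyset] \le 1/\alpha$. The union bound then yields
\[
\Pr[E_u(X_u) = \emptyset \text{ and } E_v(X_v) = \emptyset] \ge 1 - \frac{2}{\alpha}.
\]
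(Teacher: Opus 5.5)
Your proof is correct and follows the paper's argument. There are at most $\Delta(\Delta-1) \le \Delta^2$ length-2 paths from $u$, so at most $\Delta^2$ colours $c$ have $E_u(c) \neq \emptyset$; uniformity of $X_u$ then gives probability at most $1/\alpha$, and symmetry for $v$ plus a union bound finishes. Your explicit conditioning on $X_v$ handles a dependence the paper leaves implicit, namely that $v$ can be the endpoint $z$ of a path $(u,w,v)$ through a common neighbour $w$, which makes the argument more rigorous.
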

\begin{proof}
  There are at most~$\Delta^2$ paths of length~$2$ starting from~$u$. Therefore, there are at most~$\Delta^2$ distinct colours~$c$ such that~$E_u(c) \neq \emptyset$. Since~$X_u$ is uniformly distributed in~$\C$, %
  \begin{equation*}
    \Pr[ E_u(X_u) \neq \emptyset ] \leq \frac{\Delta^2}{|\C|} = \frac{1}{\alpha}.
  \end{equation*}
  As the same inequality holds symmetrically for~$v$,  \Cref{claim:good_event} follows from the union~bound.
\end{proof}

Finally, we are ready to conclude the proof of \Cref{lemma:pain_in_the_**s}.
\begin{proof}[Proof of \Cref{lemma:pain_in_the_**s}]
  Let $e_t = \{u,v\}$ be the edge sampled at time $t$. Because $t$ is a recolouring step, the edge $e_t$ has a stamp conflict at time $t$ (a stamp conflict being necessary to trigger a recolouring).
  Moreover, note that if~$E_u(X_u)  = \emptyset$ and $E_v(X_v) = \emptyset$, then~$e_t$ has no colour conflict at time~$t+1$. In that case, since it cannot have a stamp conflict after recolouring, it has no conflict at all at time~$t+1$.
  Thus, the expected number of edges that \emph{cease} to have a conflict at time~$t+1$ is at least
  \begin{equation*}
    1 \cdot \Pr[E_u(X_u)  = \emptyset \text{ and } E_v(X_v) = \emptyset] \geq 1-\frac{2}{\alpha},
  \end{equation*}
  where the inequality follows from \Cref{claim:good_event}.
  The expected number of \emph{newly created} conflict edges at time $t+1$ is by Claim~
  \ref{claim:sum_of_E_u} and Claim~\ref{claim:conflict_characterization} at most
  \begin{equation*}
    \Exp\Big[|E_u(X_u)| + |E_v(X_v)|\Big] = \sum_{c \in \C} \frac{1}{|\C|} \left(|E_u(c)| + |E_v(c)| \right) \leq \frac{4\Delta^2}{|\C|} = \frac{4}{\alpha}.
  \end{equation*}
  Overall, we obtain
  \begin{equation*}
    \Exp\Big[ |C_t| - |C_{t+1}| \ \mid \R_t \Big] \geq 1-\frac{2}{\alpha}-\frac{4}{\alpha} = 1-\frac{6}{\alpha} =: \eta_1,
  \end{equation*}
  which is bounded away from~$0$ as long as~$\alpha > 6$.
\end{proof}

\subsection{\texorpdfstring{Effect of~$m$ consecutive steps}{Effect of m consecutive steps}}

We now analyse how~$|C_t|$ changes over {\em intervals} of~$m = |E|$ consecutive time steps.
For a given time~$t \geq 0$, let  $I(t) = \{t,\ldots,t+m-1\}$ be the interval of $m$ steps starting from time step $t$.

Define $Y_t$ to be the number of recolouring steps in the phase starting at time~$t$, i.e.,
\begin{equation*}
  Y_t := \sum_{s=t}^{t+m-1} \bbOne\{\R_s\} = \sum_{s \in I(t)} \bbOne\{\R_s\}.
\end{equation*}
To perform drift analysis on~$|C_t|$, we first prove a lower bound on~$\Exp[Y_t \mid |C_t|]$.
Let~$\stconflict \subseteq C_t$ be the set of edges with a stamp conflict and
$\coconflict \subseteq C_t$ be the set of edges with a colour conflict.
For each of these two subsets, we lower bound the probability that an edge is involved in a recolouring step during the interval $I(t)$.
In the following, notation $e$ refers to the mathematical constant whenever it is used inside a numerical expression, and to an edge of the graph everywhere else.

\begin{lemma} \label{lemma: sampling stamp conflict}
  For every $e \in \stconflict$, the probability that~$e$ is recoloured during the interval $I(t)$ is at least~$\frac{1-e^{-1}}{2\Delta}$.
\end{lemma}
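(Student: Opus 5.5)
The plan is to look at the first time during $I(t)$ that \emph{any} edge incident to $u$ or $v$ is activated, where $e=\{u,v\}\in\stconflict$. The key point is that before this moment the states of $u$ and $v$ cannot change at all. So if this first incident edge happens to be $e$ itself, its stamp conflict is still present and the activation triggers a recolouring step on $e$.

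First I will show that the state of $u$ changes only when $u$ is an endpoint of the activated edge; likewise for $v$. This is immediate from the \colouringalgo: colours and stamps of a node are only modified when that node interacts. Consequently, if no edge incident to $u$ or $v$ is activated during the steps $t,\dots,s-1$, then $\col_s(u)=\col_t(u)$, $\col_s(v)=\col_t(v)$, $\stamp_s(u)=\stamp_t(u)$ and $\stamp_s(v)=\stamp_t(v)$. Hence $e$ still has a stamp conflict at time $s$. In particular, if $e_s=e$ then $s$ is a recolouring step in which $e$ is recoloured.

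Next, let $F$ be the set of edges incident to $u$ or $v$, including $e$ itself, so that $1\le |F|\le 2\Delta-1$. Since the schedule picks edges independently and uniformly, the probability that no edge of $F$ is picked during the $m$ steps of $I(t)$ is
\[
(1-|F|/m)^m \le e^{-|F|}\le e^{-1}.
\]
So with probability at least $1-e^{-1}$ some edge of $F$ is activated in $I(t)$. Let $s$ be the first such time.

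Conditioned on $s$ existing (and on its value), the edge $e_s$ is uniformly distributed on $F$, by independence and uniformity of the schedule. Therefore $e_s=e$ with probability $1/|F|\ge 1/(2\Delta-1)\ge 1/(2\Delta)$. Combining this with the previous step gives the bound $\frac{1-e^{-1}}{2\Delta}$.

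The only delicate point is justifying the conditional uniformity of $e_s$ together with the persistence of the conflict. Conditioning on "first hit of $F$ at time $s$" only constrains the earlier edges to lie outside $F$, which does not affect $u$ or $v$. I expect this to be routine rather than a real obstacle.
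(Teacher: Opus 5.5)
Your proposal is correct and follows essentially the same argument as the paper. Both take $F$ to be the set of edges sharing an endpoint with $e$. Some edge of $F$ is activated during $I(t)$ with probability at least $1-e^{-1}$. The first such edge is $e$ with probability at least $1/|F|\ge 1/(2\Delta)$, and in that case the states of $u$ and $v$ are unchanged, so the stamp conflict persists and triggers a recolouring step. Your sharper count $|F|\le 2\Delta-1$ and your explicit justification of the conditional uniformity of $e_s$ are fine but not needed.
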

\begin{proof}
  Let $F$ be the set of all edges sharing at least one endpoint with~$e$. In particular, we have that $e \in F$ and
  $1 \leq |F| \leq 2 \Delta$.
  Let $\mathcal{E}_1$ be the event that some edge in $F$ gets activated during the interval~$I(t)$.
  Observe that
  \[
    \Pr[\mathcal{E}_1] \ge 1 - \left( 1 - \frac{|F|}{m} \right )^m \ge 1-(1-1/m)^m \geq 1-1/e > 0.
  \]
  Let $\mathcal{E}_2$ be the event that the edge $e$ is the first edge of $F$ that gets activated during the interval $I(t)$.
  Since the scheduler selects edges uniformly at random in each time step, we~get %
  \[
    \Pr[\mathcal{E}_2 \mid \mathcal{E}_1] \ge \frac{1}{|F|} \ge \frac{1}{2\Delta}.
  \]
  We now show that if the event $\mathcal{E}_1 \cap \mathcal{E}_2$ happens, then the edge $e$ gets recoloured during the interval~$I(t)$.
  Suppose the event $\mathcal{E}_1 \cap \mathcal{E}_2$ happens and let~$s \in I(t)$ be the first time step where~$e$ is activated in the interval.
  By assumption, the edge $e$ has a stamp conflict at time $t$.
  Since no adjacent edge of $e$ is activated between time steps~$t$ and~$s$, nodes $u$ and $v$ do not change their state before time $s$, and therefore $e$ still has a stamp conflict at time $s$. By construction of the \colouringalgo, this implies that~$e$ is recoloured at time~$s$.
  Thus, $e$ gets recoloured during the interval $I(t)$ with probability at least
  \[
    \Pr[\mathcal{E}_1 \cap \mathcal{E}_2] = \Pr[\mathcal{E}_2 \mid \mathcal{E}_1] \cdot \Pr[\mathcal{E}_1]  \ge   \frac{1-e^{-1}}{2\Delta}. \qedhere
  \]
\end{proof}

\begin{lemma} \label{lemma: sampling colour conflict}
  Let $(u,v,w)$ be a $c$-conflict path at time $t$, for some~$c \in \C$.
  Then at least one node in $\{u,v,w\}$ is recoloured during the interval $I(t)$ with probability at least $e^{-2}/16$ for large enough $m$.
\end{lemma}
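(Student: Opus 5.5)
The plan is to exhibit, inside the interval $I(t)$, a short pattern of activations of the two edges $\{u,v\}$ and $\{v,w\}$ that forces a stamp conflict on $\{u,v\}$ with probability $1/2$. Let $\mathcal{B}$ be the event that some node of $\{u,v,w\}$ is recoloured during $I(t)$; we want $\Pr[\mathcal{B}] \ge e^{-2}/16$. Write $d = \col_t(v)$. On the complement $\overline{\mathcal{B}}$, the colours $\col(u)=\col(w)=c$ and $\col(v)=d$ stay fixed throughout $I(t)$. Moreover, none of $u,v,w$ ever has its stamps cleared to $\bot$, since stamps are cleared only when their owner is recoloured. Two stamp entries matter:
\begin{itemize}
\item $\stamp(v,c)$ is written exactly when $v$ interacts with a neighbour of colour $c$, e.g.\ $u$ or $w$;
\item $\stamp(u,d)$ is written exactly when $u$ interacts with a neighbour of colour $d$, e.g.\ $v$.
\end{itemize}
Each write, in a non-recolouring step, stores a fresh uniform bit that is independent of everything before it.

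\textbf{Step 1 (activation pattern).} Let $\mathcal{P}$ be the event that, during $I(t)$, the edge $\{u,v\}$ is activated exactly twice, the edge $\{v,w\}$ exactly once, and the activation of $\{v,w\}$ falls strictly between the two activations of $\{u,v\}$. The activation counts of these two specific edges form a multinomial with parameters $m$ and $1/m$ each, which converges to a pair of independent $\mathrm{Poisson}(1)$ variables. Hence the count part has probability tending to $\frac{e^{-1}}{2}\cdot e^{-1}$. Given the counts, the order of the three activations is uniformly random, so the middle position goes to $\{v,w\}$ with probability $1/3$. This gives $\Pr[\mathcal{P}] \ge e^{-2}/6 - o(1)$, and hence at least $e^{-2}/8$ for large enough $m$; this is where the hypothesis ``large enough $m$'' enters.

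\textbf{Step 2 (forced conflict).} Work on $\mathcal{P}\cap\overline{\mathcal{B}}$ and look at the second activation of $\{u,v\}$, at time $s_2$.
\begin{itemize}
\item The last write to $\stamp(u,d)$ before $s_2$ happened at or after the first activation $s_1$ of $\{u,v\}$, so it is a bit written in $I(t)$ and not $\bot$.
\item The last write to $\stamp(v,c)$ before $s_2$ happened at or after the activation of $\{v,w\}$, so it is also a bit, not $\bot$.
\item These two latest writes occurred at different interactions. The only interaction that writes both entries is an activation of $\{u,v\}$, and there is none in $(s_1,s_2)$.
\end{itemize}
Hence, conditional on the schedule and on all randomness other than these two fresh bits, the two bits are independent and uniform, so they differ with probability $1/2$. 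If they differ, $\{u,v\}$ has a stamp conflict at $s_2$ and is recoloured, which contradicts $\overline{\mathcal{B}}$.

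Because conditioning on $\overline{\mathcal{B}}$ is awkward, I would phrase this step directly. Run the process and let $\tau$ be the first time in $I(t)$ at which a node of $\{u,v,w\}$ is recoloured. On $\mathcal{P}\cap\{\tau \ge s_2\}$, the step $s_2$ is a recolouring step with conditional probability at least $1/2$. Therefore
\[
\Pr[\mathcal{B}] \;\ge\; \Pr[\mathcal{P}\cap\{\tau< s_2\}] + \tfrac12\Pr[\mathcal{P}\cap\{\tau\ge s_2\}] \;\ge\; \tfrac12 \Pr[\mathcal{P}] \;\ge\; \frac{e^{-2}}{16}.
\]
The schedule is independent of the coin flips, so $\Pr[\mathcal{P}]$ is computed from the schedule alone.

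\textbf{Main obstacle.} The delicate part is Step 2: checking that no interference from other edges can keep the two relevant stamps equal. Other edges incident to $u$, $v$ or $w$ may be activated arbitrarily often, and they may rewrite $\stamp(v,c)$ or $\stamp(u,d)$. I would handle this with a ``last writer'' argument: every write in a non-recolouring step is a fresh independent fair bit, and the two last writers are provably different interactions. No $\bot$ can reappear on these three nodes without a recolouring of one of them, and such a recolouring already counts as success.
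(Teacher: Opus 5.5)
Your Step 1 is fine. It differs only slightly from the paper's count. You require exactly two activations of $\{u,v\}$ and one of $\{v,w\}$ in between, which gives $e^{-2}/6-o(1)$. The paper instead uses $\Pr[S\ge 3]\ge e^{-2}$ for $S\sim\mathrm{Binomial}(m,2/m)$ and the $1/8$ chance that the first three activations come in the order $\{u,v\},\{v,w\},\{u,v\}$. The gap is in Step 2, and it is more than a matter of phrasing. Two claims fail: that on $\mathcal{P}\cap\{\tau\ge s_2\}$ the step $s_2$ recolours with conditional probability at least $1/2$, and the ``last writer'' independence argument behind it. Every bit written into $\stamp(u,d)$ or $\stamp(v,c)$ is also written into the partner's stamp, and later checks read that copy. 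As a result, the event $\{\tau\ge s_2\}$ and the identity of the last writer both depend on the two bits you compare.

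For the conditioning effect, take the path $u,v,w,z$ with $\col(z)=d$. Set $\stamp(u,d)=\stamp(v,c)=\stamp(w,d)=\stamp(z,c)=0$ and use the schedule $\{u,v\},\{v,w\},\{w,z\},\{u,v\}$. Surviving the middle two steps forces both written bits to be $0$. So, given $\tau\ge s_2$, the step $s_2$ never recolours.

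The coin-dependence of the last writer breaks the argument even without conditioning. Give $v$ two further $c$-coloured neighbours $y,y'$ that are adjacent to each other, with $\stamp(y,c)=0$ and $\stamp(y',c)=1$. Let $y$ also be adjacent to a $d$-coloured neighbour $z$ of $u$. Put $\stamp(u,d)=\stamp(v,c)=0$ and all other stamps of $u,v,w,y,y',z$ equal to $\bot$. Use the schedule $\{u,v\},\{u,z\},\{v,w\},\{v,y\},\{y,z\},\{y,y'\},\{v,y'\},\{u,v\}$. Up to events of probability $O(1/|\C|)$, no check involving $u,v,w$ fails before $s_2$. However, the check on $\{y,z\}$ compares copies of the current bits of $\stamp(v,c)$ and $\stamp(u,d)$.

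If the two bits agree, $y$ keeps colour $c$ and then conflicts with $y'$. Then $y'$ almost surely leaves colour $c$, so $\stamp(v,c)$ is never overwritten. If the bits disagree, $y$ is recoloured, $y'$ keeps colour $c$, and $y'$ overwrites $\stamp(v,c)$ with a fresh bit. For this schedule in $\mathcal{P}$, the step $s_2$ therefore recolours with probability only about $1/4$.

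The paper's proof makes the same assertion: ``Consider the case that none of the nodes in $\{u,v,w\}$ got recoloured before time $s$ \ldots\ Therefore, $s$ is a recolouring step with probability $1/2$''. So your proposal faithfully reproduces the paper's argument, including this step, and your reformulation via $\tau$ does not repair it. A correct proof cannot treat the two compared bits as fresh independent coins. It must account for checks at the partners and for recolourings of other neighbours of $u$ and $v$. Since a per-schedule factor of $1/2$ is simply false, the constant has to come from a different argument.
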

\begin{proof}
  Let $e = \{u,v\}$ and $e' = \{v,w\}$.
  The number of times that an edge from the set $\{e,e\ap\}$ gets activated during the interval $I(t)$ of length $m$ is a binomial random variable $S \sim \text{Binomial}(m,2/m)$.
  For large enough~$m$, we have
  \begin{align*}
    \Pr[S \geq 3] &\geq \Pr[S=3] = \binom{m}{3} \pa{\frac{2}{m}}^3 \pa{1-\frac{2}{m}}^{m-3} \\
    &= \frac{m(m-1)(m-2)}{6} \cdot \frac{8}{m^3} \cdot \pa{1-\frac{2}{m}}^{m-3} \\
    &\geq \frac{4}{3} \cdot \pa{\frac{m-2}{m}}^3 \cdot \pa{1-\frac{2}{m}}^{m-3} \\
    &\geq \frac{4}{3} \cdot \pa{1-\frac{2}{m}}^{m} \geq e^{-2}, %
  \end{align*}
  where for the last inequality we used the fact that~$\lim_{m \to +\infty} ~ (1-2/m)^{m} = e^{-2}$.
  Moreover, conditioning on~$S \geq 3$, the probability that the first three activations from the set $\{e,e\ap\}$ follow the order $(e,e\ap,e)$ is $1/8$, because each specific order has the same probability of occurring.
  From now on, we condition on this event.

  Let~$s \in I(t)$ be the time step corresponding to the second activation of~$e$ within the phase.
  Consider the case that none of the nodes in $\{u,v,w\}$ got recoloured before time~$s$.
  Then, the colours of $u,v$, and $w$ are still the same at time~$s$ as at time~$t$.
  Furthermore, since $e$ is sampled at least once between times~$t$ and~$s$, and $u$ and $v$ are not recoloured in this interval, we have that $\bot \not\in \{\stamp(u,\col(v)), \stamp(v,\col(u))\}$ at time~$s$.
  We also have that $\stamp(u,\col(v))$ is independent of $\stamp(v,\col(u))$, since there was at least one edge sampled in between (namely $e\ap$), which sets $\stamp(v,\col(w)) = \stamp(v,\col(u))$ uniformly at random and independently of $\stamp(u,\col(v))$.
  Therefore, $s$ is a recolouring step with probability~$1/2$.
  Overall, either~$u$, $v$, or~$w$ gets recoloured within the phase with probability at least
  \begin{equation*}
    e^{-2} \cdot \frac{1}{8} \cdot \frac{1}{2} = \frac{e^{-2}}{16}. \qedhere
  \end{equation*}
\end{proof}

\subparagraph{Path extension functions.}
\Cref{lemma: sampling colour conflict} implies that, with high probability, a large number of conflict paths are partially recoloured during $I(t)$. However, our goal is to obtain a lower bound on the number of recolouring steps. To deduce this from the former statement, we show that conflict paths can be grouped together in a structured way, formalised as follows.

By definition, any edge $e \in \coconflict$ is part of at least one conflict path at time $t$.
We say that $f \colon \coconflict \to \coconflict$ is a \emph{path extension function} if for every~$e \in \coconflict$, the nodes in $e \cup f(e)$ form a conflict path at time~$t$.
The idea is that we can use $f$ to extend an edge $e \in \coconflict$ with a colour conflict to some conflict path that contains $e$.
Moreover, we say that $f$ is \emph{balanced} if $|f^{-1}(e)| \le 2$, i.e., at most two distinct edges can have $e$ as their image.

\begin{lemma}\label{lemma:balanced-paths-exist}
  For any $t \ge 0$, there exists a balanced path extension function.
\end{lemma}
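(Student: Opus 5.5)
The plan is to build $f$ explicitly by grouping the colour-conflict edges around their common middle node. For a node $v \in V$ and a colour $c \in \C$, let $S_{v,c} := \{ w \in N(v) : \col_t(w) = c \}$. Call the pair $(v,c)$ \emph{bad} if $|S_{v,c}| \ge 2$, and define the \emph{group} $G_{v,c} := \{ \{v,w\} : w \in S_{v,c} \}$. For any two distinct $w, w' \in S_{v,c}$, the path $(w,v,w')$ is a $c$-conflict path at time $t$. Hence every edge of a group $G_{v,c}$ lies in $\coconflict$.

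Conversely, let $e = \{u,v\} \in \coconflict$. By definition (up to swapping $u$ and $v$) there is $w \in N(v) \setminus \{u\}$ with $\col_t(w) = \col_t(u)$. Then $(v, \col_t(u))$ is bad and $e \in G_{v,\col_t(u)}$. So every colour-conflict edge lies in at least one group. It also lies in at most two groups: a group containing $\{u,v\}$ must be centred at one of its endpoints, with colour equal to the other endpoint's colour, so the only candidates are $G_{v,\col_t(u)}$ and $G_{u,\col_t(v)}$.

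Construction of $f$:
\begin{enumerate}
  \item For each bad pair $(v,c)$, fix an arbitrary cyclic order $w_1, \ldots, w_k$ of $S_{v,c}$, where $k \ge 2$. Define the successor map $\sigma_{v,c}$ on $G_{v,c}$ by $\{v,w_i\} \mapsto \{v,w_{i+1}\}$, with indices taken mod $k$. This map is a bijection of $G_{v,c}$ without fixed points.
  \item For each $e \in \coconflict$, choose one group $G_{v,c}$ containing $e$ (at least one exists, as shown above) and set $f(e) := \sigma_{v,c}(e)$.
  \item Check that $f$ is a path extension function. If $e = \{v,w_i\}$, then $f(e) = \{v,w_{i+1}\} \in \coconflict$ with $w_{i+1} \neq w_i$, so $e \cup f(e) = \{w_i, v, w_{i+1}\}$ forms the $c$-conflict path $(w_i, v, w_{i+1})$.
\end{enumerate}

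For balancedness, suppose $f(e') = e$. Then $e' \in G_{v,c}$ for the group chosen for $e'$, and $e = \sigma_{v,c}(e')$, so $e \in G_{v,c}$ as well. Since $\sigma_{v,c}$ is a bijection, $e' = \sigma_{v,c}^{-1}(e)$ is determined by the group. Thus each group containing $e$ contributes at most one preimage, and since $e$ lies in at most two groups, $|f^{-1}(e)| \le 2$.

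The only delicate point is this last count. It is essential to define $\sigma_{v,c}$ on the whole group, not just on the edges that chose that group, so that within one group the predecessor of $e$ is unique regardless of which groups the other edges chose. The rest is bookkeeping.
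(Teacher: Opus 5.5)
Your proof is correct and takes the same approach as the paper. Choosing a group $G_{v,c}$ for $e=\{u,v\}$ is exactly the paper's step of orienting $e$ from $u$ to $v$, and your successor map $\sigma_{v,c}$ is its fixed-point-free cyclic permutation $\pi_{v,c}$ on all $c$-coloured neighbours of $v$. The balancedness count, one preimage per candidate group $G_{v,\col_t(u)}$ or $G_{u,\col_t(v)}$, is the same argument as in the paper.
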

\begin{proof}
  Orient each edge $\{u,v\} \in \coconflict$ such that $\{u,v \}$ is oriented from $u$ towards $v$ if there exists some conflict path  $(u,v,w)$, breaking ties arbitrarily. Clearly, all edges with a colour conflict can be oriented in this manner. For any node $v \in V$ that has $k \ge 2$ neighbours  $z_0, \ldots, z_{k-1}$ of colour $c \in \C$ at time $t$, define
  \[
    \pi_{v,c}(z_i) := z_{i +1 \bmod k}.
  \]
  Note that $\pi_{v,c}$ is a permutation on the $c$-coloured neighbours $z_0, \ldots, z_{k-1}$ of node $v$ and the permutation has no fixed points.
  For any edge $\{u,v\} \in \coconflict$ oriented as $(u,v)$ and $\col_t(u) = c$, we set
  \[
    f(\{u,v\}) := \{ v, \pi_{v,c}(u) \}.
  \]
  That is, any colour conflict edge oriented as $(u,v)$ is mapped to the ``next'' adjacent edge $\{v,w\}$ of $v$ which also has an endpoint with the same colour $c$ as $u \neq w$.
  In particular, now $e \cup f(e) = \{u,v,w\}$ is a conflict path, and hence $f$ is a path extension function.

  It remains to show that $f$ is balanced, i.e., at most two edges get mapped to any given edge.
  Consider any $\{x,y\} \in \coconflict$, where the end points have colours $c_x$ and $c_y$ at time $t$.
  By definition of $f$, only edges oriented as $(u,x)$ and $(v,y)$, where $\col_t(u) = c_y$ and $\col_t(v) = c_x$, can be mapped to $\{x,y\}$. Since the maps $\pi_{x,c_y}$ and $\pi_{y,c_x}$, when defined, are permutations, this implies that at most one incident edge of $x$ with colour $c_y$ gets mapped to $\{x,y\}$ and at most one incident edge of $y$ with colour $c_x$ gets mapped to $\{x,y\}$.
\end{proof}

\subparagraph{Bounding the number of recolouring steps.}
We are now ready to lower bound the expected number $\Exp[Y_t \mid |C_t|]$ of recolouring steps that occur during the interval~$I(t)$.

\begin{lemma} \label{lemma: bound on Y}
  There is a constant~$\eta_2 > 0$ such that for all $t \ge 0$ we have
  \[
    \Exp\Big[Y_t ~\big|~ |C_t|\Big] \geq \eta_2 \cdot |C_t| \cdot \Delta^{-1}.
  \]
\end{lemma}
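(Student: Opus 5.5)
Since $C_t = \stconflict \cup \coconflict$, at least one of the two sets has size at least $|C_t|/2$. I would handle the two cases separately and, in each, show that the expected number of recolouring steps in $I(t)$ is $\Omega(|C_t|/\Delta)$. The basic mechanism is the same in both: every recolouring step recolours exactly the two endpoints of the activated edge. So if $W$ is the set of nodes recoloured during $I(t)$, then $Y_t \ge |W|/2$. This reduces the task to lower-bounding $\Exp[|W|]$, or directly the number of ``distinct recolouring events'' attributable to disjoint witnesses.

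\textbf{Case 1: $|\stconflict| \ge |C_t|/2$.} By \Cref{lemma: sampling stamp conflict}, each $e\in\stconflict$ has at least one recoloured endpoint in $I(t)$ with probability at least $(1-e^{-1})/(2\Delta)$. By linearity,
\[
\Exp\Big[\sum_{e\in \stconflict} \bbOne\{e \text{ has a recoloured endpoint}\}\Big] \ge \frac{(1-e^{-1})|\stconflict|}{2\Delta}.
\]
A single node lies in at most $\Delta$ edges, so this sum is at most $\Delta|W|$. That naive conversion loses a further factor $\Delta$, which we cannot afford. I would instead use the stronger content of that lemma's proof: event $\mathcal{E}_1\cap\mathcal{E}_2$ says that $e$ itself is the activated edge in a recolouring step, namely the first activation in $F$. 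Hence
\[
Y_t \ge \sum_{e\in\stconflict}\bbOne\{e \text{ is the activated edge of a recolouring step in } I(t)\}.
\]
Distinct edges give distinct steps, since each step activates one edge. Taking expectations gives $\Exp[Y_t]\ge (1-e^{-1})|\stconflict|/(2\Delta)$ with no loss.

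\textbf{Case 2: $|\coconflict|\ge |C_t|/2$.} Take the balanced path extension function $f$ from \Cref{lemma:balanced-paths-exist}, and associate with each $e\in\coconflict$ the conflict path $P_e$ on nodes $e\cup f(e)$. By \Cref{lemma: sampling colour conflict}, with probability at least $e^{-2}/16$ some node of $P_e$ is recoloured during $I(t)$. Summing gives $\Exp[\sum_e \bbOne\{P_e\cap W\neq\emptyset\}]\ge (e^{-2}/16)|\coconflict|$.

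It remains to bound how many paths $P_e$ can contain a fixed node $x$. A path $P_e$ contains $x$ only if $x\in e$ or $x\in f(e)$. There are at most $\Delta$ edges $e$ containing $x$. Because $f$ is balanced, the number of $e$ with $x\in f(e)$ is at most $2\cdot\deg(x)\le 2\Delta$. So each node is in at most $3\Delta$ paths. This gives $\Exp|W| \ge e^{-2}|\coconflict|/(48\Delta)$, and therefore $\Exp[Y_t]\ge e^{-2}|\coconflict|/(96\Delta)$.

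Combining the cases yields the claim with $\eta_2=\min\{(1-e^{-1})/4,\ e^{-2}/192\}$. Conditioning on $|C_t|$ is harmless: both lemmas hold for every configuration at time $t$, so the bounds hold conditionally on the full state and hence also on $|C_t|$.

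\textbf{Main obstacle.} The delicate point is the overcounting in Case 2, which is exactly why balancedness is needed. Without it, one node could lie in many conflict paths, and the $\Delta$ loss could become $\Delta^2$. A smaller subtlety is that \Cref{lemma: sampling colour conflict} needs $m$ large. For small $m$ (bounded $\Delta$ and $n$), the constant can be adjusted directly, because the probability is bounded below by some positive constant depending only on $m$. Alternatively, one can simply restrict to $m$ larger than an absolute constant.
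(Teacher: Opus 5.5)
Your proposal is correct and is essentially the paper's proof. It uses the same case split, the same edge-indexed indicators in Case~1, and in Case~2 the same balanced path extension function with the same $6\Delta$ overcounting bound: you count $3\Delta$ paths per recoloured node and two nodes per recolouring step, while the paper counts $|F \cup F'| \le 6\Delta$ indicators per step. Both give $\eta_2 = e^{-2}/192$.

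One small correction to your closing remark. For $m=2$ the constant cannot simply be adjusted. Three activations do not fit into $I(t)$, and on the path $(u,v,w)$ with $\stamp(u,\col(v)) = \stamp(w,\col(v)) = \bot$ no recolouring occurs in $I(t)$ at all, so the lemma genuinely fails there. Your second option, assuming $m$ exceeds an absolute constant as the paper implicitly does, is the right fix.
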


\begin{proof}
  Note that~$C_t = \stconflict \cup \coconflict$, and therefore, $|C_t| \leq |\stconflict| + |\coconflict|$. Hence, we have that~$|\coconflict| \geq |C_t|/2$ or $|\stconflict| \geq |C_t|/2$; we will analyse both cases separately:

  \begin{enumerate}
    \item First, suppose that $|\stconflict| \geq |C_t|/2$. Define~$Z_e \in \{0,1\}$ as an indicator variable for the event that the stamp conflict edge $e \in \stconflict$ gets recoloured during the interval $I(t)$.
      By definition of $Y_t$, we have
      \[
        Y_t = \sum_{s \in I(t)} \bbOne\{\R_s\} \ge \sum_{e \in \stconflict} Z_e.
      \]
      By linearity of conditional expectation and \Cref{lemma: sampling stamp conflict}, we obtain
      \begin{equation*}
        \Exp\Big[Y_t ~\big|~ |C_t|\Big] \geq \sum_{e \in \stconflict} \Exp\Big[Z_e ~\big|~ |C_t|\Big] \geq \frac{1-e^{-1}}{2\Delta} \cdot |\stconflict| \geq \frac{1-e^{-1}}{4\Delta} \cdot |C_t|.
      \end{equation*}

    \item For the second case, suppose  that~$|\coconflict| \geq |C_t|/2$.
      Let $f \colon \coconflict \to \coconflict$ be a balanced path extension function given by \Cref{lemma:balanced-paths-exist} at time $t$. For each  $e \in \coconflict$, let $W_e \in \{0,1\}$ be the indicator variable for the event that at least one node in $e \cup f(e)$ is recoloured during the interval $I(t)$.
      Note that $f(e) \cup e = \{u,v,w\}$ forms a conflict path, since $f$ is a path extension function.
      Define
      \[
        H:=\sum_{e \in \coconflict} W_e.
      \]
      Since $f(e) \cup e = \{u,v,w\}$ forms a conflict path, \Cref{lemma: sampling colour conflict} gives  $\Exp[W_e \mid |C_t|] \ge e^{-2}/16$.
      Applying linearity of conditional expectation yields
      \begin{equation} \label{eq:H_expectation}
        \Exp\Big[H ~\big|~ |C_t|\Big] = \sum_{e \in \coconflict} \Exp\Big[W_e ~\big|~ |C_t|\Big] \geq \frac{e^{-2}}{16} \cdot |\coconflict| \geq \frac{e^{-2}}{32} \cdot |C_t|,
      \end{equation}
      where the last inequality follows from our assumption that $|\coconflict| \geq |C_t|/2$.

      Next, we show that $Y_t \geq H/(6\Delta)$ by bounding the number of indicator variables that can be set to 1 in a single recolouring step.
      Consider the case that~$s$ is a recolouring step and $\{u,v\}$ is the edge recoloured at time~$s$.
      Let~$F$ be the set of all edges in $\coconflict$ incident to~$u$ or~$v$:
      \begin{equation*}
        F := \{e \in \coconflict :  u \in e \text{ or } v \in e \}.
      \end{equation*}
      We have that $|F| \leq 2 \Delta$. Let~$F'$ be the set of all edges in $\coconflict$ that are mapped by $f$ to an edge of $F$:
      \begin{equation*}
        F' := \{e \in \coconflict : f(e) \in F\}.
      \end{equation*}
      Since $f$ is balanced, the size of $F'$ is bounded by $|F'| \leq 2 |F| \leq 4 \Delta$.
      Now, let~$e \in \coconflict \setminus (F \cup F')$.
      We have that $\{u,v\} \cap e = \emptyset$ (otherwise $e$ would belong to $F$) and $\{u,v\} \cap f(e) = \emptyset$ (otherwise $f(e)$ would belong to $F$). Therefore, $\{u,v\} \cap (e \cup f(e)) = \emptyset$, and $W_e$ is not set to $1$ at time~$s$. Overall, a single recolouring step can only set at most~$|F \cup F'| \leq 6\Delta$ indicator variables to~$1$.
      This implies that $Y_t \geq H/(6\Delta)$, which together with \Cref{eq:H_expectation}, yields
      \[
        \Exp\Big[Y_t ~\big|~ |C_t|\Big] \geq \frac{\Exp\Big[H ~\big|~ |C_t|\Big]}{6 \Delta} \geq \frac{e^{-2}}{192 \, \Delta} \cdot |C_t|.
      \]
  \end{enumerate}
  Taking~$\eta_2 := e^{-2}/192 < \frac{1-e^{-1}}{4}$ concludes the proof of \Cref{lemma: bound on Y}.
\end{proof}

\twohop*
\begin{proof}
  The bound on the number of states immediately follows from the definition of the algorithm.
  \Cref{lemma:no_recolouring,lemma:pain_in_the_**s} imply that for any $t \ge 0$ we have
  \begin{equation} \label{eq:single_step_expectation}
    \begin{split}
      \Exp\Big[ |C_t| - |C_{t+1}| ~\big|~ |C_t| \Big] &= \Pr[\R_t] \cdot \Exp\Big[ \ |C_t| - |C_{t+1}| \ \mid \R_t \Big]  + \Pr[\overline{\R_t}] \cdot \Exp\Big[ \ |C_t| - |C_{t+1}| \ \mid \overline{\R_t} \Big] \\
      &\geq \eta_1 \cdot \Pr[\R_t].
    \end{split}
  \end{equation}
  Using \Cref{lemma: bound on Y}, we get that the expected decrease in the number of conflict edges  in a single~phase~is
  \begin{align*}
    \Exp\Big[ |C_t| - |C_{t+m}| ~\big|~ |C_t| \Big] &= \sum_{s=t}^{t+m-1} \Exp\Big[ |C_s| - |C_{s+1}| ~\big|~ |C_t| \Big]  \\
    &\geq \sum_{s=t}^{t+m-1} \eta_1 \cdot \Pr\Big[\R_s ~\big|~ |C_t|\Big] = \eta_1 \cdot \Exp\Big[Y_t ~\big|~ |C_t|\Big] & \text{(by \Cref{eq:single_step_expectation})} \\
    &\geq \frac{\eta_1 \, \eta_2}{\Delta} \cdot |C_t|. & \text{(by \Cref{lemma: bound on Y})}
  \end{align*}
  By the multiplicative drift theorem (\Cref{thm:multiplicative_drift} in Appendix~\ref{apx:tools}) applied to~$X_k := |C_{km}|$, we have that $K = \inf \{k : X_k = 0 \}$ is $O(\Delta \log m) = O(\Delta \log n)$ with high probability.
  By definition, $|C_{Km}| = 0$. The absence of conflict edges implies that the $\col$ variables form a 2-hop colouring of~$G$ at time $Km$.
  Furthermore, as there are no more stamp conflicts, the colouring is stable at time~$Km$.

  This implies that for $T \in O(\Delta \, m \log n)$ we have that $|C_t| = 0$ for all $t \ge T$ with high probability.
  Thus, the protocol stabilises in $O(\Delta \, m \log n)$ steps with high probability.
  The expected time bound follows easily from the tail bound; see \Cref{lemma:whp-to-exp} in the Appendix.
\end{proof}

\subsection{Time lower bound for 2-hop colouring}\label{sec:2-hop_lower-bound}

We now give a lower bound for 2-hop colouring that follows easily from the coupon collector problem.

\begin{theorem} \label{thm: 2-hop lower bound}
  There is no protocol for 2-hop colouring that stabilises in $o(m \log n)$ expected steps.
\end{theorem}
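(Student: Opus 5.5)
The lower bound is a coupon-collector argument, using the fact that a self-stabilising protocol must work from adversarial initial configurations. The idea is that in many local places the initial configuration violates the 2-hop condition. Each violation can only be repaired by activating a specific small set of edges, and waiting for all of them takes $\Omega(m\log n)$ steps. Throughout, I read ``protocol for 2-hop colouring'' as self-stabilising, as in \Cref{thm:2-hop-colouring}; without that assumption, an initial configuration encoding a valid colouring would make the claim false.

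Fix a graph family, say paths (or any bounded-degree graph with $m=\Theta(n)$). Choose $k=\Theta(n)$ vertex-disjoint length-2 paths $(u_i,v_i,w_i)$ that lie at pairwise distance at least $3$. Start from a configuration in which all nodes hold the same state $q$; this state exists because the protocol is self-stabilising and all configurations are allowed. Every such path is then a conflict path, since $u_i$ and $w_i$ have the same output colour. In a stable configuration $u_i$ and $w_i$ must output different colours, so at least one of $u_i,w_i$ must change state. A node changes state only when one of its incident edges is activated. Hence, for each $i$, some edge in the set $F_i$ of edges incident to $u_i$ or $w_i$ must be activated before stabilisation. Here $|F_i|\le 2\Delta$, which is $O(1)$ on paths and is at most $4$.

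Let $T_i$ be the first time some edge of $F_i$ is sampled. Because the paths are far apart, the sets $F_i$ are disjoint. Each $T_i$ is geometric with success probability $|F_i|/m\le c/m$. The stabilisation time is at least $\max_i T_i$. I would bound this maximum directly instead of invoking a general coupon-collector result. Take $t=\lfloor (m/(2c))\ln k\rfloor$. The events ``$F_i$ is not hit within $t$ steps'' are negatively associated; alternatively, one can use a Poissonisation argument or the standard coupon collector bound for $k$ coupons among $m$ categories. Each such event has probability at least $(1-c/m)^t\ge k^{-1/2}(1-o(1))$. So the probability that all $F_i$ are hit by time $t$ is at most
\[
\bigl(1-k^{-1/2}/2\bigr)^{k}\le e^{-\sqrt{k}/2}.
\]
(Negative dependence gives this upper bound; alternatively, compute $\Pr[\text{all hit}]$ via inclusion--exclusion or the second moment of the number of unhit sets.) Therefore $\Exp[T]\ge t(1-o(1))=\Omega(m\log n)$.

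The main obstacle is making the argument hold for general $m$, which the statement does not restrict. One needs a family where $m$ ranges widely and $k$ is polynomial in $n$. I would state the bound for a graph family such as disjoint cliques joined by a path, or simply paths/trees where $m=n-1$, and note that the same construction works whenever the graph contains $\Theta(n^{\epsilon})$ well-separated length-2 paths whose endpoints have bounded degree. A second, minor point is justifying the dependence between the events; the second-moment method on the count of unhit $F_i$ avoids needing negative association.
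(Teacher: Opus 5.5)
Your proposal is correct and follows essentially the paper's proof: a uniform initial state, the (at most four) edges incident to the endpoints of separated length-2 paths as disjoint coupons, and a coupon-collector bound, with your negative-association tail bound standing in for the paper's expected-count argument. The general-$m$ ``obstacle'' is handled exactly as you suggest: the paper attaches an $n$-node path to an arbitrary connected $n$-node graph with $m-n$ edges, and your computation already works unchanged for any $m$. Your restriction to self-stabilising protocols is also unnecessary, since identical agents without inputs start in the same state anyway.
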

\begin{proof}
  For~$m \in [2n,n^2]$, let $H$ be any connected graph with $n$ nodes and $m-n$ edges, and $P$ be a path graph with $n$ nodes $v_1,\dots,v_n$.
  We define $G$ to be the union of $H$ and $P$, connected by a single edge between $v_1$ and an arbitrary node of $H$.
  By construction, $G$ has $2n$ nodes and $m$ edges.
  We show that any protocol, using possibly an infinite number of states and random bits, needs $\Omega(m \log n)$ expected steps to stabilise on $G$.

  For this, we choose pairwise disjoint sets of edges $E_1,\dots,E_s$ with $s:=  \lfloor (n-1)/4 \rfloor$ and
  $E_i := \{\{v_j,v_{j+1}\} : j = 4i-3,\dots,4i\}$, i.e., each $E_i$ contains 4 consecutive edges of $P$.
  We say that $E_i$ is \textit{activated} at time $t$ if the sampled edge $e_t$ belongs to $E_i$.
  We assume all nodes start in the same initial state.
  First, we show that every $E_i$ needs to be activated at least once for any protocol to successfully stabilise.
  Without loss of generality, assume that $E_1 = \{\{v_1,v_2\},\dots,\{v_4,v_5\}\}$ has not been
  activated yet, in particular, no edge of $E_1$ has been sampled.
  Then the nodes $v_2$ and $v_4$ are still in their initial state.
  Moreover, as all nodes are initialised in the same state, the nodes $v_2$ and $v_4$ output
  the same colour, in particular, $\{v_2,v_3\}$ and $\{v_3,v_4\}$ both have a colour conflict. Therefore, the current configuration is not correct and
  the protocol cannot have successfully stabilised.

  Now we lower bound the expected number of steps until every $E_i$ has been activated at least once.
  By the coupon collector problem,
  the scheduler needs to sample an edge from $\bigcup_{i=1,\dots,s} E_i$ an expected number of $\Theta(s \log s)$ times before we
  have activated each set $E_i$ at least once.
  Additionally, sampling an edge from $\bigcup_{i=1,\dots,s} E_i$ takes $\Omega(m/s)$ expected steps.
  Therefore, since $s \in \Theta(n)$, the expected number of steps until every $E_i$ has been activated at least once
  is
  \[
    \Omega\Big(\frac{m}{s} \cdot s \log s\Big) = \Omega(m \log n). \qedhere
  \]
\end{proof}

It is easy to see that the lower bound is tight for protocols that have access to sufficiently many random bits: the nodes can simply sample a random colour from the set $\{1, \ldots, \poly(n)\}$. In that case, with high probability, recolouring steps assign unique colours to nodes, and the protocol may stabilise after sampling each edge only a constant number of times.
\section{Time-optimal self-stabilising tree orientation}\label{sec:orient}

In this section, we consider the problem of orienting an undirected tree, assuming we have a 2-hop colouring.
Formally, we show the following result.

\orient*

\subsection{The tree orientation algorithm}

We now assume that the interaction graph $G = (V,E)$ is a tree.
Let $\col \colon V \to \C$ be a stable 2-hop colouring of $G$, where $\C$ is a set of $\chi$ colours.
The 2-hop colouring can be obtained using the self-stabilising algorithm given in \Cref{sec:2-hop}.
In the tree orientation protocol, each node $v$ stores a value $\parent(v) \in \C \cup \{ \bot \}$ acting as a parent pointer, and a set $\children(v) \subseteq \C$ containing the colours of neighbours it considers its children.

\subparagraph{Orientation of an edge.}
The algorithm and its analysis rely on the following three \emph{orientation statuses} of edges; see \Cref{fig:definition} for illustration.
Let $e = \{u,v\}$ be an edge and~$t \ge 0$.
\begin{enumerate}
  \item We say $e$ is \emph{oriented} from $u$ to $v$ at time $t$ if the following conditions are satisfied:
    \[
      \parent_t(v) \neq \col(u) \qquad \col(v) \notin \children_t(u) \qquad  \col(u) \in \children_t(v).
    \]

  \item An edge $e$ that is oriented from $u$ to $v$ is \emph{properly oriented} if also $\parent_t(u) = \col(v)$ is satisfied. Otherwise, we say that $e$ is \emph{weakly oriented} from $u$ to $v$.

  \item If $e$ is not (weakly or properly) oriented, then $e$ is \emph{disoriented}.
\end{enumerate}

\begin{figure}[ht] %
  \centering
  \includegraphics[width=0.89\textwidth]{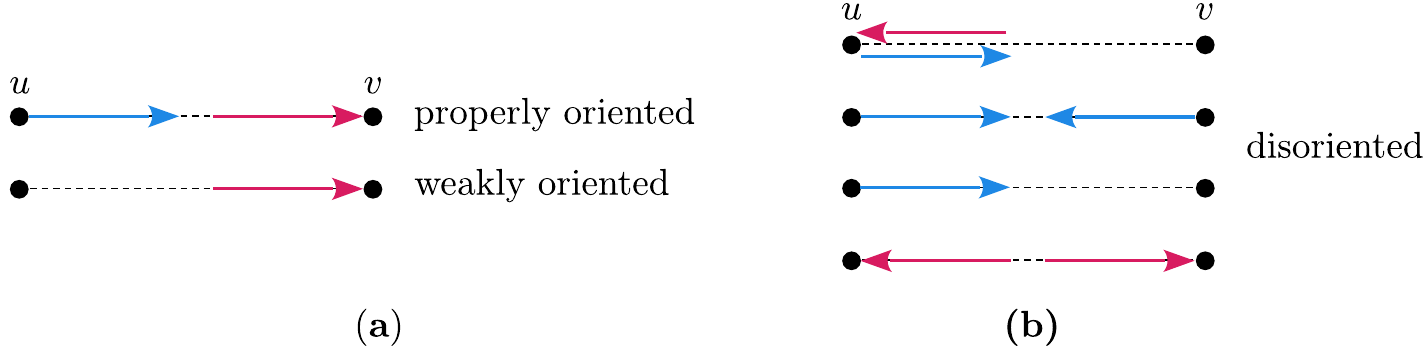}
  \caption{The three {\em orientation statuses} of an edge $\{u,v\}$ induced by the $\children$ and $\parent$ variables of nodes $u$ and $v$. There is an outgoing blue arrow from $u$ if $\parent(u) = \col(v)$, and an incoming red arrow into $v$ if $\col(u) \in \children(v)$.
  (a) Properly and weakly oriented edges. (b) Examples of disoriented edges (i.e., edges that are neither properly nor weakly oriented); here, the list of configurations where the edge $\{u,v\}$ is disoriented is not exhaustive.}
  \label{fig:definition}
\end{figure}

The idea is that if an edge is ``disoriented'', its orientation status is ``ill-defined''; we will show that the number of disoriented edges quickly falls to zero. Once this happens, we have edges that are either weakly oriented or properly oriented.
The algorithm then aims to properly orient all the edges. When this finally happens, we have an oriented tree, where all edges point towards a unique root~node.

Note that when nodes $u$ and $v$ interact, they can easily check if the edge $\{u,v\}$ is properly or weakly oriented in either direction. If the edge is not properly oriented, the \mainalgo (given below) simply updates the variables~$\parent$ and~$\children$ of both~$u$ and~$v$ to properly orient the edge. More precisely, if $\col(v) \in \children(u)$ prior to the interaction, then the edge is properly oriented from~$u$ to~$v$, and otherwise it is properly oriented from~$v$ to~$u$.
The algorithm is described in full detail in the pseudo-code below.

\begin{algorithm}[!ht]
  \DontPrintSemicolon

  \SetKwProg{Fn}{procedure}{}{}
  \SetKwFunction{MyProc}{Set-Edge-Orientation}%
  \Fn{\MyProc{$x,y$}}{
    $\parent(x) \gets \col(y)$ \;
    $\children(x) \gets \children(x) \setminus \{ \col(y)\}$ \tcp*{$x$ removes $y$ from its children}
    $\children(y) \gets \children(y) \cup \{ \col(x)\}$ \tcp*{$y$ adds $x$ to its children}
    \If{$\parent(y) = \col(x)$}{
      $\parent(y) \gets \bot$  \tcp*{only modify $\parent(y)$ if necessary}
    }
  }

  Let~$u$ be the initiator and~$v$ the responder (chosen randomly). \;
  \If{the activated edge $\{u,v\}$ is not properly oriented in either direction}{
    \If{$\col(v) \in \children(u)$}{ %
      $\orientalgo(u,v)$ \;
    }
    \Else{
      $\orientalgo(v,u)$ \;
    }
  }
  \caption{{\bf The \mainalgo}}
  \label{alg:tree-orientation}
\end{algorithm}

The algorithm needs to break symmetry, i.e., it assumes an initiator and a responder.
For this, we use randomness.
In fact, a single random bit suffices, at the cost of only a constant factor increase in the expected running time. This is compatible with the assumptions stated in \Cref{sec:model_definition}.

\subparagraph{Remark.} After the \mainalgo stabilises, it is possible that the root~$u$ has its~$\parent$ variable set to some colour~$c$ even though it has no neighbour of this colour. That is, the root may not be aware that it is the root.
This is necessary, as the impossibility result of Angluin et al.~\cite{angluin_selfstabilising_2008} shows that there is no \emph{self-stabilising} leader election protocol that works for all trees.
Similarly, for any node~$w$, it may be the case that~$c \in \children(w)$ although~$w$ does not have any neighbour~$v$ with~$\col(v) = c$. In particular, this implies that nodes never know their exact set of children. %

\subsection{The analysis of the tree orientation protocol}

\subparagraph{Removal of disoriented edges.}
Let $\mathcal{D}_t$ be the set of disoriented edges at time $t \ge 0$.
We show that the edges cannot remain disoriented for too long: the set $\mathcal{D}_t$ is monotonically non-increasing for~$t \ge 0$.
First we argue that an edge cannot become disoriented again if it has already become weakly or properly oriented. However, it can go from being properly oriented to weakly oriented.

\begin{lemma} \label{lemma:1edge-non-degeneracy}
  If $e \in E$ is not disoriented at time~$t$, then~$e$ is not disoriented at time~$t+1$.
\end{lemma}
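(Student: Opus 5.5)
The argument is a case analysis on how the interaction at time $t$ touches $e=\{u,v\}$. The orientation status of $e$ depends only on four quantities: $\parent(u)$, $\parent(v)$, and whether $\col(v)\in\children(u)$ and $\col(u)\in\children(v)$. Since the colouring is a stable 2-hop colouring, the neighbours of $u$ have pairwise distinct colours, and likewise for $v$. I will use this distinctness repeatedly to show that interactions on other edges cannot disturb the entries of $e$ that witness its orientation. Without loss of generality, assume $e$ is (weakly or properly) oriented from $u$ to $v$ at time $t$, so that $\parent_t(v)\neq\col(u)$, $\col(v)\notin\children_t(u)$ and $\col(u)\in\children_t(v)$.

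\emph{Case 1: $e_t=e$.} If $e$ is already properly oriented, the algorithm does nothing and the status is unchanged. Otherwise the algorithm calls \orientalgo\ in one of the two directions. It suffices to check that after $\orientalgo(x,y)$ on $e$, the edge is properly oriented from $x$ to $y$. The procedure sets $\parent(x)=\col(y)$, removes $\col(y)$ from $\children(x)$, and adds $\col(x)$ to $\children(y)$. It also ensures $\parent(y)\neq\col(x)$, since it resets $\parent(y)$ to $\bot$ if it equalled $\col(x)$. All four conditions then hold. I should also note the degenerate possibility $\col(u)=\col(v)$ for adjacent nodes, which a 2-hop colouring allows. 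In that case the writes to $\parent(x)$ and $\parent(y)$ involve the same colour value, but they are to different nodes and the order of the assignments still yields the required conditions, so I would check this directly.

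\emph{Case 2: $e_t$ is disjoint from $e$.} Then the states of $u$ and $v$ do not change, and neither does the status of $e$.

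\emph{Case 3: $e_t$ shares exactly one endpoint with $e$.} This is the main case. Let $e_t=\{x,w\}$ with $x\in\{u,v\}$ and $w\notin e$. The interaction modifies only $\parent(x)$, $\children(x)$, $\parent(w)$ and $\children(w)$. Moreover, it changes $\children(x)$ only by adding or removing $\col(w)$, and changes $\parent(x)$ either by setting it to $\col(w)$ or by setting it to $\bot$ (the latter only when it previously equalled $\col(w)$). Let $y$ be the other endpoint of $e$. Since $w$ and $y$ are distinct neighbours of $x$, the 2-hop colouring gives $\col(w)\neq\col(y)$. Hence the membership of $\col(y)$ in $\children(x)$ is unchanged. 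Also, $\parent(x)=\col(y)$ can neither newly become true, since the new value is $\col(w)$ or $\bot$, nor stop being true, since a reset to $\bot$ requires the old value to be $\col(w)$. The three conditions defining ``oriented from $u$ to $v$'' therefore persist. Only the properness condition $\parent(u)=\col(v)$ may be lost, when $x=u$ and $\parent(u)$ is overwritten with $\col(w)$. That is exactly the properly-to-weakly transition that the lemma allows.

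The expected obstacle is Case 3. There one has to carefully confirm that every assignment in \orientalgo, including the conditional reset of $\parent(y)$ applied to the other endpoint, touches only entries indexed by $\col(w)$. This is where distinctness of the colours in $N(x)$ is essential.
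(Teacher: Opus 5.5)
Your proof is correct and follows the paper's own three-way case split (the edge itself is sampled, an adjacent edge is sampled, or neither). You use the 2-hop colouring exactly as the paper does, to see that an interaction on $\{x,w\}$ only touches entries indexed by $\col(w)\neq\col(y)$, and your Case~3 is slightly more thorough than the paper's because it also accounts for the conditional reset of $\parent(x)$ to $\bot$. One small slip: $\parent(x)=\col(y)$ \emph{can} stop being true when $\parent(x)$ is overwritten with $\col(w)$, as your next sentence concedes. This is harmless, since the three orientation conditions only need that $\parent(v)=\col(u)$ cannot newly become true, and losing $\parent(u)=\col(v)$ only costs properness.
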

\begin{proof}
  Suppose $e = \{u,v\}$ is not disoriented at time $t$. We proceed by case distinction:
  \begin{enumerate}
    \item If~$e = e_t$, i.e., $e$ is sampled to interact at time~$t$, then by construction of the \mainalgo the edge becomes properly oriented in this step, because the procedure \orientalgo orients the edge.
      In particular, $e$ is not disoriented at time $t+1$.

    \item Suppose $e_t \cap e = \{v\}$, that is, an edge adjacent to $e$ is sampled at time $t$.
      Let $e_t = \{v,w\}$. Clearly, the state variables of $u$ remain unchanged at step $t$, so the edge $e = \{u,v\}$ can become disoriented only from changes to the state of $v$. We consider two subcases:

      \begin{enumerate}%
        \item If $e_t$ is properly oriented, then $v$ does not change its state variables by definition of the algorithm. Hence, the orientation status of the edge $\{u,v\}$ does not change.

        \item Suppose $e_t$ is not properly oriented (i.e., either weakly oriented or disoriented).
          By definition of the algorithm and the assumption that $\col$ is a 2-hop colouring of $G$, $\col(u) \neq \col(w)$, so
          node $v$ does not add or remove $\col(u)$ from its $\children_t(v)$ set in this time step when running
          \orientalgo. %
          If $v$ modifies its parent variable at this time step, then it sets $\parent$ to $\col(w) \neq \col(u)$.
          In this case, the edge $\{u,v\}$ is weakly oriented.
      \end{enumerate}

    \item In the remaining case, neither $u$ nor $v$ interacts at time step $t$. Hence, the states of~$u$ and~$v$ are unchanged and~$e$ remains oriented at time $t+1$. \qedhere
  \end{enumerate}
\end{proof}

Now we show that all disoriented edges are removed fast.
Let~$T$ be the first time step when all disoriented edges have been permanently removed, that is,
\begin{equation*}
  T := \inf \{t \geq 0 : \mathcal{D}_{s} = \emptyset \textrm{ for all } s \ge t \}.
\end{equation*}

\begin{lemma} \label{lemma:non-degeneracy}
  The random variable $T$ is $O(n \log n)$ with high probability.
\end{lemma}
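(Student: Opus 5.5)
By \Cref{lemma:1edge-non-degeneracy}, an edge that is not disoriented never becomes disoriented again. Hence $\mathcal{D}_t$ is monotonically non-increasing, and $T$ is simply the first time at which every edge that was disoriented at time $0$ has stopped being disoriented. The plan is therefore to bound, for each edge, the time until it leaves the disoriented status, and then take a union bound over the at most $n-1$ edges.

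The key observation is case~1 in the proof of \Cref{lemma:1edge-non-degeneracy}, which in fact does not use the hypothesis that $e$ was oriented. Whenever $e_t = e$ is sampled, the algorithm either leaves $e$ untouched because it is already properly oriented, or it calls \orientalgo on its endpoints. After the call $\orientalgo(x,y)$ we have $\parent(x) = \col(y)$, $\col(y) \notin \children(x)$, $\col(x) \in \children(y)$, and $\parent(y) \neq \col(x)$, since the last \texttt{if} clears it. So $e$ is properly oriented from $x$ to $y$ at time $t+1$. Thus any single activation of $e$ makes $e$ non-disoriented, and by \Cref{lemma:1edge-non-degeneracy} it stays so forever.

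It then suffices to check that every edge of the tree is activated at least once within $O(n\log n)$ steps with high probability. The tree has $m = n-1$ edges, and in each step a fixed edge $e$ is sampled with probability $1/m$. Hence the probability that $e$ is not sampled during the first $t = c\, m \ln n$ steps is $(1-1/m)^{t} \le e^{-c\ln n} = n^{-c}$. A union bound over the $n-1$ edges gives that, with probability at least $1 - n^{1-c}$, every edge is sampled by time $c\, m\ln n = O(n\log n)$. On this event $\mathcal{D}_s = \emptyset$ for all $s \ge c\, m \ln n$, so $T = O(n\log n)$ with high probability. This is exactly the coupon-collector bound.

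The only step needing care is the verification that a single activation suffices even when the activated edge was disoriented. In particular, one has to check that the choice of initiator and responder, and the case split on whether $\col(v) \in \children(u)$, do not matter for this. Both branches call \orientalgo, and it properly orients the edge regardless of the prior state. The colouring being a valid 2-hop colouring ensures that the colours of the two endpoints identify each other unambiguously within their neighbourhoods.
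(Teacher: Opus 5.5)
Your proof is correct and takes the same approach as the paper. A single activation of an edge leaves it properly oriented, \Cref{lemma:1edge-non-degeneracy} keeps it non-disoriented from then on, and a coupon-collector bound over the $n-1$ edges gives $O(n\log n)$ with high probability. You simply make explicit two steps the paper states in one line each: the post-condition of \texttt{Set-Edge-Orientation}, and the union bound $(n-1)(1-1/m)^{cm\ln n}\le n^{1-c}$.
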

\begin{proof}
  When an edge is sampled, it becomes oriented by construction of the \mainalgo.
  By \Cref{lemma:1edge-non-degeneracy}, the edge remains  oriented upon all subsequent interactions.
  Because $G$ is a tree,
  by the coupon collector problem,
  all $n-1$ edges are sampled at least once within $O(n \log n)$ steps with high probability. Thus, by this time, all edges have become oriented and remain oriented~thereafter.
\end{proof}

\subparagraph{Tracking weakness.}
For the sake of the analysis, we place on each edge a distinct \emph{edge-marker} from a set $\M$ of $|E|$ markers. Let $Y_{0} \colon \M \rightarrow E$ be the bijection that gives the (arbitrary) initial location of each edge-marker.
The process $(Y_t)_{t \ge 0}$ is defined inductively as follows:
\begin{enumerate}
  \item If the edge $e_t = \{u,v\}$ sampled to interact at time~$t$ is \emph{weakly} oriented from $u$ to $v$, and $v$ is incident to a \emph{properly} oriented edge $e' = \{v,w\}$ from $v$ to $w$, then let $x = Y_t^{-1}(e_t)$ and $x' = Y_t^{-1}(e')$ be the edge markers on $e_t$ and $e'$ respectively. We define $Y_{t+1}(x) := e'$,  $Y_{t+1}(x') := e_t$, and $Y_{t+1}(x'') := Y_{t}(x'')$ for all $x'' \in \M \setminus \{x,x'\}$. That is, we swap the edge-markers on $e_t$ and $e'$ while keeping all other edge-markers as is. Note that $Y_{t+1}$ remains a bijection by construction.
  \item Otherwise, $Y_{t+1} := Y_t$. That is, no edge-marker changes its location.
\end{enumerate}
From now on, we focus on the execution after time~$T$ when there are no more disoriented edges.
We use the  edge-markers to ``track'' improperly oriented edges in the following sense: For any edge-marker $x \in \M$, if the edge $Y_t(x)$ is properly oriented at time~$t \ge T$, then~$Y_{t+1}(x)$ is properly oriented at time~$t+1$.
Equivalently, this means that if~$Y_t(x)$ is weakly oriented at time~$t$, then~$x$ remains on a weakly oriented edge until the weakness is resolved once and for all.

We show that the weakness must be resolved by the time the marker hits an edge connected to a leaf. Intuitively, the weakness ``leaks out'' of the tree at the leaves.
To complete the analysis, we need to bound the time until this happens using a potential argument.

\subparagraph{The marker potential.}

Consider an edge $e = \{u,v\}$ that is weakly oriented from $u$ to $v$ at time~$t \ge T$.
Since $G = (V,E)$ is a tree, the subgraph $G' = (V, E \setminus \{e\})$ of $G$ consists of two connected components.
We write $D_t(e)$ for the longest distance between~$v$ and a node in the connected component of~$G'$ that contains~$v$:
\begin{equation*}
  D_t(e) := \max_{\substack{w \in V \text{ s.t. } \\ \dis_{G'}(v,w) < +\infty}} \dis_{G'}(v,w).
\end{equation*}
For an edge-marker $x$ on the edge $Y_t(x) = e$ at time $t \ge T$, we define the potential as
\[
  \Phi_t(x) :=
  \begin{cases}
    1+D_t(e) & \textrm{if } e \text{ is not properly oriented at time } t, \\
    0 & \textrm{otherwise.}
  \end{cases}
\]
Intuitively, $D_t(e)$ gives the maximum distance that an edge-marker can travel by moving from~$u$ in the direction of~$v$ in the tree without ``turning back''; see \Cref{figure:definition_potential} for an illustration.
We now show that the potential of each edge-marker $x$ is non-increasing in~$t$ and that the potential decreases by at least one unit whenever the corresponding edge $Y_t(x)$ is weakly oriented and sampled at time $t$.

\begin{figure}[htbp]
  \centering
  \includegraphics[width=0.6\linewidth]{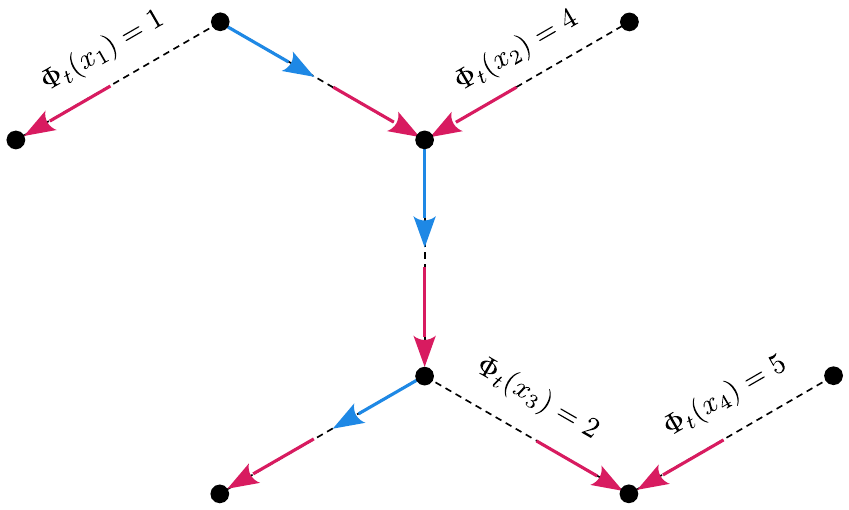}
  \caption{The potential~$\Phi_t(\cdot)$. Here $\mathcal{M} = \{x_1, \ldots, x_7\}$ are the edge-markers. We adopt the same convention as in \Cref{fig:definition} to represent the~$\parent$ and~$\children$ variables of every node involved. This tree has 3 properly oriented edges and 4 weakly oriented edges. The potential of the three markers $x_5,x_6,x_7$ on the properly oriented edges is zero and not drawn on the~figure.}
  \label{figure:definition_potential}
\end{figure}

\begin{lemma} \label{lemma:decreasing_potential}
  Let $t \ge T$. Then for all $x \in \M$ the potential satisfies
  \begin{equation*}
    \Phi_{t+1}(x) \leq
    \begin{cases}
      \max(0,\Phi_t(x)-1) & \text{if~$Y_t(x) = e_t$} \\
      \Phi_t(x) & \text{otherwise.}
    \end{cases}
  \end{equation*}
\end{lemma}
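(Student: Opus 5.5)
The proof is a case analysis on the position of the marker's edge $e = Y_t(x)$ relative to the sampled edge $e_t$, using that for $t \ge T$ every edge is weakly or properly oriented (\Cref{lemma:non-degeneracy}). The first step is to pin down exactly what one interaction does. If $e_t$ is properly oriented, nothing changes. If $e_t = \{a,b\}$ is weakly oriented from $a$ to $b$, then $\col(a) \in \children(b)$ and $\col(b) \notin \children(a)$. So, whichever endpoint is the initiator, the branch taken is $\orientalgo(b,a)$. It sets $\parent(b) \gets \col(a)$, removes $\col(a)$ from $\children(b)$ and adds $\col(b)$ to $\children(a)$. It leaves $\parent(a)$ untouched, because weak orientation already gives $\parent(b) \ne \col(a)$, and symmetrically $\parent(a) \ne \col(b)$. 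Hence $e_t$ becomes properly oriented from $b$ to $a$. The only other effect is that $\parent(b)$ changed, from its old value to $\col(a)$.

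Next I will check, using that $\col$ is a 2-hop colouring, that all edges other than $e_t$ keep their status and direction, with one exception. Take an edge $\{a,z\}$ with $z \ne b$. Only $\children(a)$ changed there, and it gained $\col(b) \ne \col(z)$, so the edge is unaffected. Take an edge $\{b,z\}$ with $z \ne a$. The sets $\children(b)$ and $\children(z)$ change only in the entries $\col(a)$ and $\col(b)$, and $\col(a) \ne \col(z)$. So, viewed from this edge, the only change is in $\parent(b)$, which now equals $\col(a) \ne \col(z)$. The conditions ``$\parent(\cdot) \ne \cdot$'' in the definition of oriented therefore stay true, and orientation directions are preserved. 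Properness can be lost only on the edge $e' = \{b,w\}$ with $\parent_t(b) = \col(w)$ that was properly oriented from $b$ to $w$. That edge becomes weakly oriented from $b$ to $w$, since $\col(b) \in \children(w)$ and $\col(w) \notin \children(b)$ are unchanged. This is precisely the edge-marker swap in the definition of $(Y_t)$.

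With this, the cases for $x$ follow. First, suppose $e$ is not incident to $e_t$, or $e$ is incident but is neither $e_t$ nor the swapped $e'$. Then $x$ does not move and $e$ keeps its status and direction. Since $D_t(e)$ depends only on the tree and the direction of $e$, we get $\Phi_{t+1}(x) = \Phi_t(x)$. Second, suppose $e = e'$ is properly oriented. Then $x$ moves onto $e_t$, which is now properly oriented, so the potential stays $0$. Third, suppose $e = e_t$. If $e$ is properly oriented, it stays so and the potential stays $0 = \max(0,\Phi_t(x)-1)$. If $e$ is weakly oriented from $a$ to $b$ and there is no proper outgoing edge at $b$, then $x$ stays on $e_t$, which becomes proper, so the potential drops to $0$. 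Otherwise $x$ moves to $e'=\{b,w\}$, which is now weakly oriented from $b$ to $w$. It remains to compare $D_{t+1}(e')$ with $D_t(e)$: the component of $w$ in $G-e'$ is contained in the component of $b$ in $G-e$, and every distance from $w$ within it is one less than the corresponding distance from $b$. This gives $D_{t+1}(e') \le D_t(e) - 1$, hence $\Phi_{t+1}(x) \le \Phi_t(x)-1$.

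The main obstacle is the bookkeeping in the second step. One has to read off carefully from the pseudo-code that a weak edge is flipped rather than kept, since this fixes the direction in which the marker travels and so matches the choice of component in $D_t(e)$. One also has to verify via the 2-hop colouring that no edge other than $e'$ changes status or direction. In particular, no weak edge incident to $b$ may reverse direction, as that would change its $D$-value.
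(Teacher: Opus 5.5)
Your proposal is correct and follows essentially the same route as the paper. Both split on whether $e_t$ is properly oriented and, if it is weak, on whether its head has a properly oriented outgoing edge. In the latter case the marker swap carries the weakness one step further into the same subtree, so $D_{t+1}(e') \le D_t(e)-1$, while every other edge keeps its status and direction. You also explicitly verify three things the paper only asserts or defers to its figure: that the weak edge is flipped regardless of which endpoint is the initiator, that the 2-hop colouring leaves all other edges unchanged, and that the $D$-value drops by one via component containment.
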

\begin{proof}
  Let $e_t$ be the edge sampled to interact at time~$t$, and $x$ be the edge-marker on $e_t$ at time $t$, i.e., $Y_t(x) = e_t$.
  First, suppose that $e_t$ is properly oriented at time $t$. Then the \mainalgo does not update any state variables, so the edge remains properly oriented at time $t+1$. Because $e_t$ was not weakly oriented, we get  that $Y_{t+1} = Y_t$. Therefore, $\Phi_{t+1}(x) = 0$ by definition of the potential, and the claim trivially follows.

  Now consider the other case that $e_t$ is not properly oriented at time $t$.
  Hence, $\Phi_t(x) \ge 1$. Without loss of generality, suppose that the edge $e_t = \{u,v\}$ is weakly oriented from $u$ to $v$.
  We distinguish two subcases; see \Cref{fig:proof_sketch} for an illustration:
  \begin{enumerate}
    \item Suppose that $v$ has an incident edge $e' = \{v,w\}$ that is properly oriented from $v$ to $w$. Let $x'$ be the edge-marker on $e'$ at time $t$, i.e., $Y_t(x')=e'$. By definition of the process $(Y_t)_{t \ge T}$, the markers $x$ and $x'$ are swapped in $Y_{t+1}$. Moreover, the algorithm updates the value $\parent(v)$ from $\col(w)$ to $\col(u)$. Hence, $\{u,v\}$ becomes properly oriented from $v$ to $u$, whereas $\{v,w\}$ becomes weakly (improperly) oriented from $v$ to $w$ at time $t+1$.

      In particular, the marker $x$ moves from $\{u,v\}$ to $\{v,w\}$, while the orientation of $\{u,v\}$ at time $t$ is the same as the orientation of $\{v,w\}$ at time $t+1$. In particular, this means that $\Phi_{t+1}(x) \leq \Phi_t(x)-1$.
      Meanwhile, $x'$ remains on a properly oriented edge at both time steps~$t$ and~$t+1$, so $\Phi_{t}(x') = \Phi_{t+1}(x') = 0$.

      Finally, note that the algorithm only modifies the~$\parent$ variables of~$u$ and~$v$ as well as their~$\children$ sets using the respective colours of $u$ and $v$. Therefore, the orientation status of all other edges apart from $\{u,v\}$ and $\{v,w\}$ remains unchanged. In particular, this means that the potential of any other edge-marker $x'' \in \M \setminus \{x,x'\}$ remains unchanged, i.e., $\Phi_{t+1}(x'') = \Phi_t(x'')$.

    \item Suppose that $v$ has no incident edge that is properly oriented from $v$ to some neighbour $w$. By definition of the $(Y_t)_{t \ge T}$ process, we have that $Y_{t+1} = Y_t$. Moreover, the edge $\{u,v\}$ becomes properly oriented by time $t+1$, so $\Phi_{t+1}(x) = 0 \le \Phi_t(x) - 1$. (In words, the weakness tracked by the edge-marker $x$ has been resolved and $x$ will remain on an oriented edge for the rest of the execution.)

      By construction of the algorithm, only the orientation status of the edge $\{u,v\}$ changes in time step $t$. Hence, the potential of any other edge-marker $x' \neq x$ remains unchanged, i.e., $\Phi_{t+1}(x') = \Phi_t(x')$.
      \qedhere
  \end{enumerate}

  \begin{figure}
    \centering
    \includegraphics[width=0.99\textwidth]{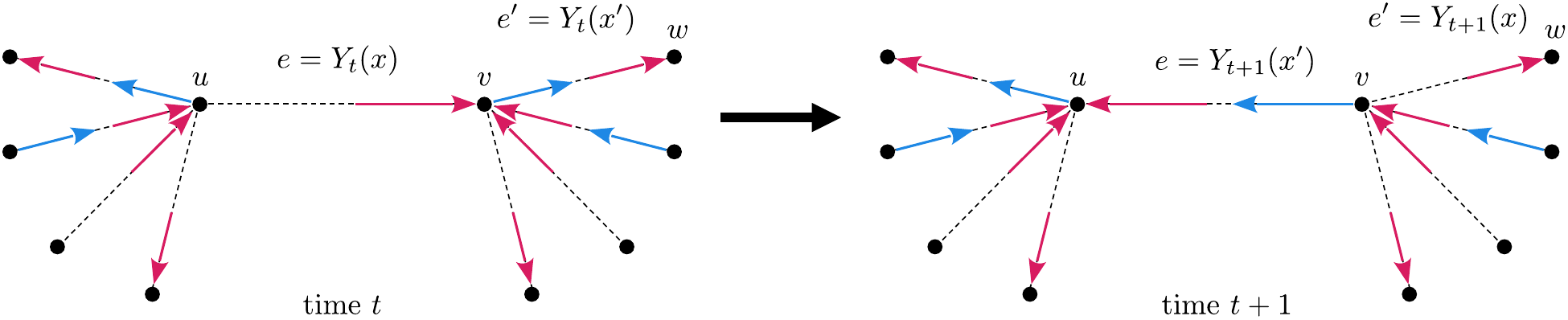}
    \caption{The dynamics of the \mainalgo when a weakly oriented edge $e$ is sampled to interact at time~$t$ after there are no more disoriented edges. We use the same convention as in \Cref{fig:definition} to represent the~$\parent$ and~$\children$ variables of nodes. The nodes $u$ and~$v$ have in addition properly and weakly oriented edges in both possible directions. Assuming the underlying 2-hop colouring is stable, one can check that all edges other than~$e$ and~$e'$ are not modified.}
    \label{fig:proof_sketch}
  \end{figure}
\end{proof}

\subsection{\texorpdfstring{Proof of \Cref{thm:coloured-tree-orientation}}{Proof of Theorem 2}}

As a consequence of \Cref{lemma:decreasing_potential}, the stabilisation time of the \mainalgo is bounded by the time needed to sample each edge-marker~$D$ times, and our upper bound follows from a simple result on a variant of the coupon collector problem.

\begin{proof}
  The state complexity bound follows by observing that storing the $\col$ takes $\chi$ states and storing $\parent$ takes $\chi+1$ states. Storing the $\children$ set takes $2^\chi$ states.

  Recall that~$T$ is the first time step from which all edges remain oriented forever. With high probability, we have $T \in O(n \log n)$ by \Cref{lemma:non-degeneracy}.
  Consider any edge-marker $x$ and let $k = \max\{D, \log n\}$. We now show that after time $T$, the potential of marker $x$ reaches zero in $O(kn)$ steps with high probability.
  Let $T'>T$ be the first time step when $x$ has been sampled at least $k$ times after time $T$.
  Since $\Phi_{T}(x) \le D$, by applying \Cref{lemma:decreasing_potential} on the interval $(T, T']$, we get that
  \begin{equation}
    \Phi_{T'}(x) \leq \max(0,\Phi_{T}(x) - k) = 0.
  \end{equation}
  The random time $S=T'-T$ is stochastically dominated by a sum of geometric random variables $Z = Z_1 + \cdots + Z_k$, where the success probability of $Z_i$ is $1/m$.
  For a constant $\lambda>1$, \Cref{lemma:sum-of-geometric} gives
  \[
    \Pr[S \ge \lambda kn] \le \Pr[Z \ge \lambda \Exp[Z]] \le \exp(-\Omega(k)) \le 1/n^{\Omega(1)}.
  \]
  Thus, with high probability, after $O(kn)$ steps, the marker $x$ has zero potential.
  Taking the union bound over all markers $x$ yields that all markers have zero potential by time $O(kn)$ with high probability.

  Let $T^\star$ be the first time when all markers have zero potential.
  We argue that the \mainalgo has stabilised by time $T^\star$.
  Since all markers have zero potential at time $T^\star$,
  this means that at this time all edges are properly oriented, i.e., there are no weakly oriented edges.
  Since the algorithm cannot modify the state of any node in subsequent steps,
  the orientation of no edge can change from this point onward.
  Since the~$\col$ variables form a 2-hop colouring, every node~$u$ has at most one neighbour~$v$ such that~$\col(v) = \parent(u)$. This implies that every node~$u$ has at most one oriented outgoing edge.
  Hence, the directed tree given by the $\parent$ variables has a unique~root.
  Finally, as the protocol is self-stabilising, the expected time bound follows from \Cref{lemma:whp-to-exp}.
\end{proof}

Together with the 2-hop colouring protocol from \Cref{thm:2-hop-colouring}, we get the following result.
\orientsimp*

\section{Leader election on trees}\label{sec:tree-le}

We now give a leader election algorithm for trees. Recall that in the above orientation algorithm, the root does not know that it is the root. Indeed, otherwise we would contradict the impossibility of self-stabilising leader election on trees~\cite{angluin_selfstabilising_2008}.
We now prove the following.

\leub*

The above further implies that on bounded-degree trees leader election can be solved in optimal $\Theta( Dn )$ time with constant-state protocols.

\subparagraph{The leader election protocol.}
We run the tree orientation algorithm given by \Cref{coro:orientation} in parallel to the following simple token dynamics. Initially, each node starts with a token. When an edge oriented from node $u$ to $v$ is sampled, then the following rules are applied:
\begin{enumerate}%
  \item If both $u$ and $v$ have a token, then the token held by node $u$ is annihilated.
  \item If $u$ holds a token and $v$ does not, then the token moves to node $v$.
\end{enumerate}
Otherwise, the tokens do not change state.
If a node holds a token, it outputs that it is a leader.

\begin{proof}[Proof of \Cref{coro:le}]

  It is easy to check that if the protocol is started with a non-empty set of nodes holding a token, then at least one token always remains even if the edge orientations change over time during the stabilisation of the orientation algorithm. Once the orientation stabilises at time $T_1$, the tree has been oriented towards some root node $r \in V$.

  Let $U \subseteq V$ be the set of nodes that hold a token at time $T_1$.
  The token located at node $v \in V$ at time $T_1$ will either reach the root node $r$ or become annihilated by the time step $T_v + T_1$, where $T_v$ is the time until the edges on the path from $v$ to $r$ have been sampled (as a subsequence) in order by the scheduler after time $T_1$.
  Note that $T_v$ is stochastically dominated by a sum of $D$ geometric random variables with success probability $1/m$.
  Choose $\lambda = K + \log n / D \ge 1$ for a sufficiently large constant $K$. This yields $\lambda \Exp[T_v] \le (DK + \log n)m$.
  Now we can apply first part of
  \Cref{lemma:sum-of-geometric} with $\lambda$ to get
  \[
    \Pr[T_v \ge \lambda \Exp[T_v]]  \le \exp(-Dc(\lambda)) \in \exp(-\Omega(\log n)),
  \]
  because $c(\lambda) \in \Omega(\lambda)$ for large enough $K$.

  The protocol stabilises in time $T = \max \{ T_v : v \in U \} + T_1$, as the first token to reach the root will annihilate all other tokens that will eventually arrive at the root. By the union bound, it follows that
  $\max \{ T_v : v \in U \} \in O(Dn + n \log n)$ with high probability.
  Finally, by \Cref{coro:orientation}, we have $T_1 \in O(Dn + \Delta n \log n)$ with high probability, which gives the claimed w.h.p.\ time bounds. The expected time bound now follows easily from \Cref{lemma:whp-to-exp} by observing that any configuration that starts with a token will always have at least one token (hence, it is closed in the sense of \Cref{lemma:whp-to-exp} in the Appendix).
\end{proof}

While leader election cannot be solved in a self-stabilising manner on trees, we note that the above protocol is robust against the choice of initial configuration in the following sense: if at least one agent starts with a token, then a leader will be elected.

\section{Fast exact majority in directed trees} \label{sec:maj}

In this section, we prove the following result for directed trees.

\begin{theorem} \label{thm:majority_upper_bound}
  There is an exact majority protocol for directed trees that stabilises in $O(n^2 \log n)$ steps in expectation and with high probability, and uses $O(1)$ states.
\end{theorem}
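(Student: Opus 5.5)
We work on a tree already oriented towards a root $r$, and run the directed annihilation dynamics described in the introduction. Each node holds a token in $\{\atype,\btype,\ctype\}$ plus an output bit. When an edge oriented from child $u$ to parent $v$ is sampled, the following rules apply. If $u$ and $v$ hold opposite input tokens ($\atype$ and $\btype$), both become $\ctype$. If $u$ holds an input token and $v$ holds $\ctype$, they swap, so input tokens move up and $\ctype$-tokens move down. If both hold the same input type, nothing happens; this is the blocking. A node holding $\ctype$ copies the output bit of its parent, and a node holding an input token outputs its type. This uses $O(1)$ states.

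Correctness rests on two facts. The difference $\#\atype-\#\btype$ is invariant. Every node with an input token whose parent holds $\ctype$ or the opposite type is ``enabled''. Hence once no opposite pair remains, the minority type is gone. Then all input tokens (of the majority type) form an up-closed set containing $r$, since tokens move up until blocked by tokens of their own type. After that the outputs are fixed by a broadcast, which takes $O(Dn+n\log n)$ steps w.h.p.\ by the broadcast bound cited from~\cite{alistarh2025near}. A tie leaves only $\ctype$-tokens, and the output bits then converge to whatever the root holds. So the main task is to bound by $O(n^2\log n)$ the time until (i) one type is extinct and (ii) the surviving tokens are packed at the top.

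For the time bound I would follow the fun-sort style argument. Define a partial order on input tokens by blocking: token $a$ is \emph{behind} token $b$ if $b$ lies on $a$'s root path and is of the same type. Introduce a potential such as $\Psi=\sum_{\text{input tokens } a}\depth(a)$, plus a term counting opposite pairs. The key claim is this: at any time before stabilisation, the minimal blocked chain contains a token that is enabled. Its parent is either $\ctype$ or of the opposite type, because a maximal same-type run along a root path ends either at $r$ (already packed) or at a non-same-type node. Following the chain gives, within any window of $O(n)$ steps, constant probability that some enabled edge fires. A sharper version is needed, though. I would show that the first token in the order that is not yet ``finished'' moves toward the root at constant speed, i.e., it advances at least once per $O(m)=O(n)$ steps in expectation, or else it is cancelled. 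Each token needs at most $D\le n$ advances. Processing tokens in the order given by the partial order, so that token $k$ starts ``running freely'' once those ahead of it are settled, gives a staggered schedule with total $O(n\cdot m)$ in expectation. Concentration via \Cref{lemma:sum-of-geometric} then adds the $\log n$ factor for the w.h.p.\ bound, using $\lambda$ chosen as in the proof of \Cref{coro:le}. \Cref{lemma:whp-to-exp} gives expectation.

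The main obstacle is the congestion argument. A naive bound gives $n$ tokens $\times$ $D$ hops $\times$ $n$ steps per hop $=O(n^3)$. To get $O(n^2\log n)$ I must show the delays pipeline: a token blocked by a same-type token ahead is delayed only by the time that token needs to move, and not additively by all tokens. I would try a coupling in which the $i$-th token in a linear extension of the partial order is dominated by a sum of $O(D+i)$ geometric variables with mean $m$. Then all tokens are done after $O((D+n)m\log n)=O(n^2\log n)$ steps w.h.p. Cancellations only remove tokens, and \ctype-tokens moving down never block upward movers because of the swap rule, so they should not hurt the domination. Verifying that cancellation preserves the order structure is where I expect the most care to be needed.
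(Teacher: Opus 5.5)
\textbf{There is a genuine gap: the congestion step that gives $O(n^2\log n)$ is only announced, and the claims you make for it fail as stated.} Your protocol, the invariance of $\#\atype-\#\btype$ and your description of stable configurations are fine. Your intuition (a blocking order, a leading token moving at constant speed, pipelining) is also the paper's.

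The first problem is that your relation ``$a$ is behind $b$'' changes over time and depends on the schedule. Which token is ahead after two branches merge, and which tokens vanish from the middle of a chain by cancellation, are both decided by future interactions. So ``the $i$-th token in a linear extension'' is not a fixed index that you can couple with an independent sum of $O(D+i)$ geometric variables. At minimum you need one relation over the whole execution, across types, together with a proof that it is acyclic; this is the paper's $\succ$ and \Cref{lemma:acyclic}.

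The second problem is more serious. Your claim that the first unfinished token advances at constant speed or is cancelled is false for an $\atype$-token sitting directly behind a permanently blocked $\atype$-token. That token cannot move and is not settled. When it gets cancelled depends on $\btype$-tokens coming up from its own subtree, i.e.\ on tokens \emph{later} in your order. So an induction along the blocking order does not close, and the ``staggered schedule with total $O(nm)$'' is exactly the statement still to be proved. Note also that the domination you hope for would already give $O(n^2)$ w.h.p., removing the $\log n$ factor. The paper does not obtain that bound, so this step cannot be treated as routine.

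The paper's route avoids these issues. It fixes a fair schedule and argues deterministically in asynchronous rounds. Randomness enters only through round lengths: $4n$ rounds, each lasting $O(n\log n)$ steps w.h.p., which is where the $\log n$ comes from. The invariant is being $\ell$-leading: in every round $k\ge\ell$ the token has depth at most $D-k+2\ell$ or is inactive. \Cref{lemma:leading_induction} takes any set $\calL$ of $\ell$ tokens that are $(\ell-1)$-leading and produces one more $\ell$-leading token. This token is chosen among the $\succ$-minimal tokens of each type outside $\calL$. The stuck tokens described above are handled by the fact that permanently blocked tokens are always of the majority type. Hence a $\succ$-minimal $\btype$-token that fails to be $\ell$-leading must be stuck at the root. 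It is then unique and has the largest $\kappa$, and no token, from $\calL$ or outside it, can cancel it in time, which is a contradiction. What your proposal is missing is an argument of this kind, one that picks the next token so that it cannot be stranded behind a blocked token.
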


Together with the orientation protocol from \Cref{coro:orientation}, this immediately implies the following result stated in \Cref{sec:contributions}.

\majub*

\subsection{The exact majority algorithm} \label{sec:maj_algorithm}

We set out to prove \Cref{thm:majority_upper_bound}.
We now assume that $G$ is an oriented tree with all edges pointing towards a unique root.
In the protocol, each node $v$ stores state variables $\token(v) \in \{ \atype, \btype, \ctype \}$ and  $\out(v) \in \{\atype, \btype\}$.
Here, $\token(v)$ is interpreted as a token, which traverses the tree.
The token $\atype$ corresponds to input 0 and $\btype$ corresponds to input 1.

The protocol is self-stabilising in the following sense: no matter what initial configuration the adversary chooses, the protocol will reach consensus almost surely. The consensus value will be~$\atype$ if $\token_0(\cdot)$ has a strict majority of $\atype$ at time~$0$, and $\btype$ if there is a strict majority of $\btype$. Otherwise, the consensus value can be arbitrary.

We note that prior work has studied (in complete graphs) a stronger variant of self-stabilising exact majority with \emph{fixed inputs} that cannot be affected by transient faults~\cite{kanaya2025time}. However, it is easy to see that this variant with fixed inputs cannot be solved self-stabilisingly on trees, by the same argument as for leader~election.

\subparagraph{The majority algorithm for directed trees.}
The protocol is as follows. When an edge oriented from $u$ to $v$ is sampled, the nodes $u$ and $v$ perform the following in order:
\begin{enumerate}%
  \item If $\{ \token(u), \token(v)\} = \{ \atype, \btype\}$, then both set their tokens to $\ctype$.
  \item If  $\token(u) \in \{\atype,\btype\}$ and $\token(v) = \ctype$, then $u$ and $v$ swap tokens.
  \item If $\token(v) \neq \ctype$, then~$v$ sets $\out(v) \gets \token(v)$.
  \item Finally, node $u$ sets $\out(u) \gets \out(v)$.
\end{enumerate}

The first two rules implement a ``directed'' version of the annihilation dynamics recently studied by Rybicki et al.~\cite{rybicki2026space}. Rule (1) dictates that $\atype$ and $\btype$ annihilate each other when meeting, and Rule (2)
ensures that $\ctype$-tokens are pushed towards the leaves and other tokens are pushed towards the root.
We can show that the directed version performs much faster than the undirected version on trees.

The last two rules only handle output propagation:
once the protocol stabilises, the root will hold either an $\atype$- or a $\btype$-token corresponding to the majority input. Any other node will either hold a token of the same type as the root or a $\ctype$-token.
The rules (3)--(4) then propagate the token value held by the root towards all the children.

\subsection{The analysis for strict majority}

For the analysis, we consider a set $\Z$ of tokens with~$|\Z|=n$.
Specifically, for a token~$z \in \Z$, we write~$X_t(z) \in V$ to denote the position of~$z$ in the graph~$G$ at time~$t$, and~$M_t(z) \in \{\atype, \btype, \ctype\}$ to denote its type. There is always exactly one token on each node, i.e., $X_t : \Z \to V$ is a bijection.
Whenever an edge~$\{u,v\}$ is sampled to interact, the two tokens located on $u$ and $v$ change type (to~$\ctype$) if Rule~(1) of the algorithm is used, and they exchange their position if Rule~(2) is used.
Let~$r \in V$ be the root of~$G$, and for a token~$z \in \Z$, let
\begin{equation*}
  \depth_t(z) := \dis(X_t(z),r).
\end{equation*}
We assume, without loss of generality, that~$\atype$ is the majority species.
Rule (1) of the algorithm removes exactly one token of type~$\atype$ and one token of type~$\btype$; therefore, the difference between the number of tokens with type $\atype$ and $\btype$ remains the same throughout the execution.
In particular, $\atype$ always remains the majority species.
Importantly, this holds even when the tree is not properly oriented, so the \majalgo can be safely composed with any self-stabilising tree orientation~protocol.

\begin{definition} [Fair schedule]
  We say that a schedule~$(e_t)_{t \geq 0}$ is {\em fair} if it contains every edge of~$G$ infinitely often. Given a fair schedule~$(e_t)_{t \geq 0}$, we partition it into asynchronous {\em rounds} of the form~$\{t_k,\ldots,t_{k+1}-1\}$, where $(t_k)_{k \geq 0}$ is defined inductively as follows: $t_0 = 0$, and for every~$k \geq 0$, $t_{k+1}$ is the first time step when all edges of~$G$ have been sampled at least once after time~$t_{k}$.
\end{definition}

Note that the stochastic schedule is fair with probability 1.
For the rest of the analysis,
we fix the schedule~$(e_t)_{t \geq 0}$ and assume that it is fair.
Formally, fixing the entire schedule is necessary to determine whether the following properties hold at any given time.

\begin{definition} \label{def:blocked_cancelled_active}
  We say that~$z$ is {\em cancelled} at time~$t$ if $M_t(z) = \ctype$. We say that a token~$z \in \Z$ is {\em blocked} at time~$t$ if $M_t(z) \in \{\atype,\btype\}$ and for every time~$s\geq t$, $X_s(z) = X_t(z)$ and~$M_s(z) = M_t(z)$. (In other words, $z$ is blocked if it will never be cancelled and has already reached its final position in the tree.)
  Finally, we say that~$z$ is {\em active} at time~$t$ if it is neither blocked nor cancelled.
\end{definition}

We now state a few simple propositions that will be useful throughout the analysis.
\begin{lemma}[Basic observations] \label{lemma:basic_obs}
  For every~$z,z' \in \Z$ and every~$t > 0$:
  \begin{enumerate}%
    \item If~$z$ is not cancelled at time~$t$ then~$\depth_t(z) \leq \depth_{t-1}(z)$, otherwise~$\depth_t(z) \geq \depth_{t-1}(z)$. \label{item:direction}
    \item If~$z$ and~$z'$ are not cancelled at time~$t$ and there is a directed path from~$X_{t-1}(z)$ to~$X_{t-1}(z')$, then there is a directed path from~$X_t(z)$ to~$X_t(z')$. \label{item:path_persistency}
    \item If~$z$ is blocked at time~$t$, then~$M_s(z) = \atype$ for every~$s \geq 0$. \label{item:blocked_condition}
  \end{enumerate}
\end{lemma}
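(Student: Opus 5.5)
The plan is to verify the three items directly from the transition rules (1)--(2) of the \majalgo, applied to the tree that is oriented towards the root $r$. Rules (3)--(4) touch only $\out$ and can be ignored. In each step the interacting edge is oriented from a child $u$ to its parent $v$, so $\depth(v)=\depth(u)-1$. I would argue for one step $t-1\to t$ with a case analysis on which rule fires.

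\emph{Item~\ref{item:direction}.} A token changes position only through Rule (2), when the child $u$ holds an $\atype$/$\btype$-token and the parent $v$ holds a $\ctype$-token. The non-cancelled token then moves from $u$ to $v$, so its depth drops by one. The cancelled token moves from $v$ to $u$, so its depth grows by one. Rule (1) changes types but not positions. A token that is not cancelled at time $t$ was not cancelled at $t-1$, since $\ctype$ never turns back into $\atype$ or $\btype$. A token that is cancelled at time $t$ either was already $\ctype$, in which case it can only move downward via Rule (2), or became $\ctype$ via Rule (1) without moving. In every case the inequality holds.

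\emph{Item~\ref{item:path_persistency}.} Assume $z,z'$ are non-cancelled at time $t$, and hence also at $t-1$. A directed path from $X_{t-1}(z)$ to $X_{t-1}(z')$ means $X_{t-1}(z')$ is an ancestor of $X_{t-1}(z)$ (or equals it). Only a Rule (2) swap can move either token, and it moves one of them one step up to its parent. I need to check two cases.
\begin{itemize}
\item If $z'$ moves, its new position is the parent of its old position. That parent is still an ancestor of $X_{t-1}(z)$; note $z'\neq z$ since both hold distinct positions under a bijection, and I would treat $z=z'$ separately as trivial.
\item If $z$ moves up to its parent $p$, then $p$ held a $\ctype$-token at time $t-1$. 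So $p\neq X_{t-1}(z')$, and therefore $p$ lies on the path strictly below $X_{t-1}(z')$, which thus remains an ancestor of $p$.
\end{itemize}
Both tokens cannot move in the same step, because a single interaction moves only one non-cancelled token.

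\emph{Item~\ref{item:blocked_condition}.} This is the one real step and uses the fixed fair schedule. Suppose $z$ is blocked but $M_s(z)=\btype$ for some $s$. Types never change from $\ctype$ back to $\atype$ or $\btype$, and $\atype\leftrightarrow\btype$ is impossible, so $M_s(z)=\btype$ for all $s$, and $z$ is never cancelled. The difference $\#\atype-\#\btype>0$ is invariant. So at every time there is at least one $\atype$-token, and since $\atype$-tokens are only ever cancelled together with a $\btype$-token, some $\atype$-token is never cancelled.
\begin{itemize}
\item The key claim is that eventually the uncancelled tokens occupy an upward-closed set, i.e., a subtree containing $r$ in which every node's parent also holds an uncancelled token. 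To show this, use the depth potential $\sum_z \depth(z)$ over non-cancelled tokens: it is non-increasing by item~\ref{item:direction}, and it strictly decreases in each round in which some $\ctype$ sits above an uncancelled token. Fairness then forces the configuration to eventually stop changing positions.
\item From that point, consider the unique path from $z$'s final position to $r$ together with the never-cancelled $\atype$-token. All nodes on the path hold uncancelled tokens, so somewhere along it, or at the meeting point of the two tokens' root paths, an $\atype$ and a $\btype$ token are adjacent in a child--parent pair. Fairness samples that edge and Rule (1) fires, which contradicts that neither token is ever cancelled.
\end{itemize}
The main obstacle is making this stabilisation argument rigorous. If it becomes messy, I would instead argue directly that $z$ is blocked only if its parent eventually holds a permanent non-cancelled token of the same type. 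Inductively up to $r$, every token on its root path then has type $\btype$ forever, and similarly every token on the never-cancelled $\atype$-token's path has type $\atype$. Both paths reach $r$, which is a contradiction.
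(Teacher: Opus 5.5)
Your proposal is correct and follows essentially the same route as the paper. Items 1 and 2 come from direct inspection of Rules (1)--(2). For item 3, in the final configuration no $\atype$--$\btype$ pair can be adjacent and no $\ctype$-token can sit above an uncancelled token, so the root paths of a surviving $\btype$-token and a surviving $\atype$-token would force two different types at the root. Your potential $\sum \depth$ over uncancelled tokens, which strictly drops at every swap or cancellation, also makes explicit that a final configuration exists under a fair schedule, which the paper's proof assumes without comment.
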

\begin{proof}
  By construction of the \majalgo, $\ctype$-tokens can only move away from the root, and $\atype$- and $\btype$-tokens can only move towards the root, hence claim~\ref{item:direction} holds.

  Now, consider the case that $z$ and~$z'$ are not cancelled at time~$t$ and there is a directed path from~$X_{t-1}(z)$ to~$X_{t-1}(z')$. Then by claim~\ref{item:direction}, both tokens may only move towards the root, which establishes claim~\ref{item:path_persistency}.

  For claim~\ref{item:blocked_condition}, first note that in the final configuration, tokens of type~$\atype$ and $\btype$ cannot be adjacent to each other; otherwise the corresponding interaction would lead to a different configuration. For the same reason, they cannot have a parent holding a $\ctype$-token. In particular, this implies that the parent of a $\btype$-token must also hold a $\btype$-token, and the child of a $\ctype$-token must also hold a $\ctype$-token.
  Assume, for the sake of contradiction, that there is at least one $\btype$-token in the final configuration. By induction, the observation above implies that all nodes on the path from any such token to the root also hold a $\btype$-token. In other words, nodes with a $\btype$-token form a unique connected component containing the root. The observation above also implies that all other nodes hold $\ctype$-tokens, i.e., there are no more $\atype$-tokens in the tree. Since $\atype$-tokens are initially a majority, and since our algorithm proceeds only in one-to-one cancellations, this leads to a contradiction and concludes the proof of claim~\ref{item:blocked_condition}.
\end{proof}

\begin{definition}
  For~$z \neq z' \in \Z$, we write~$z \succ z'$ if there is a time~$t \geq 0$ such that $z$ and~$z'$ are not cancelled at time~$t$, and there is a directed path from~$X_t(z)$ to~$X_t(z')$.
\end{definition}

\begin{lemma} \label{lemma:acyclic}
  The relation~$\succ$ is acyclic: there is no subset~$\{ z_1,\ldots,z_\ell \} \subseteq \Z$ with~$\ell>1$ such that $z_1 \succ z_2 \succ \ldots \succ z_\ell \succ z_1$. In particular, $\succ$ is asymmetric.
\end{lemma}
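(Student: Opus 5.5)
The plan is to move every pair relation in a hypothetical cycle to one common time, and then get a contradiction from depths around the cycle. Two facts about the \majalgo do most of the work.

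First, cancellation is permanent. Rule~(1) only turns $\atype$/$\btype$-tokens into $\ctype$-tokens. Rule~(2) only swaps positions. Rules~(3)--(4) touch only $\out$. So no rule turns a $\ctype$-token back into $\atype$ or $\btype$, and if $z$ is uncancelled at time $s$, it is uncancelled at every time $s' \le s$.

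Second, we need persistence of directed paths. If $z,z'$ are uncancelled throughout $[t,t']$ and there is a directed path from $X_t(z)$ to $X_t(z')$, then iterating \Cref{lemma:basic_obs}, claim~\ref{item:path_persistency}, step by step gives a directed path from $X_{t'}(z)$ to $X_{t'}(z')$.

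Now suppose for contradiction that $z_1 \succ z_2 \succ \dots \succ z_\ell \succ z_1$ with $\ell > 1$, and write $z_{\ell+1} := z_1$. For each $i$, pick a witness time $t_i$ at which $z_i$ and $z_{i+1}$ are both uncancelled and there is a directed path from $X_{t_i}(z_i)$ to $X_{t_i}(z_{i+1})$. Let $T := \max_i t_i$. Every token $z_j$ appears in some pair whose witness time is $t_j$ or $t_{j-1}$, and it is uncancelled there. By permanence of cancellation this alone only gives that $z_j$ is uncancelled up to that witness time, not up to $T$. I therefore do not take the maximum directly.

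The fix is to let $\tau$ be the first time at which some $z_j$ is cancelled, with $\tau = \infty$ if none ever is. Every witness time satisfies $t_i < \tau$, because both tokens of pair $i$ are uncancelled at $t_i$. Hence $T = \max_i t_i < \tau$, and all $z_j$ are uncancelled on the whole interval $[0,T]$. By persistence, each relation transfers from $t_i$ to $T$. At time $T$ we therefore have directed paths $X_T(z_1) \to X_T(z_2) \to \dots \to X_T(z_\ell) \to X_T(z_1)$.

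Since the tree is oriented towards $r$, a directed path from $x$ to $y$ means that $y$ is an ancestor of $x$, so $\depth(y) \le \depth(x)$, with equality only if $x = y$. Because $X_T$ is a bijection and the $z_j$ are distinct, consecutive positions are distinct, so the depth strictly decreases along each step of the cycle. This gives $\depth_T(z_1) > \depth_T(z_1)$, a contradiction. Taking $\ell = 2$ gives asymmetry.

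The only delicate step is the choice of a common time. Naively taking $\max_i t_i$ fails unless one first argues, via permanence of cancellation and the first cancellation time $\tau$, that every token of the cycle is still uncancelled there. I expect that to be the main point to get right.
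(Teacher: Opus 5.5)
Your proof has a gap at the sentence ``Every witness time satisfies $t_i < \tau$, because both tokens of pair $i$ are uncancelled at $t_i$.'' The witness $t_i$ only guarantees that $z_i$ and $z_{i+1}$ are uncancelled at $t_i$. Your $\tau$, however, is the first cancellation time of \emph{any} token of the cycle.

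For $\ell = 2$ every pair contains both tokens, so there your argument is correct and does prove asymmetry. For $\ell \ge 3$, nothing you use excludes, say, $t_2, t_3 < t_1$ with $z_3$ cancelled strictly between $\max(t_2,t_3)$ and $t_1$. Then $\tau < T = t_1$, $z_3$ is already cancelled at $T$, and the relations $z_2 \succ z_3$ and $z_3 \succ z_1$ cannot be transported to $T$.

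This is not a minor technicality. Take a cycle of minimal length $\ell \ge 3$. There is \emph{no} time at which $z_i, z_{i+1}, z_{i+2}$ are all uncancelled with both directed paths present: concatenating the paths would give $z_i \succ z_{i+2}$ and hence a shorter cycle. So the relations of such a cycle are necessarily witnessed at incompatible times, and your reasoning cannot produce a single common time.

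What is missing is an argument that handles relations holding on different time intervals. The paper does the following:
\begin{itemize}
\item It takes a minimal cycle and lets $T_i$ be the first time there is a directed path from $X_t(z_i)$ to $X_t(z_{i+1})$, and $U_i$ the first time $z_i$ or $z_{i+1}$ is cancelled.
\item By path persistence (your second observation), the relation then holds on all of $\{T_i,\ldots,U_i-1\}$.
\item Minimality shows that consecutive intervals $\{T_i,\ldots,U_i-1\}$ and $\{T_{i+1},\ldots,U_{i+1}-1\}$ are disjoint. For $\ell=2$ an overlap would give a directed cycle in the tree.
\item Re-index so that $U_1$ is maximal. Disjointness forces $U_2 < U_1$, so $z_2$ is still uncancelled at $U_2$. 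Hence $z_3$ is cancelled at $U_2$, which gives $U_3 \le U_2$, and disjointness again gives $U_3 < U_2$.
\item Iterating around the cycle yields $U_1 = U_{\ell+1} < U_1$, a contradiction.
\end{itemize}

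Your ingredients are correct and reusable: permanence of cancellation, path persistence, and strict depth decrease along directed paths. But without this comparison of cancellation times, or an equivalent device, your proof only covers $\ell = 2$.
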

\begin{proof}
  Assume, for the sake of contradiction, that~$\succ$ is not acyclic. Consider a subset~$\{ z_1,\ldots,z_\ell \} \subseteq \Z$ of minimal size such that $z_1 \succ z_2 \succ \ldots \succ z_\ell \succ z_1$. In what follows, we treat~$1$ and~$(\ell+1)$ as a single index to simplify the notations, that is, we write $z_1 = z_{\ell+1}$.
  For~$i \in \{1,\ldots,\ell\}$, let
  \begin{align*}
    T_i &:= \inf \{t \geq 0 : \text{there is a directed path from~$X_t(z_i)$ to~$X_t(z_{i+1})$} \}, \\
    U_i &:= \inf \{t \geq 0 : \text{either token } z_i \text{ or } z_{i+1} \text{ is cancelled at time~$t$} \}.
  \end{align*}
  By definition, $z_i \succ z_{i+1}$ implies that $T_i < +\infty$ and $T_i < U_i$ (it might be the case that~$U_i = +\infty$).
  By \Cref{item:path_persistency} in \Cref{lemma:basic_obs}, we can show that for every~$t \in \{T_i,\ldots,U_i-1\}$, there is a directed path from~$X_t(z_i)$ to~$X_t(z_{i+1})$.
  We now show that there can be no overlap between two such consecutive intervals, i.e., for every~$i \in \{1,\ldots,\ell\}$,
  \begin{equation} \label{eq:no_overlap}
    \{T_i,\ldots,U_i-1\} \cap \{T_{i+1},\ldots,U_{i+1}-1\} = \emptyset.
  \end{equation}
  Assume, for the sake of contradiction, that there is~$i \leq \ell$ and~$t \in \{T_i,\ldots,U_i-1\} \cap \{T_{i+1},\ldots,U_{i+1}-1\}$. Then, there is a directed path from~$X_t(z_i)$ to~$X_t(z_{i+1})$, and a directed path from~$X_t(z_{i+1})$ to~$X_t(z_{i+2})$.
  If~$\ell = 2$, then~$z_{i+2} = z_i$ and hence there is a directed cycle in~$G$, which is a contradiction. Otherwise, note that since~$t < \min(U_i,U_{i+1})$, neither~$z_i$ nor~$z_{i+2}$ are cancelled at time~$t$, which implies that~$z_i \succ z_{i+2}$. In that case, $z_{i+1}$ can be removed from the cycle, which violates the minimality of~$\{ z_1,\ldots,z_\ell\}$ and concludes the proof of \Cref{eq:no_overlap}.

  Without loss of generality (up to re-indexing the tokens), we assume that~$U_1 = \sup_i U_i$.
  We show by induction that $U_{i+1} < U_i$ for every~$i \in \{1,\ldots,\ell\}$.
  By the assumption above, $U_2 \leq U_1$, and hence~$U_2 < U_1$ by \Cref{eq:no_overlap}.
  Now, assume that $U_{i+1} < U_i$ for a given~$i<\ell$. By definition, this implies that~$z_{i+1}$ is not cancelled at time~$U_{i+1}$, and therefore $z_{i+2}$ must be cancelled at time~$U_{i+1}$. In turn, this implies that~$U_{i+2} \leq U_{i+1}$, and by \Cref{eq:no_overlap}, that~$U_{i+2} < U_{i+1}$, which concludes the induction.
  Eventually, we obtain that~$U_1 > U_{\ell+1} = U_1$, which is a contradiction.%
\end{proof}

Now that we are equipped with the partial order~$\succ$ on the set of tokens, we informally describe the main idea behind the proof.
Consider the trajectory of some token~$z_0 \in \Z$, initially with type in~$\{\atype,\btype\}$. By construction of the algorithm, and as long as the token is active, either $z_0$ moves towards the root by one unit in every round; or there exists a round~$k$ and a token~$z_1$ of the same type as~$z_0$, where~$z_1$ is located directly ``below''~$z_0$ in round~$k$, thus preventing $z_0$ from making progress. In the latter case, we have that~$z_0 \succ z_1$, and we can consider $z_1$ instead of $z_0$. Since $\succ$ is acyclic, this process will eventually stop, and we can find a token~$z_i$ whose movement down the tree never gets obstructed by a token of the same type. Such a token must move at a constant speed until it stops being active or reaches the root, and we can therefore upper bound its depth in round~$k$ by~$D-k$.

We can repeat the reasoning with the remaining set of tokens. When doing so, we can more or less ignore $z_i$, knowing that if a token is located next to~$z_i$ in round~$k$, then its depth cannot be much larger than the depth of~$z_i$ in that round, e.g., at most~$D-k+2$.
We refer to tokens whose depth can be upper bounded by~$D-k+2\ell$ as ``$\ell$-leading'', which is defined precisely as follows.

\begin{definition}
  We say that a token~$z \in \Z$ is {\em $\ell$-leading} if for every round~$k \geq \ell$,
  \begin{equation} \label{eq:moving_down}
    \depth_{t_k}(z) \leq D - k + 2\ell, \quad \text{or \quad $z$ is not active at time~$t_k$}.
  \end{equation}
\end{definition}

The next lemma, whose proof is deferred to \Cref{sec:proof_of_induction_step}, formalises the informal explanation~above.

\begin{lemma} \label{lemma:leading_induction}
  For every~$\calL \subset \Z$ of size~$\ell$ such that all tokens in~$\calL$ are $(\ell-1)$-leading, there exists~$z \in \Z \setminus \calL$ such that~$z$ is~$\ell$-leading.
\end{lemma}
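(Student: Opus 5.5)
The plan is to build a chain of tokens $z_0 \succ z_1 \succ z_2 \succ \cdots$ in $\Z \setminus \calL$. Each token in the chain is the one that obstructs its predecessor. The chain is finite by \Cref{lemma:acyclic}, since $\Z$ is finite, and I will show its last element is $\ell$-leading.

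The first step is a one-round progress claim. Fix a round $k \ge \ell$ and a token $z$ that is active at the start $t_{k+1}$ of the next round and is not cancelled during round $k$. Let $w$ be its position and $p$ its parent. The edge $\{w,p\}$ is sampled at some point during the round. If $z$ has not moved closer to the root by that time, then by Rule (2) the token $z'$ currently at $p$ is not $\ctype$. By Rule (1), $z'$ must also have the same type as $z$. So $z'$ is of the same type, not cancelled, and directly above $z$. This gives $z' \succ z$ if $z'$ is the upper one. For the blocking chain I want the reverse orientation, that is, $z$ is obstructed by the token above it, and I reindex so that the relation used is the one defined, i.e.\ a directed path from the lower token to the upper token.

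So the dichotomy is as follows. In each round, an active uncancelled $z$ either decreases its depth by at least one, or some same-type uncancelled token $z'$ with $z \succ z'$ sits directly above it at some time in the round.

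The second step starts the chain. I begin with a token $z_0 \notin \calL$ that is active at some time, or pick any token if none is active, in which case the statement is trivial. Suppose $z_j$ fails to be $\ell$-leading, and take the first round $k$ at which \eqref{eq:moving_down} fails. Going backwards, there is a round where $z_j$ did not progress while active. The dichotomy then gives $z_{j+1}$ with $z_j \succ z_{j+1}$. If $z_{j+1} \in \calL$, then $z_{j+1}$ is $(\ell-1)$-leading, so its depth at that time is at most $D - k' + 2(\ell-1)$, up to the fact that the blocking time lies inside round $k'$. Since $z_j$ is directly below it, the depth of $z_j$ is at most $D - k' + 2\ell - 1$.

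From this point $z_j$ can only move up or stay obstructed. I will argue by induction on rounds that its depth stays within $D - k + 2\ell$: at each later blocking, the blocker is either in $\calL$, giving the same $+2$ slack, or not in $\calL$. So I choose the chain to follow blockers outside $\calL$.

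The main obstacle is making this bookkeeping precise. Concretely, I will define $z_{j+1}$ as a blocker outside $\calL$ at the last round where $z_j$ violates the bound. I will also show that if every blocker of $z_j$ after some round lies in $\calL$ or does not exist, then $z_j$ is $\ell$-leading. The $+2$ slack per level absorbs two effects: the blocker in $\calL$ is one level higher, and the blocking time sits inside a round rather than at $t_k$. The parity or off-by-one in rounds is the delicate part. Once this is settled, acyclicity of $\succ$ guarantees the chain terminates at some $z \in \Z \setminus \calL$ that is $\ell$-leading.
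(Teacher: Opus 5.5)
Your first step is sound and is also the paper's starting point. In the round $\kappa$ preceding the first violation, $z$ does not move. So when its outgoing edge is sampled, it sits directly below an uncancelled token $w$ of its own type, and $z\succ w$ (the lower token dominates; your first formulation had the orientation backwards). Following such blockers outside $\calL$ is close in spirit to the paper's restriction to $\succ$-minimal tokens.

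\textbf{The gap is at the end of the chain.} If the blocker $w$ lies in $\calL$, the $(\ell-1)$-leading property bounds the depth of $w$ only when $w$ is active. Here $\depth_{t_\kappa}(w)\ge D-\kappa+2\ell-1>D-\kappa+2(\ell-1)$, so $w$ is inactive, i.e.\ blocked forever. It gives no ``$+2$ slack'', only a permanent obstruction. Then $z$ is frozen at a fixed depth, yet it can stay active (neither blocked nor cancelled) until some token of the opposite type climbs up and cancels it. Nothing in your chain controls when that happens, and a token at the root, which has no blocker, is in the same situation. Hence your planned claim that a token whose blockers all lie in $\calL$ or do not exist is $\ell$-leading is false. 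An induction keeping its depth below $D-k+2\ell$ cannot work either, since that bound drops every round while the depth does not.

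\textbf{A concrete instance.} Take a path on $11$ nodes oriented towards the root (so $D=10$), with $\atype$-tokens at depths $0,\dots,5$ and $\btype$-tokens at depths $6,\dots,10$. Let $\calL=\{w\}$ for the root token $w$, which is blocked from time $0$ and hence $0$-leading. In each round, first sample the edges above the front-most $\btype$-token top-down, then all remaining edges bottom-up. The $\btype$-tokens congest, and the $\atype$-token $z$ at depth $1$ is cancelled only in round $12$. So $z$ is active at $t_{12}$ with $\depth_{t_{12}}(z)=1>D-12+2$. It is not $1$-leading although its only blocker is $w\in\calL$, and your chain started at $z$ stops at $z$.

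\textbf{What is missing} is the paper's global argument by contradiction over both token types.
\begin{itemize}
\item Only majority-type tokens can ever be blocked (\Cref{lemma:basic_obs}, item~3). So a minimal $\btype$-token can never be frozen under an $\calL$-token. If it is not $\ell$-leading, it must sit at the root exactly at round $D+2\ell$, which makes it unique (this is $z^\star$).
\item A non-leading minimal $\atype$-token $z$ is frozen at positive depth under a blocked $\calL$-token. This confines every $z'\succ z$ to its subtree, and it forces $\Z_\btype\neq\emptyset$.
\item Assume no token of $\Z_\atype^\star\cup\Z_\btype^\star$ is $\ell$-leading. Then the token that must eventually cancel $z^\star$ can come neither from $\Z\setminus\calL$ nor from $\calL$, a contradiction.
\end{itemize}
The paper never commits to following one token. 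It shows that the minimal tokens of both types cannot all fail simultaneously, which is exactly what a chain from an arbitrary active token cannot guarantee, as the example shows.
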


Using \Cref{lemma:leading_induction}, we obtain an upper bound on the number of rounds needed to stabilise.

\begin{lemma} \label{thm:majority_main}
  After~$4n$ rounds, the algorithm has stabilised and every node outputs $\atype$.
\end{lemma}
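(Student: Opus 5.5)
We apply \Cref{lemma:leading_induction} repeatedly, starting from the empty set. With $\calL_0 = \emptyset$ (vacuously all tokens are $(-1)$-leading), we obtain a token $z_1$ that is $1$-leading. Since being $(\ell-1)$-leading implies being $\ell$-leading (the bound $D-k+2\ell$ only grows with $\ell$, and the range $k \ge \ell$ only shrinks), the set $\calL_1=\{z_1\}$ consists of $1$-leading tokens. Applying the lemma again gives $z_2 \notin \calL_1$ that is $2$-leading, and $\calL_2 = \{z_1,z_2\}$ is a set of $2$-leading tokens. Inductively, we get an ordering $z_1,\dots,z_n$ of all of $\Z$ in which every token is $n$-leading.

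Now take round $k \ge D + 2n + 1$; since $D \le n-1$, the choice $k = 3n$ suffices. For every token $z$, condition \eqref{eq:moving_down} gives $\depth_{t_k}(z) \le D-k+2n < 0$ unless $z$ is not active at time $t_k$. Depth is nonnegative, so every token is cancelled or blocked at time $t_k$. Blocked tokens never move or change type afterwards, and by \Cref{lemma:basic_obs} they are of type $\atype$. Cancelled tokens keep type $\ctype$ and can only be swapped with $\atype/\btype$ tokens, but none of those move anymore, so no interaction changes token positions or types after $t_{3n}$.

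The token configuration at time $t_{3n}$ is therefore final. Blocked $\atype$ tokens exist, since the $\atype$ majority persists under one-to-one cancellations. As in the proof of \Cref{lemma:basic_obs}, no $\atype$ token has a parent holding $\ctype$ (Rule (2) would fire). Hence the $\atype$ tokens form a connected set containing the root, and there are no $\btype$ tokens.

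For the outputs, Rule (3) fixes $\out=\atype$ on every $\atype$-holding node the next time any incident edge to its parent is sampled, which happens within one round. Rule (4) then copies outputs top-down: after $j$ further rounds, every node within distance $j$ of the root outputs $\atype$, because each round samples every edge, in particular the edge to the parent after the parent is already correct. The root outputs $\atype$ after one round via Rule (3), since it holds an $\atype$ token. Thus after $D+1 \le n$ more rounds, every node outputs $\atype$, and no rule can change this. In total this is at most $3n + n = 4n$ rounds.

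The main subtlety is the monotonicity of leading-ness needed to feed the induction; the rest is bookkeeping.
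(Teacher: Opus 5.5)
Your proof follows the paper's argument: iterate \Cref{lemma:leading_induction} to make every token $n$-leading, note that in round $3n$ the bound $D-3n+2n<0$ forces every token to be blocked or cancelled (so only $\atype$- and $\ctype$-tokens remain, with an $\atype$-token at the root), and then use at most $n$ further rounds to propagate the root's output. There are two small bookkeeping slips: for $|\calL|=j$ the lemma requires $(j-1)$-leading members and returns a $j$-leading token (so $z_1$ is in fact $0$-leading), so your invariant should be that $\calL_j$ consists of $(j-1)$-leading tokens, which is exactly what the lemma delivers and still gives that all tokens are $n$-leading; and Rule (3) acts on the parent endpoint of the sampled edge, so $\atype$-holding non-root nodes get their output through Rule (4), which your top-down induction already covers.
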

\begin{proof}
  Note that if a token is~$(\ell-1)$-leading, then it is also~$\ell$-leading.
  Using \Cref{lemma:leading_induction}, we can show by induction on~$\ell$ that all tokens in $\Z$ are~$n$-leading.
  Therefore, all tokens satisfy \Cref{eq:moving_down} with~$\ell = n$ in round~$3n$, which implies that none of them is active from round~$3n$ onwards; or in other words, the \majalgo has stabilised.

  By \Cref{item:blocked_condition} in \Cref{lemma:basic_obs}, this implies that all tokens have type in~$\{\atype,\ctype\}$ at that time.
  Since we assumed the number of~$\atype$-tokens to be strictly larger initially than the number of $\btype$-tokens, there must be at least one~$\atype$-token remaining.
  Thus, the token on the root must have type~$\atype$ in the final configuration.
  An additional~$D \leq n$ rounds are enough for the \majalgo to propagate the output of the root to every node, which concludes the proof of \Cref{thm:majority_main}.
\end{proof}

In the case of strict initial majority, the proof of \Cref{thm:majority_upper_bound} now follows from \Cref{thm:majority_main} by a standard analysis of the coupon collector problem: since each edge in the tree is sampled with probability~$1/(n-1)$, with high probability, each of the first~$4n$ rounds lasts at most $O(n \log n)$ time steps.

\subsection{\texorpdfstring{Proof of \Cref{lemma:leading_induction}}{Proof of Lemma 29}} \label{sec:proof_of_induction_step}

In this section, we assume that there exists a subset $\calL \subset \Z$ of size~$\ell$ such that all tokens in~$\calL$ are $(\ell-1)$-leading, and we show that we can find a token $z \in \Z \setminus \calL$ such that~$z$ is~$\ell$-leading.
Let~$\Z_\atype$ be the set of all tokens that do not belong to~$\calL$, and are initially of type~$\atype$, that is
\begin{equation*}
  \Z_\atype := \left\{ z \in \Z \setminus \calL :  M_0(z) = \atype \right\}.
\end{equation*}
Let~$\Z_\atype^\star \subseteq \Z_\atype$ be the set of tokens that are minimal for~$\succ$ in~$\Z_\atype$, that is,
\begin{equation*}
  \Z_\atype^\star := \left\{ z \in \Z_\atype : \text{there is no } z' \in \Z_\atype \text{ such that } z \succ z' \right\}.
\end{equation*}
We define~$\Z_\btype$ and~$\Z_\btype^\star$ similarly.
If~$\Z_\atype \cup \Z_\btype = \emptyset$, then all tokens in~$\Z \setminus \calL$ are initially of type~$\ctype$. In this case, they are all inactive from round~$0$, and therefore all~$\ell$-leading. Hence, from now on, we only consider the case that $\Z_\atype \cup \Z_\btype \neq \emptyset$.
Since~$\succ$ is acyclic by \Cref{lemma:acyclic}, $\Z_\atype \neq \emptyset \implies \Z_\atype^\star \neq \emptyset$ and $\Z_\btype \neq \emptyset \implies \Z_\btype^\star \neq \emptyset$.
With the assumption above, this implies that $\Z_\atype^\star \cup \Z_\btype^\star \neq \emptyset$.

We now give a characterisation of the tokens that belong to $\Z_\atype^\star \cup \Z_\btype^\star$, while not being $\ell$-leading.
Informally, for any such token~$z$, there must be a round~$\kappa(z)$ where either (i) $z$ is obstructed by a {\em blocked} token (in the sense of \Cref{def:blocked_cancelled_active}) of the same type, or (ii) $z$ has reached the root of the tree. Case (i) implies that~$z$ and all tokens that belong to the same subtree in round~$\kappa$ will never be able to move further down the tree. Moreover, if $z \in \Z_\btype^\star$, then only case (ii) is possible, as there can be no blocked token of type $\btype$.

\begin{claim} \label{claim:blocked_from_below}
  If $z \in \Z_\atype^\star \cup \Z_\btype^\star$ is not~$\ell$-leading, then there is a round~$\kappa = \kappa(z) \geq \ell$ such that:
  \begin{enumerate}%
    \item \label{item:depth} The token $z$ satisfies \Cref{eq:moving_down} up to round~$\kappa$ (included). Moreover, $z$ is active at time~$t_{\kappa+1}$,~and
      \begin{equation*}
        \depth_{t_{\kappa+1}}(z) = \depth_{t_\kappa}(z) = D-\kappa+2\ell.
      \end{equation*}
    \item \label{item:blocked} For every token~$z'$ such that ~$z' \succ z$,
      \[
        \min_{t \geq 0} \depth_t(z') \geq  \min_{t \geq 0} \depth_t(z) = D-\kappa+2\ell.
      \]
    \item If~$z \in \Z_\atype^\star$, then $|\Z_\btype|>0$. \label{item:cancel}
    \item If~$z \in \Z_\btype^\star$, then~$D-\kappa+2\ell = 0$. That is, $\kappa = D+2\ell$. \label{item:root}
  \end{enumerate}
\end{claim}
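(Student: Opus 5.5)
Let $z \in \Z_\atype^\star \cup \Z_\btype^\star$ be a token that is not $\ell$-leading. We define $\kappa$ as the first round $k \ge \ell$ at which \Cref{eq:moving_down} fails, and then shift back by one. Concretely, let $k^\ast \ge \ell$ be minimal such that $z$ is active at time $t_{k^\ast}$ and $\depth_{t_{k^\ast}}(z) > D-k^\ast+2\ell$, and set $\kappa := k^\ast-1$.

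We first rule out $k^\ast = \ell$, since $\depth \le D < D+\ell$. Hence $\kappa \ge \ell$, and \Cref{eq:moving_down} holds up to round $\kappa$. Since $z$ is active at $t_{\kappa+1}$, it is also active at $t_\kappa$: cancellation and blocking are permanent. So $\depth_{t_\kappa}(z) \le D-\kappa+2\ell$. Because $z$ is uncancelled, \Cref{lemma:basic_obs} shows its depth is non-increasing. Combining this with $\depth_{t_{\kappa+1}}(z) \ge D-\kappa+2\ell$ forces equality at both times. This gives item~\ref{item:depth}.

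The core of the argument is a one-round progress claim. Suppose $z$ is active and uncancelled throughout round $\kappa$ but makes no move. Look at its parent $p$ at time $t_\kappa$, and at the time $s$ when the edge from $X(z)$ to $p$ is sampled in that round. We case on the token $z'$ on $p$ at time $s$:
\begin{itemize}
\item $z'$ is a $\ctype$-token. Then they swap and $z$ moves, a contradiction.
\item $z'$ has the opposite type. Then $z$ is cancelled, a contradiction.
\item $z'$ has the same type. Then $z \succ z'$.
\item There is no parent, i.e.\ $z$ is at the root. Then $\depth = 0$.
\end{itemize}
In the same-type case, if $z' \notin \calL$ then $z' \in \Z_{\text{same}}$, which contradicts minimality of $z$. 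So $z' \in \calL$, and $z'$ is $(\ell-1)$-leading.

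Next we use the leading property of $z'$. Since $\depth(z') = \depth(z)-1 = D-\kappa+2\ell-1 > D-\kappa+2(\ell-1)$, the token $z'$ violates the depth bound. This violation must be reconciled with $z'$ being $(\ell-1)$-leading, which requires $z'$ to be inactive at $t_\kappa$ or $t_{\kappa+1}$. The only way this holds is for $z'$ to be blocked, since it is uncancelled at time $s$. Then $z'$ is of type $\atype$ by \Cref{lemma:basic_obs}, so $z \in \Z_\atype^\star$.

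This sub-step is the main obstacle. It needs care with the timing: $s$ lies inside round $\kappa$, while the leading property refers only to round boundaries, and $z'$ could have moved between them. I would handle this by using monotonicity of depths of uncancelled tokens, which is exactly why the slack of $2$ per level is built into the definition.

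From the case analysis we read off the remaining items:
\begin{itemize}
\item For $z \in \Z_\btype^\star$ no blocked $\btype$-token exists, so the root case holds. This gives $D-\kappa+2\ell = 0$, which is item~\ref{item:root}.
\item For $z \in \Z_\atype^\star$ with $z'$ blocked, $z$ itself is active and so must eventually be cancelled. A cancellation requires a $\btype$ partner. That partner is not in $\calL$ near $z$, or else it would have to be in $\Z_\btype$; so we argue via counting that $\Z_\btype \neq \emptyset$. The cleanest route: if all $\btype$-tokens lie in $\calL$, each is $(\ell-1)$-leading and hence eventually inactive, i.e.\ cancelled (there are no blocked $\btype$-tokens). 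The partner cancelling $z$ would then have to be in $\calL$, but such partners are constrained to be shallow. I would compare their depth bound $D-k+2\ell-2$ with $z$'s depth, which stays frozen since $z$ cannot pass the blocked $z'$, to get a contradiction. This gives item~\ref{item:cancel}.
\item For item~\ref{item:blocked}, the case of $z$ stuck behind a blocked $z'$ means $z$ never moves again. So $\min_t \depth_t(z) = D-\kappa+2\ell$.
\end{itemize}

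It remains to handle any $z'' \succ z$. At some time there is a directed path from $z''$ to $z$, with both uncancelled. By \Cref{lemma:basic_obs} item~\ref{item:path_persistency}, $z''$ stays behind $z$ for as long as both are uncancelled, so $\depth(z'') > \depth(z) \ge \min_t \depth_t(z)$ during that time. If $z$ is later cancelled, its position was frozen. Then $z''$ lies in the subtree rooted at the frozen position of $z$, and it cannot exit that subtree while the blocked $z'$ above seals it.
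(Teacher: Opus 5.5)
Your proposal is correct and follows essentially the same route as the paper. You take $\kappa$ to be the round just before \Cref{eq:moving_down} first fails, and use the parent edge sampled during round $\kappa$ to force a same-type token of $\calL$ directly ahead of $z$ that must be blocked, hence of type $\atype$, which gives item~4 for $\Z_\btype^\star$; items~2 and~3 then follow from the subtree below $z$ being sealed by that blocked token and from the canceller of $z$ being too deep to be $(\ell-1)$-leading. Your use of depth monotonicity to pass from the sampling time $s$ back to the round boundary $t_\kappa$ is exactly the step the paper leaves implicit, and the same canceller argument also covers the root case $D-\kappa+2\ell=0$ for item~3, which you should state explicitly.
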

\begin{proof}
  Assume that~$z \in \Z_\atype^\star \cup \Z_\btype^\star$ is not~$\ell$-leading.
  We have that
  \begin{equation*}
    \depth_{t_\ell}(z) \leq D \leq D - \ell + 2\ell,
  \end{equation*}
  so \Cref{eq:moving_down} holds trivially for~$z$ in round~$\ell$.
  Let~$\kappa \geq \ell$ be the last round (after round~$\ell$) in which \Cref{eq:moving_down} holds for~$z$.
  Since $z$ does not satisfy \Cref{eq:moving_down} in round~$\kappa+1$, $z$ is still active at time~$t_{\kappa+1}$~and
  \begin{equation*}
    D-(\kappa+1)+2\ell < \depth_{t_{\kappa+1}}(z) \leq \depth_{t_\kappa}(z) \leq D-\kappa+2\ell.
  \end{equation*}
  This implies that~$\depth_{t_{\kappa+1}}(z) = \depth_{t_\kappa}(z) = D-\kappa+2\ell$, which establishes \Cref{item:depth}.

  Now, consider the case that~$D-\kappa+2\ell > 0$. Let~$u = X_{t_\kappa}(z)$, and let~$(u,v) \in E$ be the unique outgoing edge from~$u$.
  By definition, there is a time~$t \in \{t_\kappa,\ldots,t_{\kappa+1}-1\}$ such that~$e_t = (u,v)$. Since~$z$ does not move and remains active during round~$\kappa$, there must be a token~$w$ with~$X_t(w) = v$ and~$M_t(w) = M_t(z)$ (otherwise, $z$ would move downward or become cancelled at time~$t$). By definition, $z \succ w$, hence $w \in \calL$ by the minimality of~$z$.
  We have
  \begin{equation*}
    \depth_t(w) = \depth_t(z)-1 = D-\kappa+2\ell-1 > D-\kappa+2(\ell-1),
  \end{equation*}
  so in order to satisfy \Cref{eq:moving_down}, $w$ cannot be active at time~$t$. Since we already know that~$w$ is not cancelled, this implies that~$w$ is blocked at time~$t$. Now consider the following cases:
  \begin{itemize}
    \item If~$z \in \Z_\btype^\star$, this yields a contradiction, since there can be no blocked tokens of type~$\btype$.
    \item Now, consider the case that~$z \in \Z_\atype^\star$. By the presence of~$w$, $z$ can never move down after time~$t_{\kappa+1}$, and hence, $\min_{t \geq 0} \depth_t(z) = D-\kappa+2\ell$.
      Now, let~$H$ be the subtree rooted at~$u$. By definition, any token~$z'$ located in~$H$ at time~$t_{\kappa+1}$ satisfies~$z' \succ z$. Moreover, it must remain in~$H$ forever after (by the presence of~$w$), so $\min_{t \geq 0} \depth_t(z') \geq D-\kappa+2\ell$.
      Note that for any token~$z'$ located outside of~$H$ at time~$t_{\kappa+1}$, there can never be a path from~$z'$ to~$z$, and hence we do not have~$z' \succ z$.
      Overall, this establishes~\Cref{item:blocked}.

      Since~$z$ is still active at time~$t_{\kappa+1}$ and will never move down afterwards, the only possibility is that~$z$ is cancelled (from above) at a later time. Therefore, there must exist a token~$z''$ with~$M_0(z'')=\btype$, $\depth_{t_{\kappa+1}}(z'') > D-\kappa+2\ell$, and such that~$z''$ is still active at time~$t_{\kappa+1}$. Note that $z''$ does not satisfy \Cref{eq:moving_down} in round~$\kappa+1$, so~$z'' \in \Z_\btype$, which establishes \Cref{item:cancel}.
  \end{itemize}
  The remaining case is that $D-\kappa+2\ell = 0$. \Cref{item:blocked,item:root} hold trivially in this case, and \Cref{item:cancel} follows from the same argument as above.
\end{proof}

In the remainder of the section, we assume for the sake of contradiction that none of the tokens in~$\Z_\atype^\star \cup \Z_\btype^\star$ are~$\ell$-leading. In particular, all of them must satisfy the properties listed in \Cref{claim:blocked_from_below}. In order to obtain a contradiction, we show that $\Z_\btype^\star$ is in fact a singleton.
\begin{claim} \label{claim:singleton}
  There is a unique element~$z^\star \in \Z_\btype^\star$. Moreover,
  \begin{equation} \label{eq:max_kappa}
    \max_{z \in \Z_\atype^\star} \kappa(z) < \kappa(z^\star) = D+2\ell.
  \end{equation}
\end{claim}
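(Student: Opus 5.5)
The plan is to first show that $\Z_\btype^\star \neq \emptyset$, then use \Cref{item:depth,item:root} of \Cref{claim:blocked_from_below} to pin every element of $\Z_\btype^\star$ to the root at one specific time, and finally use that two tokens cannot share a node. Throughout we work under the standing assumption that no token in $\Z_\atype^\star \cup \Z_\btype^\star$ is $\ell$-leading, so \Cref{claim:blocked_from_below} applies to each of them.

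\emph{Non-emptiness.} We already know that $\Z_\atype^\star \cup \Z_\btype^\star \neq \emptyset$.
\begin{itemize}
\item If $\Z_\atype^\star \neq \emptyset$, pick any $z \in \Z_\atype^\star$. \Cref{item:cancel} of \Cref{claim:blocked_from_below} gives $|\Z_\btype| > 0$, and acyclicity of $\succ$ (\Cref{lemma:acyclic}) then yields $\Z_\btype^\star \neq \emptyset$.
\item Otherwise $\Z_\btype^\star$ is non-empty directly.
\end{itemize}

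\emph{Uniqueness.} Take any $z \in \Z_\btype^\star$. By \Cref{item:root}, $\kappa(z) = D + 2\ell$. By \Cref{item:depth}, $z$ is active at time $t_{D+2\ell+1}$ and
\[
\depth_{t_{D+2\ell+1}}(z) = D - \kappa(z) + 2\ell = 0 .
\]
So $z$ sits on the root $r$ at time $t_{D+2\ell+1}$. This time is determined by the fixed schedule alone and is the same for every element of $\Z_\btype^\star$. Since $X_t$ is a bijection, at most one token occupies $r$ at that time. Hence $\Z_\btype^\star = \{z^\star\}$, and $\kappa(z^\star) = D + 2\ell$.

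\emph{The bound on $\kappa$ over $\Z_\atype^\star$.} Let $z \in \Z_\atype^\star$.
\begin{itemize}
\item Since depths are non-negative, \Cref{item:depth} gives $D - \kappa(z) + 2\ell \ge 0$, that is, $\kappa(z) \le D + 2\ell$.
\item Suppose equality holds. Then \Cref{item:depth} again places $z$, active and hence not cancelled, on the root at time $t_{D+2\ell+1}$.
\item But $z^\star$ is also on the root at that time, and $z \neq z^\star$ because they have different initial types. This contradicts $X_t$ being a bijection.
\end{itemize}
Therefore $\kappa(z) < D + 2\ell$ for every $z \in \Z_\atype^\star$. If $\Z_\atype^\star$ is empty, the maximum in \eqref{eq:max_kappa} is vacuous, and we use the convention $\max \emptyset = -\infty$.

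The only delicate point is making sure that all the tokens compared are located at the \emph{same} time index. This is exactly why \Cref{item:depth} is phrased at time $t_{\kappa+1}$: equal values of $\kappa$ give the same time, and the clash at the root follows immediately.
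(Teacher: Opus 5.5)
Your proof is correct and follows essentially the same route as the paper. You get non-emptiness of $\Z_\btype^\star$ from \Cref{item:cancel} and acyclicity, then use \Cref{item:depth,item:root} to place every element of $\Z_\btype^\star$, and any $z \in \Z_\atype^\star$ with $\kappa(z) = D+2\ell$, at the root at a common round boundary, contradicting that only one token can occupy the root. The only cosmetic difference is that you use time $t_{D+2\ell+1}$ where the paper uses $t_{D+2\ell}$; this is equivalent, since \Cref{item:depth} gives the same depth at both times.
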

\begin{proof}
  Recall that $\Z_\atype^\star \cup \Z_\btype^\star \neq \emptyset$. Moreover, by \Cref{item:cancel} in \Cref{claim:blocked_from_below}, $\Z_\atype^\star \neq \emptyset \implies \Z_\btype^\star \neq \emptyset$. Therefore, in all cases, $\Z_\btype^\star$ is non-empty.

  \Cref{item:depth,item:root} in \Cref{claim:blocked_from_below} implies that for every~$z \in \Z_\btype^\star$, $\kappa(z) = D+2\ell$ and $\depth_{t_{\kappa(z)}}(z) = 0$.
  Since there can be only one token at depth~$0$ at time~$t_{D+2\ell}$, this implies that~$|\Z_\btype^\star| = 1$.
  Let $z^\star$ be the unique element of $\Z_\btype^\star$; \Cref{eq:max_kappa} also follows from the observation that there can be only one token at depth~$0$ at time~$t_{D+2\ell}$ (and the fact that $\kappa(z)$ can never be strictly greater than $D+2\ell$, which would lead to a contradiction with \Cref{claim:blocked_from_below}).
\end{proof}

We are ready to conclude the proof of \Cref{lemma:leading_induction}.
Since~$z^\star$ is active at the beginning of round~$\kappa(z^\star)+1$ and has depth~$0$, $z^\star$ must be cancelled from above at a later time by some token~$z'$. However, observe that:
\begin{itemize}
  \item If~$z' \in \Z \setminus \calL$, then by \Cref{item:blocked} in \Cref{claim:blocked_from_below}, and by \Cref{claim:singleton}:
    \begin{equation*}
      \min_{t \geq 0} \depth_t(z') \geq D - \max_{z \in \Z_\atype^\star} \kappa(z) +2\ell > 0,
    \end{equation*}
    which yields a contradiction.
  \item If~$z' \in \calL$, let~$k$ be the round in which the cancellation happens.
    We know that~$k \geq \kappa(z^\star)+1$, since~$z^\star$ is still active at the beginning of round~$\kappa(z^\star)+1$.
    At that time, $z'$ is also active. Since~$z'$ satisfies \Cref{eq:moving_down}, we have
    \begin{equation*}
      \depth_{t_{\kappa(z^\star)+1}}(z') \leq D-(\kappa(z^\star)+1)+2\ell = -1,
    \end{equation*}
    which is a contradiction.
\end{itemize}
Thus, $\Z_\atype^\star \cup \Z_\btype^\star$ must contain at least one~$\ell$-leading token, which concludes the proof of \Cref{lemma:leading_induction}.

\subsection{Stabilisation in case of ties}\label{ssec:ties}

Above we assumed that there is a strict majority of $\atype$ tokens.
We show that the process reaches consensus even when there is a tie between $\atype$- and $\btype$-tokens.
This follows from a simple stochastic domination argument.
It suffices to consider the situation where there is at least one $\atype$- and $\btype$-token; otherwise, the bound follows from the fact that the output of the root is propagated to all nodes in $O(D)$ rounds.

Consider an oriented tree and an initial configuration $x_0$ with an equal number of $\atype$- and $\btype$-tokens.
Choose an arbitrary node with a $\btype$-token and change its token to $\ctype$ to obtain a configuration $x'_0$. Now there is a strict $\atype$-majority in $x'_0$.
Consider now the obvious coupling of the two executions $(x_t)_{t \ge 0}$ and $(x'_t)_{t \ge 0}$ using the same schedule.
Define $\sigma_t(v) = (\token_t(v),  \token_t'(v) )$ and
\[
  Z_t(v) =
  \begin{cases}
    0 & \text{if } \sigma_t(v) \in \{ (\atype, \atype), (\btype, \btype), (\ctype, \ctype)\} \\
    1 & \text{otherwise}.
  \end{cases}
\]
Here, $(\sigma_t)_{t \ge 0}$ gives the coupled process and $Z_t(\cdot)$ tracks the nodes where the two processes mismatch (in their token types).
We show that they can mismatch in at most one node at any point during the execution.

\begin{lemma}\label{lemma:coupling}
  For any $t \ge 0$, there exists exactly one $v_t \in V$ such that $Z_t(v_t) \neq 0$.
\end{lemma}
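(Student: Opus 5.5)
We argue by induction on $t$, keeping track of what the mismatched pair at the unique node $v_t$ can look like. At $t=0$ the configurations $x_0$ and $x'_0$ differ only at the node whose $\btype$-token was changed, so $v_0$ exists and $\sigma_0(v_0)=(\btype,\ctype)$. We strengthen the hypothesis to: there is exactly one mismatched node $v_t$, and $\sigma_t(v_t)\in\{(\btype,\ctype),(\ctype,\atype)\}$. Intuitively, the extra $\btype$-token in $x_t$ either still travels as a $\btype$ where $x'_t$ has a $\ctype$, or it has cancelled some $\atype$, which in $x'_t$ survives as an uncancelled $\atype$ where $x_t$ has a $\ctype$.

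For the inductive step, fix the sampled edge $e_t=(u,v)$, oriented from $u$ to $v$ in the fixed oriented tree. If $v_t\notin\{u,v\}$, both executions apply the same rule to identical local states, so nothing changes and the invariant persists. Only rules (1) and (2) affect tokens, so we need only a finite case analysis when $v_t\in\{u,v\}$. In that case the other endpoint carries the same token $y\in\{\atype,\btype,\ctype\}$ in both executions. We enumerate the two cases $v_t=u$ and $v_t=v$, each combined with the two allowed mismatch types and the three values of $y$, which gives twelve cases.

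In each case we check that exactly one node is mismatched afterwards and that its pair again lies in $\{(\btype,\ctype),(\ctype,\atype)\}$. Some representative cases illustrate the pattern:
\begin{itemize}
\item $u$ has $(\btype,\ctype)$ and $v$ has $\atype$. The first execution annihilates, giving $(\ctype,\ctype)$ at $u$ and $v$ holding $(\ctype,\atype)$. The second execution does nothing, so the mismatch moves to $v$ as $(\ctype,\atype)$.
\item $u$ has $(\btype,\ctype)$ and $v$ has $\ctype$. The first execution swaps, so the mismatch moves to $v$ as $(\btype,\ctype)$.
\item $u$ has $(\btype,\ctype)$ and $v$ has $\btype$. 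Neither execution changes, so the mismatch stays at $u$.
\item $u$ has $(\ctype,\atype)$ and $v$ has $\btype$. The second execution annihilates, giving $u=(\ctype,\ctype)$ and $v=(\btype,\ctype)$.
\item $u$ has $(\ctype,\atype)$ and $v$ has $\ctype$. The second execution swaps, so the mismatch moves to $v$ as $(\ctype,\atype)$.
\item The symmetric cases with $v_t=v$ behave analogously. For example, $v=(\btype,\ctype)$ with $u=\atype$ yields $u=(\ctype,\atype)$ and $v=(\ctype,\ctype)$. And $v=(\ctype,\atype)$ with $u=\btype$ yields $u=(\btype,\ctype)$ and $v=(\ctype,\ctype)$.
\end{itemize}

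The only real obstacle is making sure the strengthened invariant captures exactly the reachable mismatch types, and that no case produces two mismatches, for instance $(\atype,\btype)$ or a doubled mismatch after a swap. The invariant is also sanity-checked by token counts: $\#\atype-\#\btype$ is conserved in both executions and differs by exactly one between them, so every mismatch must account for a discrepancy of one. This case check assumes the orientation is fixed, which is the setting of this subsection since it analyses oriented trees.
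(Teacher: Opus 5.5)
Your proposal is correct and follows the paper's proof: induction on $t$, with a finite case check of the two mismatch types $(\btype,\ctype)$ and $(\ctype,\atype)$ (an invariant the paper uses only implicitly, through the rows of its table) against the three matched token types at the other endpoint. Because you treat both $v_t=u$ and $v_t=v$, you also cover a case that the paper's table leaves out: there, the mismatched node is always the head (parent) of the sampled edge, never the tail.

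One slip to fix: your two sample computations for $v_t=v$ are done as if the edge pointed from $v$ to $u$. With $e_t$ oriented from $u$ to $v$, rule~(2) fires in the non-annihilating copy. So $v=(\btype,\ctype)$, $u=\atype$ gives $u=(\ctype,\ctype)$, $v=(\ctype,\atype)$, and $v=(\ctype,\atype)$, $u=\btype$ gives $u=(\ctype,\ctype)$, $v=(\btype,\ctype)$. These match the paper's table and still preserve your invariant.
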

\begin{proof}
  We prove the claim by induction on $t$.
  The base case $t=0$ follows by construction of $x_0$ and $x_0'$.
  For the inductive step, note that if $v_t$ does not interact at time step $t+1$, then the claim follows with $v_{t+1} = v_t$.

  Hence, suppose that $v_t = v$ interacts at time step $t+1$ with node $u$. Note that by the induction hypothesis $Z_t(u) = 0$, that is, $\sigma_t(u)$ consists of a pair of tokens of the same type.
  The values of  $\sigma_{t+1}(v)$ and $\sigma_{t+1}(u)$ are given by the following table, where rows represent possible values of $\sigma_t(v)$ and columns the possible values of $\sigma_t(u)$:
  \[
    \begin{array}{c|cc|cc|cc}
      \sigma_t(v)\backslash \sigma_t(u)
      & \multicolumn{2}{c|}{(\atype,\atype)}
      & \multicolumn{2}{c|}{(\btype,\btype)}
      & \multicolumn{2}{c}{(\ctype,\ctype)}\\
      & u & v & u & v & u & v\\
      \hline
      (\btype,\ctype)
      & (\ctype,\ctype) & (\ctype,\atype) \star
      & (\btype,\ctype) \star & (\btype,\btype)
      & (\ctype,\ctype) & (\btype,\ctype) \star
      \\
      (\ctype,\atype)
      & (\ctype,\atype) \star & (\atype,\atype)
      & (\ctype,\ctype) & (\btype,\ctype) \star
      & (\ctype,\ctype) & (\ctype,\atype) \star
    \end{array}
  \]
  The cells marked by $\star$ denote the node where the discrepancy will be at the end of interaction $t+1$. It is easy to confirm by inspecting the table that only one of the nodes $u$ and $v$ will have the discrepancy at the end of the time step $t+1$.
  Hence, we have that $Z_{t+1}(w) \neq 0$ for exactly one $w \in e_{t+1}$, and $Z_{t+1}(w') =Z_t(w')=0$ for all $w' \notin e_{t+1}$ by the induction hypothesis.
\end{proof}

\begin{lemma}
  From any initial configuration with an equal number of $\atype$- and $\btype$-tokens,
  the algorithm has stabilised after $O(n)$ rounds.
\end{lemma}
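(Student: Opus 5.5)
We prove the lemma by coupling the tie execution with the strict-majority execution of the same schedule, as set up above. Let $x_0$ be the tie configuration and $x_0'$ the configuration obtained by turning one $\btype$-token into a $\ctype$-token, so $x_0'$ has a strict $\atype$-majority. By \Cref{thm:majority_main}, the primed execution $(x'_t)$ has stabilised after $4n$ rounds. After that point its tokens are frozen: all remaining tokens have type $\atype$ or $\ctype$, the $\atype$-tokens form a connected set containing the root, and no rule changes any token.

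By \Cref{lemma:coupling}, the two executions differ in the token type of exactly one node $v_t$ at each time. Moreover, the table in its proof shows that the discrepancy pair $\sigma_t(v_t)$ is always $(\btype,\ctype)$ or $(\ctype,\atype)$. Once $x'$ has frozen, the unprimed configuration equals the frozen one except at $v_t$. We then track the discrepancy as a single moving defect and show that within $O(n)$ further rounds it reaches a position where $x$ is also frozen. The relevant absorbing configurations are:
\begin{itemize}
\item a $(\ctype,\atype)$ discrepancy sitting in the $\atype$-region, meaning $x$ has a hole there;
\item or, more precisely, the all-$\ctype$ configuration, which is the expected tie outcome.
\end{itemize}

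The key step is a monotonicity argument read off the table.
\begin{itemize}
\item A $(\btype,\ctype)$ discrepancy behaves like a $\btype$-token in $x$. It only moves towards the root, swapping with $(\ctype,\ctype)$ parents, until it meets an $(\atype,\atype)$ node. They then annihilate, producing a $(\ctype,\atype)$ discrepancy at the $\atype$ node.
\item A $(\ctype,\atype)$ discrepancy is a hole in $x$'s $\atype$-region. It only moves away from the root: its child's $\atype$ moves up into it, and the discrepancy moves to the child. Alternatively, a $\btype$ arriving from below turns it back into a $(\btype,\ctype)$ discrepancy, which can only happen while $x'$ still has $\btype$-tokens, i.e.\ not after $x'$ froze.
\end{itemize}
So after $x'$ freezes, the discrepancy moves monotonically: up while it is $\btype$-like, at most one switch, then down while it is $\ctype$-like. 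Each phase moves at least one step per round unless it stops, giving at most $2D\le 2n$ rounds. When it stops, no interaction changes $x$, so $x$ has stabilised. An additional $D$ rounds suffice for output propagation from the root, as in the proof of \Cref{thm:majority_main}; in a tie, any consensus value is acceptable.

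The main obstacle is making the freezing claim rigorous. Specifically, we must check that when the hole can no longer move (it reaches a node whose children are all $\ctype$ in $x'$) and a $\btype$ discrepancy can no longer move, no rule fires anywhere in $x$. This requires verifying from the frozen structure of $x'$ that every edge is inert in $x$ apart from those at $v_t$, and that the rows of the table yielding no change are exactly the stopping cases. We would argue this by case inspection using Rules (1)–(2) and the frozen-configuration characterisation from \Cref{lemma:basic_obs}.
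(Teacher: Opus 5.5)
Your proof is correct, and it shares its skeleton with the paper's. Both couple $x_0$ with $x_0'$, use \Cref{thm:majority_main} to freeze the primed run after $O(n)$ rounds, use \Cref{lemma:coupling} to control the unprimed run, and then spend $O(D)$ rounds on clean-up plus $D$ rounds on output propagation. The two proofs differ in the clean-up.

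The paper takes from the coupling only that $x$ has at most one $\btype$-token once $x'$ has none. It then uses the invariance of $\#\atype-\#\btype$ to conclude that $x$ holds exactly one $\atype$- and one $\btype$-token. Neither can be blocked by a token of its own type, so they meet within $O(D)$ rounds.

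You instead track the type and motion of the discrepancy against the frozen $x'$. This gives an explicit $2D$ bound and an explicit freezing check, but it needs finer information than the table in the proof of \Cref{lemma:coupling} supplies. That table only lists interactions in which the discrepancy node $v$ is the parent. The moves of your $\btype$-like defect (swapping with a $\ctype$-parent, or annihilating with an $\atype$-parent and leaving a $(\ctype,\atype)$ defect there) occur with the opposite orientation. So they cannot be read off the table and must be checked directly from Rules (1)--(2); they do hold.

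Adding the paper's invariant to your argument also removes most of what you flag as the main obstacle. Since $\#\atype-\#\btype=1$ in $x'$, the frozen $x'$ has exactly one $\atype$-token, sitting at the root $r$. Hence your $\atype$-region is just $\{r\}$, the hole phase is vacuous, and your two absorbing cases are the same all-$\ctype$ configuration. All that remains is a single $\btype$-token climbing through $\ctype$-nodes to a child of $r$ and annihilating with the root's $\atype$-token within $D$ rounds.
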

\begin{proof}
  By \Cref{thm:majority_main}, we know that after $T \in O(n)$ rounds there are no more $\btype$-tokens in the process started from $x_0'$.
  By \Cref{lemma:coupling}, the process started at $x_0$ has at most one $\btype$-token after $T$ rounds.
  If there is one, then there is also one $\atype$-token. These tokens meet at the root in $O(D)$ rounds, as both can advance towards the root without being blocked.

  Hence, by $O(D + n)$ rounds there are only $\ctype$-tokens in the process started from $x_0$.
  An additional~$D \leq n$ rounds are enough for the algorithm to propagate the output of the root to every node. This gives a total bound of $O(n)$ rounds for the protocol to stabilise.
\end{proof}

Similar to the case of strict majority, the bound of $O(n)$ rounds implies a stabilisation time of $O(n^2 \log n)$ steps with high probability and in expectation in the case of ties as well.

\section{Time lower bounds for trees}\label{sec:lower}

In this section, we prove two time lower bounds for population protocols on trees.

\orientlb*

\lelb*

Technically, the lower bound for leader election given by Alistarh et al.~\cite{alistarh2025near} does not immediately work on trees, because their setup assumes that (a) \emph{all} the nodes of the input graph can be partitioned into sets with isomorphic radius-$\ell$ neighbourhoods and that (b) there are at least two such sets with disjoint radius-$\ell$ neighbourhoods.
Indeed, as leader election can be solved in $O(1)$ time in stars, even an $\Omega(n \log n)$ lower bound cannot in general be true for trees.
We refine these arguments to work in a larger class of graphs by proving a new, more general technical lower-bound lemma.

\subsection{A lower-bound lemma}\label{sec:lower-bound-lemma}

Before we state the lemma, we need to introduce some definitions and notation.

\subparagraph{Sets of influencers.}
We say that a node~$u$ {\em influences} a node~$v$ at time~$t$ if there exists a path~$(u = w_0,w_1,\ldots,w_k = v)$ from~$u$ to~$v$ and an increasing sequence of time steps~$s_0 < \cdots < s_{k-1} < t$ such that $e_{s_i} = \{ w_i,w_{i+1} \}$ for every~$i < k$.
For a set $U \subseteq V$, we write $I_t(U)$ to denote the set of nodes that influence at least one node of~$U$ at time~$t$. The set of influencers $I_t(v)$ gives all the nodes who can in principle influence the state of node $v$ at time $t$; this idea was introduced by Sudo~et~al.~\cite{sudo_leader_2020}. %

\subparagraph{Restrictions and conditional independence.}
For the graph $G=(V,E)$ and any node $v \in V$, we write $N_k(v) = \{ u \in V : \dist(u,v) \le k\}$ for the radius-$k$ neighbourhood of $v$. For a set $U \subseteq V$, we write $N_k(U) = \bigcup_{v \in U} N_k(v)$.
For $f \colon A \to B$ and a set $C \subseteq A$, we write $f_{\restriction C}$ for $f$ restricted to domain $C$.
For an event $\mathcal{E}$ with $\Pr(\mathcal{E})>0$ and two discrete random variables $X$ and~$Y$, recall that~$X$ and~$Y$ are {\em conditionally independent given $\mathcal{E}$} if for every~$x,y$ in their support, the variables satisfy
\begin{equation*}
  \Pr(X = x, Y=y \mid \mathcal{E}) = \Pr(X = x \mid \mathcal{E}) \cdot \Pr(Y = y \mid \mathcal{E}).
\end{equation*}
We are now ready to give the main technical
lemma for deriving our lower bounds.

\begin{lemma}%
  \label{lemma:lower-bound-lemma}
  Let $G = (V,E)$, $V_1, V_2 \subseteq V$ and $k>1$ such that $N_k(V_1) \cap N_k(V_2) = \emptyset$.
  Let $T \sim \mathrm{Poisson}(\lambda)$ for $\lambda > 0$.
  Let $X_0 \colon V \to \Lambda$ be the initial configuration.
  For~$i \in \{1,2\}$, let
  \begin{itemize}
    \item $\mathcal{E}_i$ be the event that $I_T(V_i) \subseteq N_{k-1}(V_i)$,
    \item $X_{i,t} \colon N_{k}(V_i) \to \Lambda$ such that $X_{i,t}(v) = X_t(v)$, and
    \item $Y_{i,t} \colon V_i \to \Lambda$ such that $Y_{i,t}(v) = X_t(v)$.
  \end{itemize}
  Suppose that $X_{1,0}$ and $X_{2,0}$ are independent.
  Then:
  \begin{enumerate}
    \item The variables $Y_{1,T}$ and $Y_{2,T}$ are independent conditioned on the event $\mathcal{E}_1 \cap \mathcal{E}_2$.

    \item If $\phi$ is an isomorphism from $G[N_k(V_1)]$ to $G[N_k(V_2)]$ with $\phi(V_1)=V_2$, and
      $X_{1,0} $ and $X_{2,0} \circ \phi$ are identically distributed,
      then the variables $Y_{1,T}$ and $Y_{2,T} \circ \phi$ are i.i.d.\ conditioned on the event $\mathcal{E}_1 \cap \mathcal{E}_2$.
  \end{enumerate}
\end{lemma}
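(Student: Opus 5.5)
The approach is Poissonization. Running the protocol for $T \sim \mathrm{Poisson}(\lambda)$ steps with each step choosing a uniform edge is equivalent, by the thinning property, to giving every edge $e \in E$ an independent Poisson process of rate $\lambda/m$ on $[0,1]$. The schedule is the superposition of these processes ordered by arrival time. Two consequences follow: the activation sequences restricted to disjoint edge sets are independent, and the local randomness $R$ used at each activation is independent of everything else. Let $E_i$ be the set of edges with both endpoints in $N_k(V_i)$. Since $N_k(V_1) \cap N_k(V_2) = \emptyset$, the sets $E_1$ and $E_2$ are disjoint.

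The key step is a locality statement. On the event $\mathcal{E}_i$, the final states $Y_{i,T}$ are a deterministic function $F_i$ of three things: $X_{i,0}$, the ordered activation sequence of edges in $E_i$ (with their local random bits), and nothing else. I would prove this in two parts.

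\emph{The event is local.} $\mathcal{E}_i$ is itself measurable with respect to the edge processes on $E_i$ alone. To decide whether some node outside $N_{k-1}(V_i)$ influences $V_i$, one looks at time-increasing paths ending in $V_i$. Any such path must first leave $N_{k-1}(V_i)$ through an edge between distance $k-1$ and distance $k$, and that edge lies in $E_i$. So the first exit point of a witnessing path can be detected from the $E_i$ activations only.

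\emph{The states are local.} Take a node $v \in V_i$ and trace its state backwards through the interactions that affect it. This traversal only ever visits influencers of $v$, which by $\mathcal{E}_i$ lie in $N_{k-1}(V_i)$. Every interaction it uses is an edge with an endpoint in $N_{k-1}(V_i)$, hence an edge in $E_i$. Interleavings with edges outside $E_i$ do not matter, because those edges touch no node whose state is relevant here.

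I expect the main obstacle to be stating this cleanly. The natural formulation is an induction over the time-ordered activations: on $\mathcal{E}_i$, the state at time $T$ of each node $w \in I_T(V_i)$, together with the times its relevant interactions occur, is determined by the $E_i$ data. Equivalently, one can run the protocol using only the $E_i$ edges and show the resulting states of $V_i$ coincide with the true ones on $\mathcal{E}_i$.

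Given locality, part 1 follows directly. Write $\Omega_i$ for the pair ($X_{i,0}$, the $E_i$ Poisson data with its local randomness). Then $\Omega_1$ and $\Omega_2$ are independent: the initial configurations are independent by hypothesis, the edge processes are independent because $E_1 \cap E_2 = \emptyset$, and the randomness is fresh. Hence
\[
\Pr(Y_{1,T}=y_1, Y_{2,T}=y_2, \mathcal{E}_1\cap\mathcal{E}_2) = \prod_{i} \Pr(F_i(\Omega_i)=y_i, \mathcal{E}_i).
\]
Dividing by $\Pr(\mathcal{E}_1\cap\mathcal{E}_2)=\Pr(\mathcal{E}_1)\Pr(\mathcal{E}_2)$ gives the product form of conditional independence.

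For part 2, the isomorphism $\phi$ maps $E_1$ bijectively onto $E_2$. Since all edges have the same rate $\lambda/m$, $\phi$ transports the law of the $E_1$ processes to that of the $E_2$ processes. The initial configurations match by the hypothesis that $X_{1,0}$ and $X_{2,0}\circ\phi$ are identically distributed. The protocol is anonymous, so $F_2$ composed with $\phi$ is the same function as $F_1$, and $\phi$ maps $\mathcal{E}_1$ to $\mathcal{E}_2$. Together with part 1, this gives that $Y_{1,T}$ and $Y_{2,T}\circ\phi$ are i.i.d.\ conditioned on $\mathcal{E}_1\cap\mathcal{E}_2$.
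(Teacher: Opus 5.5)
Your proposal is correct and follows essentially the same route as the paper. The paper splits $T$ into independent $T_0+T_1+T_2$ with subsequences $\sigma_0,\sigma_1,\sigma_2$, which is equivalent to your per-edge Poisson processes; it then notes that $\mathcal{E}_i$, and on $\mathcal{E}_i$ also $Y_{i,T}$, depend only on $(\sigma_i,R_i,X_{i,0})$, factorises the joint probability as you do, and transports along $\phi$ for part~2 (the paper attaches one uniform $R(u)$ per node where you attach random marks per activation, which is cosmetic, and your locality argument is more explicit than the paper's).
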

\begin{proof}
  Let $E_i \subseteq E$ be the set of edges in $G[N_k(V_i)]$ for $i \in \{1,2\}$. Define $E_0 = E \setminus( E_1 \cup E_2 )$.
  Let $p_i = |E_i|/m$ and $T_i \sim \operatorname{Poisson}(p_i \lambda)$ be independent Poisson random variables for $i \in \{0,1,2\}$.
  Since $E_0, E_1, E_2$ are a partition of $E$, we have that $p_0+p_1+p_2 = 1$ and %
  \[
    T = T_0 + T_1 + T_2,
  \]
  where $T \sim \operatorname{Poisson}(\lambda)$.
  Let $\sigma = (e_t)_{1 \le t \le T}$ be the schedule up to time $T$ and let $\sigma_i$ be the subsequence of $\sigma$ consisting of the  $T_i$ edges sampled from $E_i$.
  Since each edge in $\sigma$ is sampled i.i.d.\ from $E$ and $T_1$ and $T_2$ are independent random variables, we get that
  $\sigma_1$ and $\sigma_2$ are independent.
  (We could equivalently first independently sample each $\sigma_i$ and then randomly interleave $\sigma_0, \ldots, \sigma_2$ to obtain $\sigma$.)

  By definition, for $i \in \{1,2\}$, the event $\mathcal{E}_i$ defined as $I_T(V_i) \subseteq N_{k-1}(V_i)$  happens if and only if $\sigma$ does not contain as a subsequence edges on any path from $N_k(V_i) \setminus N_{k-1}(V_i)$ to $V_i$. Clearly, all such paths use only edges from $E_i$. Thus, for $i \in \{1,2\}$, the event $\mathcal{E}_i$ depends deterministically only on $\sigma_i$. Because $\sigma_1$ and $\sigma_2$ are independent, so are the events $\mathcal{E}_1$ and~$\mathcal{E}_2$.

  We assume without loss of generality that all the local randomness used by node $u$ throughout the execution is contained in a single uniform random variable $R(u) \in [0,1)$, where all $(R(u))_{u \in V}$ are i.i.d.\ and sampled independently of the schedule $(e_t)_{t \ge 1}$.
  Now the randomness in the execution of the protocol is entirely contained in $\sigma$, $R$, and $X_0$.
  For~$i \in \{1,2\}$, let $R_i = R_{\restriction N_k(V_i)}$ be the randomness used by nodes in $N_k(V_i)$.
  If $\mathcal{E}_i$ happens, then it is enough to  know the realisation of~$\sigma_i$, $R_i$ and~$X_{i,0}$ to infer the state of any node in~$V_i$ at time~$T$.
  In other words, we can find deterministic functions~$f_i,g_i$ such that
  \begin{equation} \label{eq:conditional_expression_of_X}
    Y_{i,T} =
    \begin{cases}
      f_i(\sigma_i,R_i,X_{i,0}) & \text{if $\mathcal{E}_i$ happens,} \\
      g_i(\sigma,R,X_0) & \text{otherwise.}
    \end{cases}
  \end{equation}
  Let $Z_i = (\sigma_i, R_i, X_{i,0})$.
  The joint distribution of $(Y_{1,T}, Y_{2,T})$ satisfies for all configurations~$x_1,x_2$ on~$V_1$ and~$V_2$ that
  \begin{align*}
    \Pr[Y_{1,T} = x_1, Y_{2,T} = x_2 \mid \mathcal{E}_1 \cap \mathcal{E}_2] &= \frac{\Pr[f_1(Z_1) = x_1, f_2(Z_2) = x_2, \mathcal{E}_1 \cap \mathcal{E}_2]}{\Pr[\mathcal{E}_1 \cap \mathcal{E}_2]} \\
    &= \frac{\Pr[f_1(Z_1) = x_1, \mathcal{E}_1]}{\Pr[\mathcal{E}_1]} \cdot \frac{\Pr[f_2(Z_2) = x_2, \mathcal{E}_2]}{\Pr[\mathcal{E}_2]} \\
    &= \Pr[f_1(Z_1) = x_1 \mid \mathcal{E}_1] \cdot \Pr[f_2(Z_2) = x_2 \mid \mathcal{E}_2] \\
    &= \Pr[Y_{1,T} = x_1 \mid \mathcal{E}_1] \cdot \Pr[Y_{2,T} = x_2 \mid \mathcal{E}_2],
  \end{align*}
  where for the second equality, we have used the independence between $Z_1, \mathcal{E}_1$ and $Z_2,\mathcal{E}_2$; this establishes the first claim of the lemma.

  For the second claim, suppose that $\phi$ is an isomorphism from $G[N_k(V_1)]$ to $G[N_k(V_2)]$ with $\phi(V_1) = V_2$
  and the initial configurations $X_{1,0}$ and $X_{2,0} \circ \phi$ are independent and identically distributed.
  Since $R_1$ and $R_2 \circ \phi$ are also i.i.d., \Cref{eq:conditional_expression_of_X} implies that $Y_{1,T}$ and $Y_{2,T} \circ \phi$ are identically distributed.%
\end{proof}

Recall that if the initial configuration is a deterministic constant function, then $X_{1,0}$ and $X_{2,0}$ in the above lemma are independent (in the formal sense).

\subsection{Lower bound construction}\label{sec:lowerbounds}

In this section, we describe a specific graph on which to apply \Cref{lemma:lower-bound-lemma}. We then establish tight lower bounds for stable tree orientation and leader election.
Given~$n,k \in \bbN$ such that~$\log n \leq k \leq n/8$, let~$T_{n,k}$ be the tree obtained by taking a path~$(v_0,\ldots,v_{8k-1})$ of length~$8k-1$, and attaching two balanced binary trees with~$(n-8k+1)/2$ nodes (not necessarily complete) on each side, i.e., to~$v_0$ and~$v_{8k-1}$ respectively.
The depth of each binary tree is bounded by~$\log((n-8k)/2) \leq \log n \leq k$, so~$\diam(T_{n,k}) \in \Theta(k)$, and~$T_{n,k}$ has~$n$ nodes by construction.

We partition the nodes of~$T_{n,k}$ as follows: for each~$i \in \{0,\ldots,7\}$, we denote by~$V_i = \{v_{ki},\ldots,v_{k(i+1)-1} \}$ a distinct section of the path, and let $U_1$ (resp.~$U_2$) be the set of nodes that belong to the leftmost (resp. rightmost) binary tree, as illustrated in \Cref{fig:illustration_Tnk}.

\begin{figure}[htbp]
  \centering
  \includegraphics[width=0.5\linewidth]{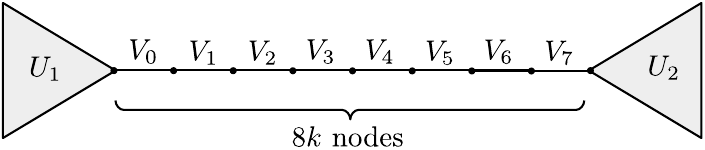}
  \caption{The tree~$T_{n,k}$ and the node partition.}
  \label{fig:illustration_Tnk}
\end{figure}

\begin{lemma} \label{lemma:zooming_in_Tnk}
  Let~$\theta = kn/2$ and $T \sim \mathrm{Poisson}(\theta/2)$. For any subset~$A \in \{V_0,\ldots,V_7\} \cup \{U_1, U_2\}$,
  \begin{equation*}
    \Pr[I_{T}(A) \subseteq N_{k-1}(A)] \geq 1-o(1).
  \end{equation*}
\end{lemma}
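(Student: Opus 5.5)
The plan is to reduce the event $I_T(A) \not\subseteq N_{k-1}(A)$ to the existence of a ``causal chain'' of length $k$ and then bound the probability of such a chain with a union bound over paths. If some node $w$ with $\dist(w,A) \ge k$ influences a node of $A$ by time $T$, then by definition there is a path $(w=w_0,\ldots,w_j)$ with $w_j \in A$ and increasing times $s_0<\cdots<s_{j-1}<T$ such that $e_{s_i}=\{w_i,w_{i+1}\}$. Since $G$ is a tree, we may take the path to be simple. We then keep only its last $k$ edges. This gives a simple path $P$ of exactly $k$ edges ending in $A$, whose edges appear in order, as a subsequence, among the first $T$ sampled edges. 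The number of such ordered paths is at most $|A|\cdot \Delta'^{\,k}$, where $\Delta' \le 3$ is the maximum degree of $T_{n,k}$. Moreover, $|A|\le n$. (For $A = V_i$ one has $|A| = k$, but that refinement is not needed.)

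For a fixed path $P=(f_1,\ldots,f_k)$, we will use Poissonisation. Conditioned on $T\sim\mathrm{Poisson}(\theta/2)$, the occurrences of each individual edge form independent Poisson processes of rate $1/m$ in ``step time'' on $[0,\theta/2]$. The event that $f_1,\ldots,f_k$ occur in this order is therefore the event that a sum of $k$ i.i.d.\ $\mathrm{Exp}(1/m)$ variables is at most $\theta/2$. Equivalently, with $N\sim\mathrm{Poisson}(\theta/(2m))$, it is the event $N\ge k$. Since $m=n-1$, the mean is $\mu=\theta/(2m)=kn/(4(n-1))\approx k/4$. A Chernoff bound for the Poisson distribution, $\Pr[N\ge k]\le e^{-\mu}(e\mu/k)^k$, gives roughly $(e/4)^k e^{-k/4} = \exp(-k(\ln 4 - 1 + 1/4)) \approx e^{-0.636k}$.

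The main obstacle is making this bound beat the union-bound factor $n\cdot 3^k$. The naive count does not work, because $3^k e^{-0.636k}$ grows. I would therefore refine the counting using the structure of $T_{n,k}$. For $A=V_i$, any causal path of length $k$ must leave $A$ along the path graph, so there are only two such paths, one in each direction: paths through the binary trees need $\ge k$ steps just to reach the tree from $V_i$, unless $A$ is adjacent to it, and inside the path every node has degree $2$. For $A=U_1$ or $U_2$, an influencer outside $N_{k-1}(U_j)$ must arrive through the single edge $\{v_0, \cdot\}$ (respectively $\{v_{8k-1},\cdot\}$). Hence the last $k$ edges of the chain, read toward $A$, are the unique path segment $v_{k-1},\ldots,v_0$ together with the attachment edge, and the branching inside $A$ is irrelevant. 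In all cases there are $O(1)$ candidate $k$-edge chains. The union bound then gives $O(e^{-ck})=o(1)$, because $k\ge\log n\to\infty$. Should $\mu$ be slightly too large for the Chernoff estimate, the slack in the constant ($\mu\approx k/4<k$) suffices, since any $\mu\le k/e$ works.
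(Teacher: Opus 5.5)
Your proposal is correct and follows essentially the paper's route. Both arguments reduce the failure event to one of $O(1)$ specific $k$-edge paths appearing in order as a subsequence of the schedule (e.g.\ $v_0\to v_k$ and $v_{3k-1}\to v_{2k-1}$ for $V_1$, or $v_{k-1},\ldots,v_0$ plus the attachment edge for $U_1$), and then take a union bound. The only difference is the tail estimate: you use exact Poissonisation to get a Poisson upper tail with mean $\approx k/4$ directly, whereas the paper bounds the lower tail of a sum of $k$ geometric variables at the fixed horizon $\theta$ via \Cref{lemma:sum-of-geometric} and then uses $T\le\theta$ with high probability.

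To make your counting airtight, two small fixes are needed:
\begin{enumerate}
  \item Truncate the chain at its first entry into $A$, not at its literal last $k$ edges, which may lie inside $A$.
  \item For $V_0$ (resp.\ $V_7$), observe that the attached binary tree has fewer than $n/2\le 2^{k-1}$ nodes and hence depth at most $k-2$. So $U_1\subseteq N_{k-1}(V_0)$, and no chain can enter from that side. The paper uses this implicitly as well.
\end{enumerate}
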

\begin{proof}
  We first prove the claim for~$V_1$. Note that~$N_k(V_1) = \{v_0,\ldots,v_{3k-1}\}$.
  The time~$\tau$ until the schedule contains a subsequence of edges forming a path from~$v_0$ to~$v_k$ is bounded by a sum of~$k$ geometric random variables with parameter~$1/m = 1/(n-1)$.
  Hence, by \Cref{lemma:sum-of-geometric}(b), we get
  \begin{equation*}
    \Pr[\tau \leq \theta] = \Pr[\tau \leq kn/2] \leq \exp\pa{-k \cdot (-\tfrac{1}{2} + \ln 2)} \in o(1).
  \end{equation*}
  The same holds for the path from~$v_{3k-1}$ to~$v_{2k-1}$. Therefore, by the union bound, we get
  \begin{equation*}
    \Pr \left[ I_\theta(V_1) \subseteq N_{k-1}(V_1) \right] \geq 1-o(1).
  \end{equation*}
  Moreover, $T \leq \theta$ with high probability (this follows immediately from a classical tail bound for Poisson random variables; see, for example, \cite[Theorem 5.4]{mitzenmacher_probability_2017}). We finally conclude the proof by using another union bound. The exact same arguments can be made symmetrically for every~$A \in \{V_2,\ldots,V_6\}$. For~$U_1$ (resp.~$V_0$), it is enough that the schedule does not contain a subsequence of edges forming a path from~$v_{k-1}$ to~$v_0$ (resp.~$v_{2k-1}$ to~$v_{k-1}$); and hence the same bound holds. Finally, the case of~$U_2$ (resp.~$V_7$) is symmetric to the case of~$U_1$ (resp.~$V_0$).
\end{proof}

\Cref{thm:orientation-lb} is now a direct consequence of the following result.
\begin{theorem} \label{thm:lower_bound_tree_orientation}
  Assume that the initial states of all nodes are i.i.d.
  Any stabilising tree orientation algorithm on $T_{n,k}$ requires $\Omega(kn)$ expected steps.
\end{theorem}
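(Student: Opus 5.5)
We argue by contradiction with Poissonisation. Suppose an algorithm stabilises within expected time $o(kn)$. By Markov's inequality, for $T\sim\mathrm{Poisson}(\theta/2)$ with $\theta=kn/2$, the configuration at time $T$ is stable and correctly oriented with probability $1-o(1)$. Here we use that $T\ge \theta/4$ with high probability and that stability persists once reached. We then reach a contradiction by exhibiting, with constant probability, two far-apart path sections whose local states are conditionally i.i.d.\ and must nonetheless be ``oriented in opposite directions''.

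First we choose the sets. Take $V_2$ and $V_5$, which are path sections with $N_k(V_2)\cap N_k(V_5)=\emptyset$; more precisely $N_k(V_2)=V_1\cup V_2\cup V_3$ and $N_k(V_5)=V_4\cup V_5\cup V_6$. Let $\phi$ be the shift isomorphism $v_j\mapsto v_{j+3k}$ between these induced paths, which maps $V_2$ to $V_5$. Since the initial states are i.i.d., $X_{1,0}$ and $X_{2,0}\circ\phi$ are independent and identically distributed. By \Cref{lemma:zooming_in_Tnk} and a union bound, $\Pr[\mathcal{E}_1\cap\mathcal{E}_2]\ge 1-o(1)$. Part~2 of \Cref{lemma:lower-bound-lemma} then gives that $Y_{1,T}$ and $Y_{2,T}\circ\phi$ are i.i.d.\ conditioned on $\mathcal{E}_1\cap\mathcal{E}_2$. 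This conditioning event still has probability $1-o(1)$, so the configuration is stable and correct with probability $1-o(1)$ under the conditioning as well.

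Next we extract a contradiction from the orientation. In a correct orientation there is a unique root $r$, and the path $(v_0,\dots,v_{8k-1})$ has edges pointing towards $r$'s projection on the path. The root lies either in $U_1\cup V_0\cup\dots\cup V_3$ (left half) or in $V_4\cup\dots\cup V_7\cup U_2$ (right half). We use the edges inside the sections, e.g.\ the edge $\{v_{2k},v_{2k+1}\}\subseteq V_2$ and its image $\{v_{5k},v_{5k+1}\}\subseteq V_5$. The direction of an edge is a deterministic function of the states of its two endpoints, via the output map $\mu$. If the root is in the left half, the edge in $V_5$ points left; if the root is in the right half, the edge in $V_2$ points right. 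A single node's output is only a parent pointer, so we read the orientation of an internal edge from the joint output of its endpoints within $V_i$.

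Let $L$ be the event that the edge in $V_2$ points left, and $L'$ the event that the $\phi$-corresponding edge in $V_5$ points left. Because the two sections are conditionally i.i.d., $\Pr[L\mid\cdot]=\Pr[L'\mid\cdot]=p$, and the two events are independent. Correctness forces ``both left'' whenever the root is to the left of $V_2$, ``both right'' whenever the root is to the right of $V_5$, and ``$V_2$ right, $V_5$ left'' whenever the root lies in $V_2\cup\dots\cup V_5$. In no case can we have $V_2$ left and $V_5$ right. Hence $p(1-p)\le o(1)$, so $p\in\{o(1),1-o(1)\}$. By the symmetry of $T_{n,k}$ under reflection $v_j\mapsto v_{8k-1-j}$, together with i.i.d.\ inputs, the probability that the root lies in the left half equals the probability that it lies in the right half, namely $1/2-o(1)$ each. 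If the root is in the left half, then $L'$ holds, which requires $p\ge 1/2-o(1)$; the right half analogously requires $1-p\ge 1/2-o(1)$. Together these contradict $p\in\{o(1),1-o(1)\}$.

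The main obstacle is making the reflection-symmetry step rigorous. The protocol's randomness and the initiator/responder choice must be invariant under the reflection, which holds because the edge-sampling distribution and the i.i.d.\ initial states are reflection-invariant and edge orientations are reflected accordingly. Alternatively, one can avoid symmetry by taking three sections $V_1,V_3,V_5$ (with pairwise disjoint $k$-neighbourhoods) and a pigeonhole argument on the pairs: all three are pairwise i.i.d., so their left-probabilities are equal, and whichever position the root takes, some ordered pair is forced to the pattern (right, left). This pattern has probability $p(1-p)$, so the same contradiction follows. I would try this three-section variant first, since it relies only on \Cref{lemma:lower-bound-lemma}.
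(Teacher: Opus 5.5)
Your main two-section argument is sound, but it gets the left/right balance from a different source than the paper does. The paper works with $V_1,V_4$ and applies part~2 of \Cref{lemma:lower-bound-lemma} twice: once with the shift $\phi$, and once with the local reflection $\phi'(v_i)=v_{6k-i-1}$ of the induced path $N_k(V_1)$ onto $N_k(V_4)$. Conditioned on $\mathcal{E}$, this makes the whole-section events $L_1,R_1,L_4,R_4$ equiprobable. A case split on $p+2q$, with $p$ the probability that a section contains the root, then gives non-stabilisation with probability at least $1/16$.

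You use only the shift from the lemma, read off one internal edge per section, and take the balance from a global reflection automorphism of $T_{n,k}$. Reading a single edge is a genuine simplification, since you never need the event that a section contains the root. The invariance of the dynamics under graph isomorphisms, which you flag as the main obstacle, is the assumption already built into part~2 of the lemma, so it costs nothing extra. The global reflection does cost generality. It needs the binary trees hung at $v_0$ and $v_{8k-1}$ to be isomorphic as rooted trees, which the definition of $T_{n,k}$ allows but does not enforce. The paper's $\phi'$, by contrast, only looks inside $N_k(V_1)\cup N_k(V_4)$, and it is the rigorous form of your symmetry step. If you switch to the local reflection between $N_k(V_2)$ and $N_k(V_5)$, the reflected image of your edge generally differs from its shifted image. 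So either take the edge fixed by the composite reflection of $V_2$ onto itself (the middle edge, for even $k$), or use whole-section events as the paper does.

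The three-section fallback you say you would try first does not work, for two independent reasons.

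First, $N_k(V_1)\cap N_k(V_3)=V_2$ and $N_k(V_3)\cap N_k(V_5)=V_4$, so \Cref{lemma:lower-bound-lemma} does not apply to these pairs. In fact, no three of $V_1,\ldots,V_6$ (the sections whose $k$-neighbourhoods are paths) have pairwise disjoint $k$-neighbourhoods.

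Second, even with three suitably separated edges, the pigeonhole claim is false. If the root lies to the left of all of them, every edge points left and no pair shows the pattern (right, left). The only forbidden pattern is (left, right), which yields $p(1-p)=o(1)$. Together with shift-invariance and pairwise independence, this is consistent with $p=1-o(1)$, i.e.\ every section pointing left towards a root in $U_1$. Shift isomorphisms alone cannot rule this out. Some reflection is needed, either the global one of your main argument or the local $\phi'$ used in the paper.
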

\begin{proof}
  By construction, $N_k(V_1)$ and $N_k(V_4)$ are disjoint.
  Let~$\phi,\phi' : N_k(V_1) \to N_k(V_4)$, where for every~$i \in \{0,\ldots,3k-1\}$, we define
  \begin{align*}
    \phi(v_i) &= v_{3k+i}, \\
    \phi'(v_i) &= v_{6k-i-1}.
  \end{align*}
  The restrictions of~$T_{n,k}$ to both $N_k(V_1)$ and $N_k(V_4)$ are paths of length~$3k-1$, and $\phi$, $\phi'$ are isomorphisms by construction.
  Let~$\theta = kn/2$, let $T \sim \mathrm{Poisson}(\theta/2)$ and let $\mathcal{E}$ be the event
  \begin{equation*}
    \mathcal{E} := \bigcap_{i\in \{1,4\}} \{ I_T(V_i) \subseteq N_{k-1}(V_i) \}.
  \end{equation*}
  By \Cref{lemma:zooming_in_Tnk} and the union bound, $\Pr[\mathcal{E}] \geq 1-o(1)$. From now on, we condition on this event.
  Given that the initial states of all nodes are i.i.d., we can apply \Cref{lemma:lower-bound-lemma} with either~$\phi$ or~$\phi'$: the configurations on~$V_1$ and~$V_4$ at time~$T$ are i.i.d. For $i \in \{1,4\}$, let $A_i$ be the event that $V_i$ is properly oriented and contains a root at time $T$, and let $p = \Pr[A_i \mid \mathcal{E}]$. By independence, we have
  \begin{equation} \label{eq:problem_1}
    \Pr[A_1 \cap A_4 \mid \mathcal{E}] = p^2.
  \end{equation}
  For $i\in \{1,4\}$, let~$R_i$ be the event that $V_i$ is consistently oriented from left to right (that is, each~$\{v_j,v_{j+1}\}$ is oriented from~$v_j$ to~$v_{j+1}$). Similarly, let~$L_i$ be the event that $V_i$ is consistently oriented from right to left.
  When applying \Cref{lemma:lower-bound-lemma} with~$\phi$, we obtain
  \begin{align*}
    \Pr[L_1 \mid \mathcal{E}] &= \Pr[L_4 \mid \mathcal{E}], \\
    \Pr[R_1 \mid \mathcal{E}] &= \Pr[R_4 \mid \mathcal{E}],
  \end{align*}
  and when applying \Cref{lemma:lower-bound-lemma} with~$\phi'$, we get
  \begin{equation*}
    \Pr[L_1 \mid \mathcal{E}] = \Pr[R_4 \mid \mathcal{E}].
  \end{equation*}
  In other words, conditioning on~$\mathcal{E}$, we have that~$L_1$, $R_1$, $L_4$ and $R_4$ all happen with the same probability~$q$.
  Moreover, by independence, it follows that
  \begin{equation} \label{eq:problem_2}
    \Pr[L_1 \cap R_4 \mid \mathcal{E}] = \Pr[L_1 \mid \mathcal{E}] \cdot \Pr[R_4 \mid \mathcal{E}] = q^2.
  \end{equation}
  If $p+2q \leq 3/4$, then $\Pr[\overline{A_1} \cap \overline{R_1} \cap \overline{L_1} \mid \mathcal{E}] \geq 1/4$.
  If this happens, $V_1$ does not contain a root and is not consistently oriented in either direction, and hence the protocol has not stabilised at time~$T$.
  Conversely, if $p+2q > 3/4$, then $\max(p,q) > 1/4$ and by \Cref{eq:problem_1,eq:problem_2}, this implies that either~$A_1 \cap A_4$ (if $p > 1/4$) or $L_1 \cap R_4$ (if $q > 1/4$) happens with probability at least $1/16$ (neither of which is compatible with stabilisation).
  \Cref{thm:lower_bound_tree_orientation} follows by observing that~$\Exp[T] \in \Theta(kn)$.
\end{proof}

Similarly, \Cref{thm:le-lb} is a direct consequence of the following result.
\begin{theorem} \label{thm:lower_bound_leader_election}
  Any stable leader election algorithm on $T_{n,k}$ requires $\Omega(kn)$ expected steps to stabilise.
\end{theorem}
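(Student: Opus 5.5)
We mirror the proof of \Cref{thm:lower_bound_tree_orientation}. The initial configuration of a leader election protocol is the constant configuration in which every node is in the same input state. The initial states are therefore deterministic, hence trivially independent and identically distributed across isomorphic regions. We work in $T_{n,k}$ with $\theta = kn/2$ and $T \sim \mathrm{Poisson}(\theta/2)$. For two disjoint regions we take the two outer binary trees $U_1$ and $U_2$, together with path sections $V_0,\ldots,V_7$.

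First we pick a collection of pairwise ``far'' regions that cover all of $V$, say $U_1\cup V_0$, $V_1,\dots,V_6$, and $V_7 \cup U_2$, or simply the ten sets of the partition. Let $\mathcal{E}$ be the intersection over all ten sets $A$ of the events $I_T(A)\subseteq N_{k-1}(A)$. By \Cref{lemma:zooming_in_Tnk} and a union bound over these constantly many sets, $\Pr[\mathcal{E}]\ge 1-o(1)$.

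The core argument uses the pair $A=U_1\cup V_0$ and $B=V_7\cup U_2$. These are mirror images under the reflection $\psi(v_i)=v_{8k-1-i}$, with the binary trees swapped; we choose them to be isomorphic, which is fine since the construction says ``balanced'' and we can take them identical. Moreover, $N_k(A)\cap N_k(B)=\emptyset$ because the path has length $8k-1$. Applying \Cref{lemma:lower-bound-lemma}(2), conditioned on the corresponding event, the restricted configurations on $A$ and $B$ at time $T$ are i.i.d. Let $p$ be the conditional probability that $A$ contains at least one leader. By the same lemma, applied with a shift isomorphism between $V_2$ and $V_5$ (whose $k$-neighbourhoods are disjoint paths), the events ``$V_2$ contains a leader'' and ``$V_5$ contains a leader'' are i.i.d. with some probability $p'$. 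Conditioning on all events $\mathcal{E}_i$ simultaneously remains valid because the schedule subsequences on disjoint edge sets are independent, as in the proof of \Cref{lemma:lower-bound-lemma}.

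The case analysis is as follows. If $p>1/4$, then both $A$ and $B$ contain a leader with probability greater than $1/16$, so there are two leaders and the configuration is not stable. If $p\le 1/4$ and $p'\le 1/4$, then with probability at least $1/2 \cdot 1/2$ neither $A$ nor $V_2$ has a leader. This alone does not yield ``no leader anywhere''. We therefore need the union of the far regions to be all of $V$ together with a uniform bound. To get it, we use the partition into the ten sets, grouped into at most two isomorphism classes beyond the symmetric pair. Each interior section $V_1,\dots,V_6$ has a $k$-neighbourhood that is a path of length $3k-1$, so all six are mutually isomorphic with the same probability $p'$. Pairwise-disjoint neighbourhoods hold for $V_1$ and $V_4$, for $V_2$ and $V_5$, and for $V_3$ and $V_6$. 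If $p'>1/4$, two leaders appear with probability at least $1/16$. Otherwise, each of the ten sets has conditional leader probability at most $1/4$, so by a union bound the expected number of sets containing a leader is at most $10/4$. This is not directly a contradiction, and it is the \textbf{main obstacle}.

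To overcome it, we instead choose the threshold $1/20$. If some class has probability greater than $1/20$, we get two simultaneous leaders with probability at least $1/400$. If every class is at most $1/20$, a union bound over the ten sets shows that no node is a leader with probability at least $1/2$. Either way, with constant probability the configuration at time $T$ is not a valid stable leader election output, since in a stable configuration exactly one leader is present. Since $\Pr[\mathcal{E}]\ge 1-o(1)$ and $\Exp[T]\in\Theta(kn)$, with constant probability the protocol has not stabilised by time $T$, and $T\ge \theta/4$ with high probability. This gives an expected stabilisation time of $\Omega(kn)$. Finally, $D=\Theta(k)$ yields \Cref{thm:le-lb}.
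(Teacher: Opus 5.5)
Your argument is essentially the paper's proof. Pair each region of the partition with a far-away isomorphic copy. Then either some region contains a leader with constant probability, so by \Cref{lemma:lower-bound-lemma} its copy independently does too and there are two leaders. Or every region is unlikely to contain one, and a union bound gives no leader at all. The differences are only bookkeeping. The paper takes the single most likely region $A^\star$ with threshold $1/11$ and pairs $(U_1,U_2)$ and $(V_0,V_7)$ separately. You work with isomorphism classes, threshold $1/20$, and merged end regions $U_1\cup V_0$, $V_7\cup U_2$.

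\textbf{The step that needs fixing.} You claim that conditioning on the intersection of all ten locality events keeps exact independence (and equal probabilities across classes), because sub-schedules on disjoint edge sets are independent. That justification does not hold here. The edge sets of $G[N_k(\cdot)]$ for neighbouring regions overlap: for instance $N_k(V_1)$, $N_k(V_2)$ and $N_k(U_1\cup V_0)$ share path edges. Moreover, the event for $V_2$ asks that there be no temporal path $v_k\to v_{2k}$. Such a path uses edges inside $G[N_k(U_1\cup V_0)]$ followed by the outside edge $\{v_{2k-1},v_{2k}\}$, so this event depends on the interleaving of both parts. Conditioning on all ten events therefore couples the schedules on $N_k(A)$ and $N_k(B)$ through the middle of the path. \Cref{lemma:lower-bound-lemma} only gives independence conditioned on the two events of the pair at hand.

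\textbf{The fix.} Use what the paper does. Work with unconditional probabilities, and condition each pair $(A,B)$ only on its own two events. Each such event has probability $1-o(1)$, so every conditional probability differs from the unconditional one by $o(1)$. Your thresholds then go through with constants such as $1/400-o(1)$ for two leaders and $1/2-o(1)$ for no leader. With this you also do not need all six interior sections to share one probability $p'$. That claim is not covered by the lemma for overlapping pairs like $V_1,V_2$ anyway. It suffices that each $V_i$ matches its disjoint partner $V_{i\pm 3}$, which the lemma gives directly.
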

\begin{proof}
  We assume that the input is the deterministic constant function $f(v)=1$.
  This implies that the initial states of all nodes are (formally) independent.
  Let~$\theta = kn/2$, let $T \sim \mathrm{Poisson}(\theta/2)$.
  Let~$A^\star$ be the subset of the partition most likely to contain at least one leader at time~$T$, that is,
  \begin{equation*}
    A^\star := \underset{A \in \{V_0,\ldots,V_7\} \cup \{U_1,U_2\}}{\mathrm{argmax}} \Pr[ A \text{ has a leader at time } T ],
  \end{equation*}
  and let~$p^\star$ be the corresponding probability.
  If~$p^\star < 1/11$, then with probability at least~$1-10p^\star > 1/11$, the graph contains no leader at time~$T$, in which case the algorithm has not yet stabilised. From now on, we only consider the case that~$p^\star \geq 1/11$.

  We pair the subsets of the partition as follows: $(U_1,  U_2)$, $(V_0, V_7)$, and for~$i \in \{1,2,3\}$, $(V_i, V_{3+i})$. It is easy to check that for any pair~$(A, B)$, the neighbourhoods $N_k(A)$ and $N_k(B)$ are disjoint and isomorphic. Let~$B^\star$ be such that $(A^\star, B^\star)$ is one of these pairs, and let
  \begin{equation*}
    \mathcal{E} := \{ I_T(A^\star) \subseteq N_{k-1}(A^\star) \} \cap \{ I_T(B^\star) \subseteq N_{k-1}(B^\star) \}.
  \end{equation*}
  By \Cref{lemma:zooming_in_Tnk} and a union bound, $\Pr[\mathcal{E}] \geq 1-o(1)$.
  Given that the initial states of all nodes are i.i.d., by \Cref{lemma:lower-bound-lemma}, the configurations on~$A^\star$ and~$B^\star$ at time~$T$ are i.i.d.\ conditioned on~$\mathcal{E}$. In particular, we get that
  \begin{align*}
    \Pr[A^\star \text{ has a leader at time } T \mid \mathcal{E} ] &= \Pr[B^\star \text{ has a leader at time } T \mid \mathcal{E} ] \\
    &\geq p^\star - o(1) \geq \frac{1}{11} - o(1),
  \end{align*}
  and by independence, it follows that
  \begin{equation*}
    \Pr[A^\star \text{ and } B^\star \text{ both have a leader at time } T \mid \mathcal{E} ]\geq \frac{1}{11^2} - o(1).
  \end{equation*}
  As a consequence, the algorithm has failed to stabilise at time~$T$ with constant probability.
  \Cref{thm:lower_bound_leader_election} follows by observing that~$\Exp[T] \in \Theta(kn)$.
\end{proof}
\section{Conclusions}

In this paper, we developed population protocols that can be combined to solve several fundamental problems on trees: tree orientation, leader election, and exact majority. These protocols are specifically designed for bounded-degree trees, on which they achieve optimal stabilisation time while requiring only a constant number of states. However, they still work on any tree, and the space and time complexity then grow with the maximum degree $\Delta$. %

As a key building block, we introduced a new population protocol for 2-hop colouring on \emph{general} graphs.
We showed that  on bounded-degree graphs the protocol outperforms the state-of-the-art~\cite{Sudo2018_Loosely,Kanaya2024_Almost}, both in terms of time and space complexity.
Indeed, on bounded-degree graphs, the colouring protocol is asymptotically space- and time-optimal.

All of our algorithms are self-stabilising, with the exception of the leader election protocol, as self-stabilising leader election is known to be impossible on trees.
The exact majority protocol is also self-stabilising in the sense that it always reaches consensus from any configuration, and outputs the correct majority if there exists a strict majority of $\atype$ or $\btype$ tokens in the initial configuration.
Moreover, our protocols require knowledge only of~$\Delta$; in particular, they are uniform on bounded-degree graphs and trees.

A major consequence of our results is that there is no space-time trade-off for leader election in the class of bounded-degree trees~\cite{angluin_selfstabilising_2008}. This distinguishes bounded-degree trees from complete graphs, in which both leader election and exact majority exhibit space-time trade-offs in terms of the size $n$ of the population~\cite{doty_stable_2018,alistarh2017time,alistarh2018space,sudo2019logarithmic,berenbrink_optimal_2020,berenbrink2021time,doty2022time}. Moreover, our work shows that on paths, constant-state exact majority protocols can achieve near-optimal time complexity of $O(n^2 \log n)$ steps.

\subparagraph{Open problems.}
We conclude by raising a few interesting open problems for future work:
\begin{enumerate}
  \item Are there constant-state protocols that are time-optimal in \emph{all} trees?
  \item Are there time-optimal tree orientation protocols that use fewer states for large $\Delta$?
  \item Is there a constant-state protocol for exact majority that runs in $O(D n \log n)$ time on trees?
\end{enumerate}

\subparagraph{AI Disclosure: } We used Codex (GPT 5.4) and Claude (Opus 4.8) to assist with proof reading the manuscript and suggesting proof strategies for some lemmas. The table in the proof of Lemma 33 was generated using GPT 5.4. All results were written by the authors. The authors verified the correctness and originality of all content including references.

\bibliography{references}

\appendix
\newpage

\section{Appendix: Probabilistic tools}\label{apx:tools}

We will use the following simple version of Chernoff's bound, in multiplicative form.
\begin{theorem} [Chernoff's bound] \label{thm:chernoff_bound}
  Let~$X^{(1)},\ldots,X^{(n)}$ be independent binary random variables, let~$X = \sum_{i=1}^n X^{(i)}$ and $\mu = \Exp[X]$. Then for every~$\varepsilon \in (0,1)$,
  \begin{equation*}
    \Pr[X < (1-\varepsilon) \mu] \leq \exp \pa{-\varepsilon^2 \cdot \frac{\mu}{2}}.
  \end{equation*}
\end{theorem}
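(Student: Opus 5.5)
This is the standard multiplicative Chernoff lower-tail bound. I would prove it with the exponential moment method applied to $-X$.

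\textbf{Step 1: exponentiate and apply Markov.} Fix $\lambda>0$. Since $x\mapsto e^{-\lambda x}$ is decreasing, Markov's inequality gives
\[
  \Pr[X < (1-\varepsilon)\mu] \le \Pr\big[e^{-\lambda X} \ge e^{-\lambda(1-\varepsilon)\mu}\big] \le e^{\lambda(1-\varepsilon)\mu}\,\Exp[e^{-\lambda X}].
\]

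\textbf{Step 2: bound the moment generating function.} Independence lets $\Exp[e^{-\lambda X}]$ factor as $\prod_i \Exp[e^{-\lambda X^{(i)}}]$. Write $p_i=\Pr[X^{(i)}=1]$, so that $\mu=\sum_i p_i$. Each factor equals $1-p_i(1-e^{-\lambda})$. Using $1+y\le e^{y}$, each factor is at most $\exp(-p_i(1-e^{-\lambda}))$, and so
\[
  \Exp[e^{-\lambda X}] \le \exp\big(-\mu(1-e^{-\lambda})\big).
\]

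\textbf{Step 3: optimise over $\lambda$.} The choice $\lambda=-\ln(1-\varepsilon)>0$ is valid since $\varepsilon\in(0,1)$. It gives $1-e^{-\lambda}=\varepsilon$, and combining Steps 1 and 2 yields
\[
  \Pr[X<(1-\varepsilon)\mu] \le \left(\frac{e^{-\varepsilon}}{(1-\varepsilon)^{1-\varepsilon}}\right)^{\mu}.
\]

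\textbf{Step 4: the calculus inequality.} It remains to show $(1-\varepsilon)\ln(1-\varepsilon) \ge -\varepsilon+\varepsilon^2/2$ on $(0,1)$. This is the only step needing care. Define $h(\varepsilon)=(1-\varepsilon)\ln(1-\varepsilon)+\varepsilon-\varepsilon^2/2$. Then $h(0)=0$ and $h'(\varepsilon)=-\ln(1-\varepsilon)-\varepsilon$. We have $h'\ge 0$ because $-\ln(1-\varepsilon)=\sum_{j\ge 1}\varepsilon^j/j\ge\varepsilon$. Hence $h\ge 0$ on $(0,1)$.

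Finally, exponentiating the inequality $-\varepsilon-(1-\varepsilon)\ln(1-\varepsilon)\le -\varepsilon^2/2$ and raising to the power $\mu$ gives the claimed bound $\exp(-\varepsilon^2\mu/2)$.
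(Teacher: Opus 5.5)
Your proof is correct: the Markov step, the factorisation using $1+y\le e^{y}$, the choice $\lambda=-\ln(1-\varepsilon)$, and the calculus inequality via $h'(\varepsilon)=-\ln(1-\varepsilon)-\varepsilon\ge 0$ all check out, including the trivial case $\mu=0$. The paper gives no proof of this statement and simply quotes it as a standard tool; your exponential-moment argument is the usual textbook derivation (e.g.\ as in \cite{mitzenmacher_probability_2017}), so there is no differing route to compare.
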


Janson~\cite[Theorem 2.1 and Theorem 3.1]{janson2018tail} gave the following tail bounds for sums of geometric random variables.

\begin{theorem}\label{lemma:sum-of-geometric}
  Let $p_1, \ldots, p_k \in (0,1]$ and $X = Y_1 + \ldots + Y_k$ be a sum of independent geometric random variables with $Y_i \sim \Geom(p_i)$. Define $p = \min \{ p_i : 1 \le i \le k \}$ and $c(\lambda) = \lambda - 1 - \ln \lambda$. Then
  \begin{enumerate}%
    \item $\Pr[ X \ge \lambda \cdot \Exp[X]] \le \exp\left(- p \cdot \Exp[X] \cdot c(\lambda)\right)$ for any $\lambda \ge 1$, and

    \item  $\Pr[ X \le \lambda \cdot \Exp[X]] \le \exp\left(- p \cdot \Exp[X] \cdot c(\lambda)\right)$ for any $0 < \lambda \le 1$.
  \end{enumerate}
\end{theorem}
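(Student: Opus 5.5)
The statement to prove is \Cref{lemma:sum-of-geometric}, the two-sided tail bound for a sum $X=Y_1+\dots+Y_k$ of independent geometric variables. I would use the Chernoff method. Concretely, I would bound the moment generating function of each $Y_i$ via that of $\Geom(p)$ with $p=\min_i p_i$, after normalising by the means.

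For the upper tail ($\lambda\ge 1$), fix $t\in(0,p)$. For $Y\sim\Geom(q)$ we have $\Exp[e^{tY}]=\frac{qe^t}{1-(1-q)e^t}$ whenever $(1-q)e^t<1$. The first step is a per-variable inequality of the form
\[
\ln\Exp[e^{tY_i}] \le -\Exp[Y_i]\cdot \frac{p}{1}\ln\!\Bigl(1-\tfrac{t}{p}\Bigr)\cdot\frac{1}{1}.
\]
Equivalently, $\Exp[e^{sY_i}]\le (1-s\Exp[Y_i])^{-1}$-type bounds hold, and, more usefully, $\ln\Exp[e^{tY_i}]\le \frac{\mu_i}{\mu_i}\cdots$. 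The way I would obtain it is to note that $\Geom(q)$ is stochastically dominated by $1+\lceil \mathrm{Exp}(q')\rceil$-type variables. It is cleaner, though, to show directly that $q\mapsto \frac{1}{\Exp[Y]}\ln\Exp[e^{tY}] = q\ln\frac{qe^t}{1-(1-q)e^t}$ is maximised over $q\ge p$ at $q=p$, and bounded by $-p\ln(1-t/p)\cdot(1/p)$ after using $e^t\le 1/(1-t)$.

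Summing over $i$ gives $\ln\Exp[e^{tX}]\le -\mu\ln(1-t/p)\cdot p\cdot\frac1p$, where $\mu=\Exp[X]$. Markov's inequality then yields $\Pr[X\ge\lambda\mu]\le \exp(-t\lambda\mu-\mu p\ln(1-t/p)/p\cdots)$. Optimising over $t$, with $t/p=1-1/\lambda$, produces exactly $\exp(-p\mu(\lambda-1-\ln\lambda))$.

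The lower tail ($\lambda\le1$) is handled symmetrically with $t<0$. Here $\Exp[e^{tY_i}]$ is always finite. The needed per-variable bound is $\ln\Exp[e^{-sY_i}]\le -\mu_i p\ln(1+s/p)$ for $s>0$. Optimising at $1+s/p=1/\lambda$ again gives $c(\lambda)$.

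The main obstacle is the monotonicity-in-$q$ comparison: the normalised cumulant generating function is controlled uniformly by the one for the smallest success probability $p$ (equivalently, by an exponential variable of rate $p$, scaled per unit mean). I would try first to verify it by writing $Y=\lceil E/\!-\!\ln(1-q)\rceil$ with $E$ exponential, bounding $Y\le$ a suitably scaled exponential plus its mean shift. If that is awkward, I would fall back to the direct calculus check that $f(q)=q\ln\bigl(qe^t/(1-(1-q)e^t)\bigr)$ is nonincreasing in $q$ on $[p,1]$. Since this is a known result of Janson, it would also be acceptable to cite \cite{janson2018tail} for this step.
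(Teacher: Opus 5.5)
The paper does not prove this statement; it imports it from Janson~\cite{janson2018tail} (Theorems 2.1 and 3.1). Your Chernoff argument is essentially the proof given there, and it is correct:
\begin{itemize}
\item bound $\Exp[e^{tY_i}] = p_i/(e^{-t}-1+p_i) \le (1-t/p_i)^{-1}$ using $e^{-t}\ge 1-t$;
\item compare to rate $p$ via $-\ln(1-t/p_i)\le (p/p_i)\bigl(-\ln(1-t/p)\bigr)$, which is the chord inequality for the convex function $-\ln(1-x)$ and is the same fact as your monotonicity of $q\ln\Exp[e^{tY}]$ in $q$;
\item handle the lower tail the same way, using $e^{s}\ge 1+s$ and concavity of $\ln(1+x)$.
\end{itemize}

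Keep the factor $p$ straight. The correct per-unit-mean bound is $-p\ln(1-t/p)$, which gives $\ln\Exp[e^{tX}]\le -p\,\Exp[X]\ln(1-t/p)$, as in your first display. Your later expressions $-p\ln(1-t/p)\cdot(1/p)$ and $-\Exp[X]\ln(1-t/p)\cdot p\cdot\frac{1}{p}$ drop this factor, and with them the choice $t=p(1-1/\lambda)$ would not yield $\exp(-p\,\Exp[X]\,c(\lambda))$.
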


Finally, we recall a simple version of a classical drift theorem.
\begin{theorem}[From Theorem 2.5 in~\cite{kotzing_theory_2024}] \label{thm:multiplicative_drift}
  Let~$(X_t)_{t \geq 0}$ be an integrable process taking values in~$\{0,\ldots,n\}$, and let~$T = \inf\{t \geq 0, X_t = 0\}$. Assume that there is~$\delta > 0$ such that for all~$t < T$,
  \begin{equation*}
    \Exp[X_t - X_{t+1} \mid X_t] \geq \delta \, X_t.
  \end{equation*}
  Then~$T \in O(\delta^{-1} \, \log n)$ with high probability.
\end{theorem}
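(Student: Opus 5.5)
The statement to prove is the multiplicative drift theorem (\Cref{thm:multiplicative_drift}): a process on $\{0,\ldots,n\}$ whose expected one-step decrease is at least $\delta X_t$ hits zero within $O(\delta^{-1}\log n)$ steps with high probability. I would use the standard supermartingale-type argument on the expectation, followed by Markov's inequality. Two conventions are needed. First, the process is frozen at $0$ after time $T$, which costs nothing since $T$ is the hitting time. Second, the drift hypothesis is read as holding conditioned on the whole history $\mathcal{F}_t$; this is how it is used for the block process $X_k=|C_{km}|$ earlier. The key quantity is $\Exp[X_t]$, which I will show decays geometrically.

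\textbf{Step 1 (geometric decay).} On the event $\{t<T\}$ we have $X_t \ge 1$, and the hypothesis gives $\Exp[X_{t+1}\mid \mathcal{F}_t]\le (1-\delta)X_t$. On $\{t\ge T\}$ we have $X_t=X_{t+1}=0$, so the same inequality holds trivially. Hence $\Exp[X_{t+1}\mid\mathcal{F}_t]\le(1-\delta)X_t$ for all $t$. Taking expectations and iterating gives
\[
  \Exp[X_t]\le (1-\delta)^t\,\Exp[X_0]\le n\,e^{-\delta t}.
\]
Note that $\delta\le 1$ automatically, since $X_{t+1}\ge 0$.

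\textbf{Step 2 (tail bound).} The process is integer-valued, so $\{T>t\}=\{X_t\ge 1\}$. Markov's inequality then gives $\Pr[T>t]\le \Exp[X_t]\le n e^{-\delta t}$. Choosing $t=(c+1)\delta^{-1}\ln n$ yields $\Pr[T>t]\le n^{-c}$ for any constant $c>0$. This is exactly $T\in O(\delta^{-1}\log n)$ with high probability.

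The only delicate point is the conditioning: whether $\Exp[\,\cdot\mid X_t]$ or $\Exp[\,\cdot\mid\mathcal{F}_t]$ is meant. If only conditioning on $X_t$ is available, Step 1 still works. The tower rule over $X_t$ gives $\Exp[X_{t+1}]=\Exp\bigl[\Exp[X_{t+1}\mid X_t]\bigr]\le(1-\delta)\Exp[X_t]$. This uses only the stated conditional form together with the fact that the bound holds trivially when $X_t=0$. So the argument goes through with no Markov assumption, and I expect no real obstacle beyond stating this convention explicitly.
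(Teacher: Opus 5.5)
Your main argument is correct: reading the drift condition relative to the history $\mathcal{F}_t$, stopping at $T$, deriving $\Exp[X_t]\le(1-\delta)^t\Exp[X_0]\le ne^{-\delta t}$, and applying Markov's inequality to $\{T>t\}=\{X_t\ge1\}$ is the standard proof of the tail form of the multiplicative drift theorem. The paper gives no proof of its own; it only cites K\"otzing's book, so there is nothing further to compare.

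One correction concerns your final paragraph. If the hypothesis really conditions only on $X_t$, stopping at $T$ is \emph{not} free. Your tower-rule argument is valid only when $0$ is already absorbing for the original process, because then $\{t<T\}=\{X_t\ge1\}$ is $\sigma(X_t)$-measurable and $\Exp[X_{t+1}\mid X_t=0]=0$.

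Without absorption, the literal $X_t$-conditioned statement is false. Let $\epsilon=1/(H+2)$. With probability $\epsilon$ the path equals $n$ for $t\le H$ and $0$ afterwards. For each $s\in\{0,\ldots,H\}$, with probability $\epsilon$ the path equals $n$ at time $s$ and $0$ at all other times. For $t\le H$, the event $\{X_t=n\}$ consists of the long path and the spike at $s=t$, so $\Exp[X_{t+1}\mid X_t=n]\le n/2$. Hence the hypothesis holds with $\delta=1/2$ wherever $t<T$. Yet $\Pr[T>H]\ge 1/(H+2)$, which is not polynomially small for $H=\Theta(\log n)$. Stopping the spikes at $0$ does not rescue this: it would instead destroy the $X_t$-conditional drift of the long path.

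In the paper's application this does not matter. Once $|C_t|=0$ there are no stamp conflicts, hence no recolouring steps, so $C_t$ stays empty by \Cref{lemma:no_recolouring}. Thus $0$ is absorbing and the two readings coincide.
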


Throughout, we implicitly use the following observation that tail bounds on the stabilisation time yield upper bounds on the expected stabilisation time for our protocols.
Let $S \subseteq \Lambda^n$ be a set of configurations.
We say that a protocol is $S$-closed if any execution $(x_t)_{t \ge 0}$ of the protocol on an input instance $(G,f)$ satisfies
\[
  x_0 \in S \implies x_t \in S \text{ for all } t \ge 0.
\]
That is, any execution started in $S$ stays in $S$.

\begin{lemma}\label{lemma:whp-to-exp}
  Let $F \colon \mathbb{N} \to \mathbb{N}$ be an increasing function and $k>1$ be a constant such that for any $x \in S$ the stabilisation time of the protocol satisfies
  \[
    \Pr[T(x) > F(n)] \le 1/n^{k}
  \]
  for all sufficiently large $n$.
  If the protocol is $S$-closed, then for any $x \in S$,
  \(
    \E[T(x)] \in O(F(n)).
  \)
\end{lemma}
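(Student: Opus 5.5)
The plan is to turn the tail bound into an expectation bound by restarting in blocks of length $F(n)$, using $S$-closedness to make each block a fresh trial. Partition time into consecutive blocks $B_j = [jF(n), (j+1)F(n))$ for $j \ge 0$. Fix $x \in S$. For each $j \ge 1$, let $A_j$ be the event that the execution has not stabilised by time $jF(n)$.

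First step: bound $\Pr[A_{j+1} \mid A_j, x_{jF(n)}]$. By $S$-closedness, $x_{jF(n)} \in S$. The schedule and local randomness after time $jF(n)$ are independent of the past, so the Markov property applies: conditioned on $x_{jF(n)} = y$, the remaining execution is distributed as a fresh execution started from $y \in S$. If the process has not stabilised by $jF(n)$ but the fresh execution from $y$ stabilises within $F(n)$ further steps, then the original execution stabilises by $(j+1)F(n)$. Stability is a property of the current configuration only, since it is defined via reachability, so the stabilisation time of the continuation is exactly $T(y)$ shifted by $jF(n)$. Hence
\[
  \Pr[A_{j+1} \mid A_j] \le \sup_{y \in S} \Pr[T(y) > F(n)] \le n^{-k} \le \tfrac12
\]
for all sufficiently large $n$. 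Iterating gives $\Pr[T(x) > jF(n)] \le n^{-kj}$.

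Second step: sum the tail. This gives
\[
  \E[T(x)] \le \sum_{j \ge 0} F(n)\Pr[T(x) > jF(n)] \le F(n)\sum_{j\ge0} n^{-kj} = O(F(n)).
\]
For small $n$ the claim is only asymptotic, so a finite set of $n$ can be ignored. Alternatively, the hypothesis is assumed for all sufficiently large $n$, so this requires no separate treatment.

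The only delicate point is the restart argument. It needs two facts: that being stable at time $t$ depends only on $x_t$, and that the future randomness is independent of $\mathcal{F}_{jF(n)}$. With these, the conditional probability is justified via the strong Markov property at the deterministic times $jF(n)$. That is where closedness is essential: without it, $x_{jF(n)}$ could leave $S$, where no tail bound holds.
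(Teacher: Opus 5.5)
Your proof is correct and matches the paper's argument. Both use $S$-closedness and the Markov property at the deterministic times $jF(n)$ to get $\Pr[T(x) > (j+1)F(n) \mid T(x) > jF(n)] \le n^{-k}$, then iterate to $\Pr[T(x) > jF(n)] \le n^{-kj}$ and sum the tail in blocks of length $F(n)$ to obtain $\E[T(x)] \in O(F(n))$.
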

\begin{proof}
  Fix $x \in S$, $F=F(n)$, and $p=1/n^k$. Because the protocol is $S$-closed, using the Markov property of the execution, we get for every $i \ge 0$ that
  \[
    \Pr[T(x) > (i+1)F \mid T(x) > iF]  \le p.
  \]
  Hence, by an easy induction, we get for each $i \ge 0$ that
  \[
    \Pr[T(x) > iF] \le p^i
  \]
  As the stabilisation time is a non-negative integer-valued random variable,
  we have
  \begin{align*}
    \E[T(x)]
    &= \sum_{t \ge 0} \Pr[T(x) > t] \\
    &\le \sum_{i \ge 0} \sum_{t=iF}^{(i+1)F-1} \Pr[T(x) > iF]
    \le F \sum_{i \ge 0} p^i \le F \cdot \frac{1}{1-p} \in O(F(n)). \qedhere
  \end{align*}
\end{proof}

\end{document}